\def\borderColor{blue!60}
\def\scale{0.6}
\def\nodeDist{1.4cm}
\tikzstyle{internal} = [draw, fill, shape=circle]
\tikzstyle{external} = [shape=circle]
\tikzstyle{square}   = [draw, fill, rectangle]
\tikzstyle{triangle} = [draw, fill, regular polygon, regular polygon sides=3, inner sep=3pt]
\tikzstyle{pentagon} = [draw, fill, regular polygon, regular polygon sides=5, inner sep=2pt, minimum size=14pt]
\tikzset{every fit/.append style=text badly centered}
\tikzset{>=latex} 
\newcommand{\tp}[1]{{\left( #1 \right)}}
\newcommand{\sqtp}[1]{{\left[ #1 \right]}}
\newcommand{\cmid}{\,:\,}
\def\*#1{\mathbf{#1}}
\def\+#1{\mathcal{#1}}
\def\-#1{\mathrm{#1}}
\def\=#1{\mathbb{#1}}
\newcommand{\abs}[1]{\left\vert#1\right\vert}
\newcommand{\floor}[1]{\lfloor#1\rfloor}
\newcommand{\set}[1]{\left\{#1\right\}}
\newcommand{\eps}{\varepsilon}
\newcommand{\arity}{\operatorname{arity}}
\newcommand{\trans}[4]{\ensuremath{\left[\begin{smallmatrix} #1 & #2 \\ #3 & #4 \end{smallmatrix}\right]}}
\newcommand{\transpose}[1]{#1^\texttt{T}}
\newcommand{\Holant}{\operatorname{Holant}}
\newcommand{\holant}[2]{\ensuremath{\Holant\left(#1\mid #2\right)}}
\newcommand{\ol}{\overline}
\newtheorem{theorem}{Theorem}
\newtheorem{lemma}[theorem]{Lemma}
\newtheorem{proposition}[theorem]{Proposition}
\newtheorem{definition}[theorem]{Definition}
\newtheorem*{remark}{Remark}
\crefname{theorem}{Theorem}{Theorems}
\crefname{observation}{Observation}{Observations}
\crefname{claim}{Claim}{Claims}
\crefname{condition}{Condition}{Conditions}
\crefname{algorithm}{Algorithm}{Algorithms}
\crefname{property}{Property}{Properties}
\crefname{example}{Example}{Examples}
\crefname{fact}{Fact}{Facts}
\crefname{lemma}{Lemma}{Lemmas}
\crefname{corollary}{Corollary}{Corollaries}
\crefname{definition}{Definition}{Definitions}
\crefname{remark}{Remark}{Remarks}
\crefname{proposition}{Proposition}{Propositions}
\crefname{equation}{equation}{equations}
\crefname{enumi}{Case}{Case}
\def\prob#1#2#3{\goodbreak\begin{list}{}{\labelwidth\z@ \itemindent-\leftmargin
      \itemsep\z@  \topsep6\p@\@plus6\p@
      \let\makelabel\descriptionlabel}
  \item[\textbf{Name}]#1
  \item[\textbf{Instance}]#2
  \item[\textbf{Output}]#3
  \end{list}}
\newcommand{\defeq}{:=}
\newcommand{\ind}[2]{\#\mathtt{Ind}\tp{#1,#2}}
\title{Zeros of Holant problems: locations and algorithms}
\author{Heng Guo}
\address[Heng Guo]{School of Informatics, University of Edinburgh, Informatics Forum, Edinburgh, EH8 9AB, United Kingdom.}
\email{hguo@inf.ed.ac.uk}
\author{Chao Liao}
\address[Chao Liao]{Department of Computer Science and Engineering, Shanghai
Jiao Tong University, No.800 Dongchuan Road, Minhang District, Shanghai, China.}
\email{chao.liao.95@gmail.com}
\author{Pinyan Lu}
\address[Pinyan Lu]{ITCS, Shanghai University of Finance and Economics, No.100
Wudong Road, Yangpu District, Shanghai, China.}
\email{lu.pinyan@mail.shufe.edu.cn}
\author{Chihao Zhang}
\address[Chihao Zhang]{ITCSC, The Chinese University of Hong Kong,
  Sha Tin, N.T., Hong Kong, China.}
\email{chihao.zhang@gmail.com}
\begin{document}

\begin{abstract}
  We present fully polynomial-time (deterministic or randomised) approximation schemes for Holant problems,
  defined by a non-negative constraint function satisfying a generalised second order recurrence modulo a couple of exceptional cases.
  As a consequence, any non-negative Holant problem on cubic graphs has an efficient approximation algorithm unless the problem is equivalent to approximately counting perfect matchings,
  a central open problem in the area.
  This is in sharp contrast to the computational phase transition shown by 2-state spin systems on cubic graphs.
  Our main technique is the recently established connection between zeros of graph polynomials and approximate counting.
  We also use the ``winding'' technique to deduce the second result on cubic graphs.
\end{abstract}

\maketitle

\section{Introduction}

Great progress has been made recently in the classification of counting problems.
One major achievement is the full dichotomy for counting constraint satisfaction problems (CSPs) \cite{Bul13,DR13}, even with complex weights \cite{CC17}.
However, such a classification is for exact counting, and for approximation,
even to move beyond some rather modest model seems quite difficult.

Holant problems \cite{CLX11} are a framework of expressing counting problems motivated by Valiant's holographic algorithms \cite{Val08}.
The ``Holant'' is a partition function on graphs where edges are variables and vertices are constraint functions.
The benefit of this choice is the ability to express problems like perfect matchings,
which are provably not expressible in certain CSP-like vertex models \cite{FLS07,DGLRS12,Sch13}.
We parameterise Holant problems by the set of constraint functions that can be put on vertices.
Similar to the success of classifying counting CSPs,
exact classifications have been obtained for Holant problems defined by any set of complex-weighted symmetric Boolean functions \cite{CGW16},
and various progresses have been made to go beyond \cite{CLX18,LW18,Bac18}.

In this paper, we make progress towards understanding the complexity of approximating symmetric Boolean Holant problems with non-negative weights.
Let $G=(V,E)$ be a graph,
$\pi:V\rightarrow \+F$ be an assignment from the set of vertices $V$ to a set of functions $\+F$,
and $f_v=\pi(v)$ is the constraint function $\{0,1\}^{\deg(v)}\rightarrow\=C$ associated with the vertex $v$.
The ``Holant'' is defined as follows:
\begin{align} \label{eqn:Holant-def}
  Z(G;\pi)\defeq \sum_{\sigma\in\set{0,1}^E} \prod_{v\in V} f_v (\sigma|_{E(v)}),
\end{align}
where ${E(v)}$ is the set of adjacent edges of $v$,
and $\sigma|_{E(v)}$ is the restriction of $\sigma$ on $E(v)$.
We use the shorthand $Z(G)$ or $Z$ when $G$ and $\pi$ are clear from the context.

%

We call a Boolean constraint function $f$ \emph{symmetric},
if $f(\*x)$ depends only on the hamming weight $\abs{\*x}$ and is invariant under permutations of the indices.
For a symmetric $f$ of arity $d$,
we associate it with a \emph{signature} $[f_0,f_1,\dots,f_d]$,
where $f_i=f(\*x)$ if $\abs{\*x}=i$.
We may use the term ``constraint function'' and ``signature'' interchangeably.
For example, if $f$ is the ``exact-one'' function, namely $f=[0,1,0,\dots,0]$,
then $Z(G)$ counts the number of perfect matchings in $G$;
and if $f$ is the Boolean OR function, namely $f=[0,1,1,\dots,1]$,
then $Z(G)$ counts the number of edge covers in $G$.

We focus on a fairly expressive family of symmetric functions satisfying generalised second-order recurrences.
More precisely, we say $f=[f_0,f_1,\dots,f_{d}]$ satisfies a generalised second-order recurrence,
if there exist real constants $(a,b,c)\neq (0,0,0)$ such that $a f_k + b f_{k+1} + c f_{k+2} = 0 $ for all $0\le k\le d-2$.
Denote by $\Holant(f)$ the computational problem of evaluating $Z(G)$ where every vertex is associated with the signature $f$.
In particular, the input to $\Holant(f)$ must be $d$-regular, where $d$ is the arity of $f$.
Our main theorem is the following.

\begin{theorem}\label{thm:main}
  Let $f=[f_0,f_1,\dots,f_{d}]$ be a symmetric constraint function of arity $d\ge 3$ satisfying generalised second-order recurrences,
  and $f_i\ge 0$ for all $0\le i\le d$.
  There is a fully polynomial-time (deterministic or randomised) approximation algorithm for $\Holant(f)$,
  unless, up to a non-zero factor, $f$ or its reversal is in one of the following form:
  \begin{itemize}
    \item $[0,\lambda\sin\frac{\pi}{d},\lambda^{2}\sin\frac{2\pi}{d},\dots,\lambda^{i}\sin\frac{i\pi}{d},\dots,0]$ for some $\lambda>0$;
    \item $[0,1,0,\lambda,0,\dots,0,\lambda^{\frac{d-2}{2}},0]$ if $d$ is even, or $[0,1,0,\lambda,0,\dots,0,\lambda^{\frac{d-1}{2}}]$ if $d$ is odd, for some $0\le\lambda < 1$.
  \end{itemize}
  Moreover, in the latter case, approximating $\Holant(f)$ is equivalent to approximately counting perfect matchings in general graphs.
\end{theorem}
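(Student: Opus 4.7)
The plan is to combine a structural classification of signatures satisfying a generalised second-order recurrence with the Barvinok--Patel--Regts interpolation framework for the algorithmic half, and to use holographic reductions for the ``moreover'' clause. The characteristic equation $cx^{2}+bx+a=0$ of the recurrence has roots $r_{1},r_{2}$, so up to a non-zero scalar $f_{i}=\alpha r_{1}^{i}+\beta r_{2}^{i}$ (or a repeated-root analogue). Together with reversal $f\mapsto[f_{d},\ldots,f_{0}]$ this reduces the problem to a short list of canonical families. For each non-exceptional family I would identify a holographic transformation sending $f$ to a form amenable to interpolation; the two exceptional families in the statement are precisely those where no such transformation exists.

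For the tractable families I would introduce a complex parameter $z$ and write $Z(G;z)$ as a polynomial in $z$ of degree polynomial in $\abs{V}$, with $z=0$ a trivial base case and $z=1$ recovering $\Holant(f)$. The Patel--Regts machinery then computes $\log Z(G;1)$ up to additive error $\eps$ in time polynomial in $\abs{V}$ and $1/\eps$, provided (a) the low-order Taylor coefficients of $\log Z(G;z)$ at $z=0$ are efficiently computable via connected-subgraph expansions, and (b) $Z(G;z)$ has no zeros in an open neighbourhood of a curve in $\=C$ joining $0$ to $1$. Part (a) is routine because the underlying graph is $d$-regular with $d$ fixed.

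Part (b), proving zero-freeness, is the main technical hurdle and uses the ``winding'' technique alluded to in the abstract. I would track $\arg Z(G;z)$ as $z$ traverses the curve, bounding the total variation by aggregating vertex-by-vertex contributions; the two-term structure $f_{i}=\alpha r_{1}^{i}+\beta r_{2}^{i}$ makes each vertex factor controllable by elementary complex-analytic estimates on $r_{1},r_{2}$. I expect the argument to fail precisely when $\abs{r_{1}}=\abs{r_{2}}$: in this degenerate regime $f_{i}$ is confined to a fixed circle (complex conjugate roots) or to a parity-supported pattern (real roots with $r_{1}=-r_{2}$), and in either case zeros of $Z$ can accumulate near the target. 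Complex equal-modulus roots correspond to the sine signatures, since $\lambda^{i}\sin(i\pi/d)$ arises from $r_{1,2}=\lambda e^{\pm\pi\sqrt{-1}/d}$; real equal-modulus roots of opposite signs yield the parity-supported family $[0,1,0,\lambda,0,\ldots]$, from $r_{1,2}=\pm\sqrt{\lambda}$.

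For the ``moreover'' clause I would give mutual approximation-preserving reductions between $\Holant([0,1,0,\lambda,0,\ldots])$ and counting perfect matchings in general graphs. One direction builds the weighted arity-$d$ matching signature from cubic perfect-matching gadgets, reducing counting perfect matchings to $\Holant(f)$. The reverse direction uses a holographic transformation to absorb the geometric weight $\lambda$ into edge weights of an auxiliary graph whose weighted perfect matchings compute the original $\Holant$ value; the hypothesis $0\le\lambda<1$ keeps all weights non-negative and preserves approximation.
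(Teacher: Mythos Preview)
Your high-level plan—classify via the characteristic roots, use Barvinok--Patel--Regts for the algorithmic part, and handle the exceptions by reduction to \PM—is broadly right, but the heart of the argument, your step (b), is misconceived in two ways.

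First, the ``winding'' technique is not a complex-analytic device for proving zero-freeness of $Z(G;z)$; in this paper it refers to McQuillan's Markov-chain method, used only for the cubic case in \Cref{thm:main-cubic}. Zero-freeness here is proved by Ruelle's method: one shows that the \emph{local} polynomial $P_f(z)=\sum_i\binom{d}{i}f_iz^i$ is $H_\eps$-stable (possibly after an orthogonal holographic transformation), and then the Grace--Szeg\H{o}--Walsh theorem plus Asano contraction propagate this to the global $P_G(z)$. Your proposed scheme of tracking $\arg Z(G;z)$ by aggregating vertex contributions does not obviously work, because $Z(G;z)$ is not a product over vertices; the whole point of the Asano machinery is to bridge the local/global gap, and you have no substitute for it.

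Second, your diagnosis of where zero-freeness fails is wrong. You claim the obstruction is $\abs{r_1}=\abs{r_2}$, but in fact the complex-conjugate case $b^2-4ac<0$ (which always has $\abs{r_1}=\abs{r_2}$) is the \emph{easiest}: $P_f$ is $H_\eps$-stable with no transformation needed, provided $f_0>0$. The genuine exceptional cases arise when $f_0=f_d=0$ (the sine signature appears here, not because of equal moduli per se but because both endpoints vanish), together with one odd-arity parity-supported subcase from the $b^2-4ac>0$ analysis. You also omit a case that is not covered by zero-freeness at all: when $f=(p,q)^{\otimes d}+(s,t)^{\otimes d}$ with $p^2+q^2=s^2+t^2$, a holographic transformation sends $\Holant(f)$ to a ferromagnetic Ising model, and one must invoke the Jerrum--Sinclair FPRAS rather than Barvinok. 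Without this branch the classification is incomplete.
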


Understanding the complexity of signatures with second-order recurrences is the cornerstone in the exact counting classifications.
Since satisfying first-order recurrences implies that the function is degenerate,
these constraint functions are the first class satisfying a recurrence relation with non-trivial complexity.
More concretely, this family includes many interesting special cases:
\begin{itemize}
  \item Matchings and perfect matchings. The functions are $[1,1,0,0,\dots,0]$ and $[0,1,0,0,\dots,0]$, respectively, with $(a,b,c)=(0,0,1)$.
  \item Even subgraphs, whose functions are $[1,0,1,0,\dots]$ with $(a,b,c)=(1,0,-1)$.
    More generally, we may put weights on even and odd degree vertices, and the functions become $[x,y,x,y,\dots]$ for some $x,y\ge 0$.
  \item Edge covers, whose functions are $[0,1,1,\dots,1]$ with $(a,b,c)=(0,1,-1)$.
  \item Fibonacci gates, namely $f$ of arity $d$ such that $f_{i+2}= b f_{i+1}+f_i$ for all $i\le d-2$.
  \item All ternary symmetric functions.
\end{itemize}
For approximate counting, polynomial-time approximation algorithms are known only for a few special cases,
such as counting matchings \cite{JS89}, weighted even subgraphs \cite{JS93}, counting edge covers \cite{LLL14}, and a weighted version of Fibonacci gates \cite{LWZ14}.
However, neither the Markov chain Monte Carlo approach \cite{JS89,JS93} (including its ``winding'' extension \cite{McQ13,HLZ16}),
nor the correlation decay approach \cite{LWZ14,LLL14},
appears to be powerful enough to handle all functions in this family.
On the other hand, \Cref{thm:main} covers almost all problems in this family,
and most of the exceptional cases are equivalent to counting perfect matchings, a central open problem in approximate counting (see, for example, \cite{DJM17, SVW18} on partial progresses and barriers).
Efficient approximate counting for perfect matchings is only known in the bipartite case \cite{JSV04}.

As a consequence, we have an algorithm for all non-negative Boolean Holant on cubic graphs, unless the problem is equivalent to counting perfect matchings.

\begin{theorem}  \label{thm:main-cubic}
  Let $f=[f_0,f_1,f_2,f_3]$ be a symmetric constraint function of arity $3$ where $f_i\ge0$ for all $0\le i\le 3$ .
  $\Holant(f)$ has a fully polynomial-time (deterministic or randomised) approximation algorithm,
  unless $f$ or its reversal, up to a non-zero factor, is $[0,1,0,\lambda]$ for some $0\le\lambda < 1$.
  In the exceptional case, approximating $\Holant(f)$ is equivalent to approximately counting perfect matchings in general graphs.
\end{theorem}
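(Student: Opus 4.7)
The proof is by specialising Theorem~\ref{thm:main} and treating the residual exceptional forms by hand. First, every ternary symmetric $f = [f_0, f_1, f_2, f_3]$ satisfies a generalised second-order recurrence, since the two defining equations $a f_0 + b f_1 + c f_2 = 0$ and $a f_1 + b f_2 + c f_3 = 0$ form a homogeneous $2 \times 3$ linear system in $(a, b, c)$ that always admits a non-trivial solution. Thus Theorem~\ref{thm:main} applies to every non-negative ternary symmetric $f$.

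Specialising the exceptional list at $d = 3$ gives, up to a positive scalar and possibly reversal, only two families. The first bullet of Theorem~\ref{thm:main}, using $\sin(\pi/3) = \sin(2\pi/3) = \sqrt{3}/2$, becomes a positive multiple of $[0, \lambda, \lambda^2, 0]$, hence equivalent to $[0, 1, \lambda, 0]$ for some $\lambda > 0$; this family is closed under reversal, since the reversal of $[0, 1, \lambda, 0]$ is $[0, \lambda, 1, 0]$, which after rescaling is $[0, 1, 1/\lambda, 0]$. The second bullet becomes $[0, 1, 0, \lambda]$ for some $0 \le \lambda < 1$, which is exactly the exception stated in Theorem~\ref{thm:main-cubic}; the asserted equivalence with approximate counting of perfect matchings in general graphs is part of Theorem~\ref{thm:main} itself. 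Hence it only remains to build a fully polynomial approximation scheme for $[0, 1, \lambda, 0]$ on cubic graphs.

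For this family, a contributing edge assignment selects a subset $S \subseteq E$ such that every vertex of $G$ has $\deg_S(v) \in \{1, 2\}$, with total weight $\lambda^{k(S)}$, where $k(S) = \#\{v : \deg_S(v) = 2\}$; equivalently, $S$ is a spanning subgraph whose connected components are simple paths and cycles. The plan is to sample from this distribution via a Markov chain on spanning path-and-cycle subgraphs, using the ``winding'' technique of \cite{McQ13,HLZ16} to traverse moves that would otherwise violate the degree constraint at a vertex by carrying a temporary defect along a path until it can be legally absorbed. Standard self-reducibility on cubic graphs then upgrades approximate sampling to an FPRAS.

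The main obstacle will be establishing rapid mixing of the winding chain against the $\lambda^{k(S)}$ weighting — in particular, designing canonical paths that cope simultaneously with cyclic components (which carry no winding) and with path endpoints, and whose congestion stays polynomially bounded uniformly as $\lambda$ ranges over $(0, \infty)$. Once rapid mixing is proved, Theorem~\ref{thm:main-cubic} follows by combining Theorem~\ref{thm:main} with this dedicated chain.
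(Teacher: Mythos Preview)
Your reduction to Theorem~\ref{thm:main} and your identification of the residual case $[0,1,\lambda,0]$ (equivalently $[0,a,b,0]$ with $a,b>0$) match the paper exactly, as does the decision to handle that case via the winding machinery of \cite{McQ13,HLZ16}.

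Where you diverge is in framing the remaining work. You flag rapid mixing as the ``main obstacle'' and speak of designing canonical paths against the $\lambda^{k(S)}$ weighting. The paper does not do any of that from scratch: McQuillan's result in \cite{McQ13} already guarantees rapid mixing of the winding chain whenever the signature is \emph{windable}, and \cite{HLZ16} gives an algebraic criterion for windability that is a routine finite check for $[0,a,b,0]$. The only other ingredient is that $[0,a,b,0]$ with $a,b>0$ is \emph{strictly terraced} (again a direct verification), which bounds the ratio of near-consistent to consistent weight polynomially and lets the sampler be converted into an FPRAS via self-reduction. So there is no new congestion or canonical-path argument to invent; the proof is complete once those two off-the-shelf conditions are verified.
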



We remark that \Cref{thm:main-cubic} is in sharp contrast to the computational phase transition phenomenon,
as demonstrated by 2-state spin systems on cubic graphs \cite{GJP03,SS14,GSV16,LLY13,SST14}, even without external fields.
For spin systems, a clear and sharp threshold between approximable and hard to approximate is established,
whereas for Holant problems on cubic graphs, there seems to be no such transition.

\subsection{Our techniques}

Our algorithm combines a number of ingredients:
\begin{itemize}
  \item Barvinok's approach to approximate partition functions via Taylor expansions \cite{Bar16}.
    This approach was sharpened by Patel and Regts \cite{PR17a} to run within polynomial-time.
  \item In order to apply Barvinok's approach, one has to have some rather precise knowledge of the zeros of the corresponding graph polynomials.
    For Holant problems, Ruelle \cite{Rue71,Rue99a,Rue99b} has developed a systematic approach of bounding the zeros of the partition function
    via analysing polynomials associated locally with vertices, under the disguise of ``graph-counting polynomials''.
  \item On top of combining Ruelle's and Barvinok's approaches,
    we also employ holographic transformations a la Valiant \cite{Val08},
    which is necessary to cover all cases in \Cref{thm:main}.
\end{itemize}

Although none of these ingredients is new, the main contribution of our work is to combine them together (with reworks if necessary),
and a thorough analysis of the zeros of functions with generalised second-order recurrence.
To be more specific, for a symmetric signature $f=[f_0,\dots,f_d]$ of arity $d$,
define the ``local'' polynomial of $f$ as
\begin{align}\label{eqn:local-f-poly}
  P_f(z)\defeq\sum_{i=0}^d\binom{d}{i}f_i\cdot z^i.
\end{align}
We may also view $P_f(z)$ as the polynomial for a single vertex with $d$ dangling edges.
For some $\eps>0$, we call a polynomial $P(z)$ \emph{$H_{\eps}$-stable}, if $P(z)\neq 0$ as long as $\Re z\ge -\eps$.
Then one of our main technical tool (see \Cref{thm:roots-Holant}) says that if $P_f(z)$ is $H_{\eps}$-stable for some $\eps>0$,
then a polynomial-time approximation algorithm exists for $\Holant(f)$.

In general, to apply Barvinok's method to approximate counting,
one needs to deal with the zeros of the whole partition function,
which is usually not an easy task.
Previous applications appeal to some powerful tools such as the Lee-Yang theorem from statistical physics \cite{LSS17},
or the resolution of a long-standing conjecture \cite{PR17c}.
In contrast, our approach requires only analysing some low degree polynomials and is much easier to apply.

To go from \Cref{thm:main} to \Cref{thm:main-cubic},
we also need to deal with cases not covered by \Cref{thm:main}, which cannot be solved using zeros of Holant problems.
These exceptional cases are handled by the ``winding'' technique \cite{McQ13,HLZ16} with Markov chains.

\section{Ruelle's method on zeros of Holant problems}

Ruelle \cite{Rue71,Rue99a, Rue99b} (building upon the ``Asano contraction'' \cite{Asa70})
has developed a systematic approach to bound zeros of the so-called ``graph-counting polynomials''.
As we will see later, these polynomials coincide with unweighted Holant problems.

With a little abuse of notation, let $Z(G;f)$ be the partition function defined by \eqref{eqn:Holant-def} where $f_v=f$ for all $v\in V$,
and stratify $Z(G;f)$ by the number of edges chosen as follows:
\begin{align}\label{eqn:Holant-k-def}
  Z_k(G;f)\defeq\sum_{\sigma\in\set{0,1}^E\text{ and }\abs{\sigma}=k}\;\prod_{v\in V} f(\sigma|_{E(v)}).
\end{align}
Define $Z_k(G;\pi)$ similarly, and again, $G$ and $f$ may be omitted when they are clear from the context.

Let $\abs{E}=m$.
Then $Z=Z(G;f)$ can be rewritten as the evaluation of the polynomial
\begin{align}  \label{eqn:Holant-polynomial}
  P_G(z)\defeq\sum_{i=0}^m Z_i \cdot z^i
\end{align}
at $z=1$.
Namely $Z=P_G(1)$.
When $f$ is a symmetric $0/1$ function,
then \eqref{eqn:Holant-polynomial} is the same as the ``graph-counting'' polynomial defined by Ruelle \cite{Rue99b}.



Ruelle's method has two main ingredients.
Firstly we want to relate zeros of a univariate polynomial with those of its polar form.
For a polynomial $P(z)=\sum_{i=0}^{d'} a_i z^i$ of degree $d'\le d$,
its $d$th polar form with variables $\*z=(z_1,\dots,z_d)$ is
\begin{align*}
  \widehat{P}(\*z)\defeq\sum_{I\subseteq[d]}\frac{a_{\abs{I}}}{\binom{d}{\abs{I}}}z_I,
\end{align*}
where $a_i =0$ if $i> d'$, $[d]$ denotes $\{1,2,\dots,d\}$,
and for an index set $I$, $z_I=\prod_{i\in I}z_i$.
For example, the polar form of $P_f(z)$ (recall \eqref{eqn:local-f-poly}) is,
\begin{align*}
  \widehat{P}_f(\*z)\defeq\sum_{I\subseteq[d]}f_{\abs{I}}z_I.
\end{align*}
The polar form $\widehat{P}(\*z)$ is the unique multi-linear symmetric polynomial of degree at most $d'$ such that
$\widehat{P}(z,z,\dots,z) = P(z)$.
When $d'<d$, we view $P(z)$ as a degenerate case,
and it has zeros at $\infty$ with multiplicity $d-d'$.

Let $H$ be a region in $\=C$.
We say a polynomial $P(\*z)$ in $d\ge 1$ variables is \emph{$H$-stable} if $P(\*z)\neq 0$ whenever $z_1,\dots,z_d\in H$.
We will be particularly interested in $H_{\eps}$-stableness where $H_{\eps}$ is the half-plane:
\begin{align*}
  H_{\eps}=\set{z\in\=C\mid\Re z \ge -\eps},
\end{align*}
and $\eps>0$.
The Grace-Szeg\H{o}-Walsh coincidence theorem \cite{Gra02,Sze22,Wal22} has the following immediate consequence.
\begin{proposition}  \label{prop:Grace-Walsh-Szego}
  A univariate polynomial $P(z)$ is $H_{\eps}$-stable if and only if its polar form $\widehat{P}(\*z)$ is $H_{\eps}$-stable.
\end{proposition}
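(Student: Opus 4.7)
The plan is to deduce this proposition as a direct application of the Grace–Szegő–Walsh coincidence theorem, which states that if $Q(\*z)$ is a symmetric multi-affine polynomial in $d$ variables and $C\subseteq\=C$ is a circular region (a half-plane, a disk, or the complement of a disk) that is either convex or contains all $d$ inputs with degree equal to $d$, then for any $z_1,\dots,z_d\in C$ there exists $w\in C$ such that $Q(z_1,\dots,z_d)=Q(w,w,\dots,w)$. The region $H_{\eps}$ is a closed half-plane, hence a convex circular region, so the hypothesis of the coincidence theorem is met when we take $Q=\widehat{P}$.

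For the easy direction, suppose $\widehat{P}(\*z)$ is $H_{\eps}$-stable. For any $z\in H_{\eps}$, the tuple $(z,z,\dots,z)$ lies in $H_{\eps}^d$, so
\begin{align*}
  P(z)=\widehat{P}(z,z,\dots,z)\neq 0,
\end{align*}
which gives $H_{\eps}$-stability of $P$.

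For the other direction, I would argue by contrapositive. Suppose $\widehat{P}(\*z)$ fails to be $H_{\eps}$-stable, so that there exist $z_1,\dots,z_d\in H_{\eps}$ with $\widehat{P}(z_1,\dots,z_d)=0$. Since $\widehat{P}$ is multi-affine and symmetric in $d$ variables, applying the Grace–Szegő–Walsh theorem to $\widehat{P}$ with circular region $H_{\eps}$ yields a single point $w\in H_{\eps}$ such that
\begin{align*}
  P(w)=\widehat{P}(w,w,\dots,w)=\widehat{P}(z_1,\dots,z_d)=0,
\end{align*}
contradicting the $H_{\eps}$-stability of $P$.

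The only subtle point to address is the degenerate case $d'<d$, where $P(z)$ has ``zeros at $\infty$'' with multiplicity $d-d'$. Here the coefficients $a_i$ for $i>d'$ are set to zero when forming $\widehat{P}$, so $\widehat{P}$ is a genuine multi-affine polynomial in $d$ variables and the theorem applies without change; the zeros at infinity are harmless because $\infty\notin H_{\eps}$, so they affect neither the $H_{\eps}$-stability of $P$ nor the application of the coincidence theorem. I do not expect any real obstacle in writing this out: the whole content of the proposition is packaged inside the classical coincidence theorem, and the only task is to verify that $H_{\eps}$ is an admissible circular region, which is immediate.
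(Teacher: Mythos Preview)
Your proposal is correct and matches the paper's approach exactly: the paper does not give a proof but simply states that the proposition is an immediate consequence of the Grace--Szeg\H{o}--Walsh coincidence theorem, which is precisely what you have unpacked. Your remark that $H_{\eps}$ is a convex circular region (so the degenerate case $d'<d$ causes no trouble) is the only point that needed checking, and you have handled it correctly.
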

\Cref{prop:Grace-Walsh-Szego} actually applies to an arbitrary circular domain in $\=C$,
but we will only need it for~$H_{\eps}$.

The next ingredient is the Asano contraction \cite{Asa70}, as extended by Ruelle \cite{Rue71}.

\begin{proposition}  \label{prop:Asano}
  Let $K_1$ and $K_2$ be closed subsets of the complex plane $\=C$, which do not contain $0$.
  If the complex polynomial
  \begin{align*}
    \alpha+\beta z_1+\gamma z_2+\delta z_1z_2
  \end{align*}
  does not vanish for any $z_1\not\in K_1$ and $z_2\not\in K_2$,
  then
  \begin{align*}
    \alpha+\delta z
  \end{align*}
  does not vanish for any $z\not\in -K_1\cdot K_2$.
\end{proposition}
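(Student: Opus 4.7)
The plan is to argue by contradiction. Suppose $\alpha + \delta z_0 = 0$ for some $z_0 \notin -K_1 K_2$; I aim to derive a contradiction. I begin with two quick reductions. Plugging $(z_1, z_2) = (0, 0)$, which lies in $K_1^c \times K_2^c$ because $0 \notin K_1 \cup K_2$, into the hypothesis forces $\alpha = P(0, 0) \ne 0$. If $\delta = 0$, then $\alpha + \delta z \equiv \alpha \ne 0$ has no zero and the conclusion is vacuous, so I may assume $\alpha \delta \ne 0$. Hence $z_0 = -\alpha/\delta$ and the task becomes to show $-z_0 \in K_1 K_2$. A first, purely set-theoretic observation (using only $0 \notin K_2$) rules out the possibility $-z_0 \notin K_1^c K_2$: otherwise, for every $v \in K_2$ the element $u := -z_0/v$ would be forced into $K_1$, placing $-z_0 = uv$ in $K_1 K_2$ immediately. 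Hence under the contradiction assumption we must have $-z_0 \in K_1^c K_2$ and, by symmetry, $-z_0 \in K_1 K_2^c$.

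Next I bring in the polynomial structure. Because $P$ has bi-degree $(1,1)$, the zero locus $\Gamma = \{P = 0\} \subset \widehat{\mathbb{C}}^2$ is the graph of the Möbius transformation $\phi(z_1) = -(\alpha + \beta z_1)/(\gamma + \delta z_1)$, and the hypothesis becomes $\phi(K_1^c) \subseteq K_2$. Substituting $z_2 = -z_0/z_1$ into $P$ and clearing $z_1$ produces the quadratic $\beta z_1^2 - 2\delta z_0 z_1 - \gamma z_0 = 0$, whose two roots $z_1^{(1)}, z_1^{(2)}$, paired with $z_2^{(i)} = -z_0/z_1^{(i)}$, constitute the fiber over $-z_0$ of the product map $\mu : \Gamma \to \widehat{\mathbb{C}}$. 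The hypothesis combined with the contradiction assumption forces each fiber point to lie in exactly one of $K_1^c \times K_2$ (``Green'') or $K_1 \times K_2^c$ (``Blue''), never in $K_1 \times K_2$.

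To close the argument I exploit the Möbius self-map $\Psi(v) := \phi(-z_0/v)$ of $\widehat{\mathbb{C}}$: starting from a witness $v_1 \in K_2$ with $-z_0/v_1 \in K_1^c$ (which exists by the first paragraph), the hypothesis gives $\Psi(v_1) = \phi(-z_0/v_1) \in K_2$, and the contradiction assumption forces $-z_0/\Psi(v_1) \in K_1^c$ as well (otherwise $(-z_0/\Psi(v_1), \Psi(v_1)) \in K_1 \times K_2$ would place $-z_0$ in $K_1 K_2$). Iterating, the entire $\Psi$-orbit of $v_1$ stays in $K_2$. The fixed points of $\Psi$ coincide with the $z_2$-coordinates of the two fiber points, so the attractive fixed point lies in $\overline{K_2} = K_2$; a symmetric orbit argument on the $z_1$-side gives the attractive fixed point of the analogous map $\widetilde{\Psi}(u) := \phi^{-1}(-z_0/u)$ inside $K_1$. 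A careful matching of these two fixed points (using the conjugacy $f \circ \widetilde{\Psi} \circ f = \Psi^{-1}$ with $f(x) = -z_0/x$) finally produces a pair $(u, v) \in K_1 \times K_2$ with $uv = -z_0$, contradicting the main assumption. The \textbf{main obstacle} I anticipate is handling the parabolic and elliptic cases of $\Psi$, where the simple ``attractive fixed point'' argument does not apply directly and one must argue either through closure of the entire orbit (dense on a circle in the elliptic case) or through a limiting/perturbation procedure exploiting the explicit Vieta relations $z_1^{(1)} + z_1^{(2)} = 2 \delta z_0 / \beta$ and $z_1^{(1)} z_1^{(2)} = -\gamma z_0 / \beta$ between the fiber roots.
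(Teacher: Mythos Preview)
The paper does not actually prove this proposition; it simply refers the reader to Ruelle's original article~[Rue71] for ``a very elegant proof''. So there is no in-paper argument to compare with. That said, your proposed route has a structural flaw that makes it unworkable, not merely incomplete.

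Your plan hinges on the M\"obius map $\Psi(v)=\phi(-z_0/v)$ being (generically) hyperbolic, so that forward iteration drags an orbit in $K_2$ to an attractive fixed point, with the parabolic/elliptic situations left as a residual obstacle. But $\Psi$ is \emph{never} hyperbolic here. Writing it out,
\[
\Psi(v)=\frac{-\alpha v+\beta z_0}{\gamma v-\delta z_0},
\qquad\text{with representing matrix }
\begin{pmatrix}-\alpha & \beta z_0\\ \gamma & -\delta z_0\end{pmatrix},
\]
the trace is $-(\alpha+\delta z_0)=0$, precisely because of the contradiction hypothesis $\alpha+\delta z_0=0$. Hence (whenever $\alpha\delta\neq\beta\gamma$, the only nontrivial case) $\Psi$ is a M\"obius \emph{involution}: both fixed points have multiplier $-1$, and every non-fixed orbit is a $2$-cycle. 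There is no convergence to an attractive fixed point, so the mechanism you rely on to place a fixed point inside $K_2$ never applies. The same computation shows $\widetilde\Psi$ is an involution as well. Your ``main obstacle'' is therefore not an edge case but the only case.

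Even if one pretended $\Psi$ were hyperbolic, the matching step via the conjugacy $f\circ\widetilde\Psi\circ f=\Psi^{-1}$ would fail: this relation says $\widetilde\Psi$ is conjugate to $\Psi^{-1}$, so the attractive fixed point of $\widetilde\Psi$ corresponds under $f$ to the attractive fixed point of $\Psi^{-1}$, which is the \emph{repelling} fixed point of $\Psi$. Thus you would obtain $v^*=z_2^{(1)}\in K_2$ and $u^*=z_1^{(2)}\in K_1$ coming from \emph{different} fibre points, and $u^*v^*\neq -z_0$ in general. The ``careful matching'' does not produce a pair with product $-z_0$.

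Ruelle's actual proof is much shorter and does not use iteration at all; it exploits the trace-zero coincidence above algebraically rather than dynamically. If you want to repair your approach, the useful consequence of $\mathrm{tr}\,\Psi=0$ is that $\phi\circ h=h\circ\phi^{-1}$ and $\phi^{-1}\circ h=h\circ\phi$, so each of these involutions preserves \emph{both} $K_i$ and its complement simultaneously; this rigidity is what one should leverage, not orbit convergence.
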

We refer interested readers to \cite{Rue71} for a very elegant proof of \cref{prop:Asano}.

Let the $\delta$-strip of $[0,1]$ be
\begin{align*}
  \set{z\in\=C\mid \abs{\Im z}\le\delta\text{ and }-\delta\le\Re z\le 1+\delta}.
\end{align*}
\begin{lemma}  \label{lem:HH}
  For any $\eps>0$, the complement of $-H_{\eps}\cdot H_{\eps}$ contains a $\delta$-strip of $[0,1]$ for some $\delta>0$ depending only on $\eps$.
\end{lemma}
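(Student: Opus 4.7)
The plan is to give a direct planar-geometry computation of the set $-H_\eps\cdot H_\eps$ (as it enters the Asano product) and show it stays quantitatively to the left of the segment $[0,1]$. Points in this set have the form $w=-z_1 z_2$ where each $z_j$ lies in the closed bad half-plane $\{\Re z\le -\eps\}$ dual to $H_\eps$-stability. Writing $z_j=-u_j+i y_j$ with $u_j\ge\eps$ and expanding gives
\[
  \Re w \;=\; y_1 y_2 - u_1 u_2, \qquad \Im w \;=\; u_1 y_2 + u_2 y_1,
\]
so the question reduces to: when $|\Im w|$ is small, how far to the right can $\Re w$ reach?

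The one nontrivial estimate is $(a+b)^2\ge 4ab$ applied to $a=u_1 y_2$ and $b=u_2 y_1$. This gives $(\Im w)^2\ge 4 u_1 u_2\, y_1 y_2$, hence
\[
  y_1 y_2 \;\le\; \frac{(\Im w)^2}{4 u_1 u_2}
\]
(trivially valid when $y_1 y_2\le 0$). Substituting into $\Re w = y_1 y_2 - u_1 u_2$ and noting that the right-hand side of
\[
  \Re w \;\le\; \frac{(\Im w)^2}{4 u_1 u_2} - u_1 u_2
\]
is strictly decreasing in $u_1 u_2\ge\eps^2$ yields the uniform bound $\Re w \le (\Im w)^2/(4\eps^2) - \eps^2$.

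To finish, I would pick $\delta$ so that this bound is strictly below $-\delta$ whenever $|\Im w|\le\delta$; this amounts to $\delta^2 + 4\eps^2\delta < 4\eps^4$, and any $\delta\le\eps^2/2$ works (check: $\delta^2 + 4\eps^2\delta = 9\eps^4/4 < 4\eps^4$). For such a $\delta$, which depends only on $\eps$, every $w\in -H_\eps\cdot H_\eps$ with $|\Im w|\le\delta$ satisfies $\Re w<-\delta$, so $w$ cannot lie in the $\delta$-strip $\{|\Im z|\le\delta,\;-\delta\le\Re z\le 1+\delta\}$, proving the claim. I do not foresee any real obstacle: the entire argument is a one-line application of $(a-b)^2\ge 0$ after unpacking the Minkowski product, and the only subtle point is the sign convention, since the ``$H_\eps$'' appearing in $-H_\eps\cdot H_\eps$ must be read as the closed bad half-plane $\{\Re z\le -\eps\}$ fed into Asano's formula $-K_1\cdot K_2$, rather than the $H_\eps$-stability region $\{\Re z\ge -\eps\}$.
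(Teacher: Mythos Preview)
Your proof is correct and lands on exactly the same boundary description and the same choice $\delta=\eps^2/2$ as the paper; you also correctly spotted that the ``$H_\eps$'' in the product must be read as the bad half-plane $K=\{\Re z\le -\eps\}$. The only difference is cosmetic: the paper parametrises $K$ in polar coordinates and maximises $\cos\theta_1\cos\theta_2$ with $\theta_1+\theta_2$ fixed (optimum at $\theta_1=\theta_2$), while you work in Cartesian coordinates and apply $(a+b)^2\ge 4ab$ --- these are the same optimisation and both yield the parabolic bound $\Re w\le (\Im w)^2/(4\eps^2)-\eps^2$.
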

\begin{proof}
  An equivalent way to write $H_{\eps}$ is
  \begin{align*}
    H_{\eps}=\set{\rho e^{i\theta}\mid\rho\ge-\frac{\eps}{\cos \theta}\text{ for $\theta\in\tp{\frac{\pi}{2},\frac{3\pi}{2}}$}}.
  \end{align*}
  Thus,
  \begin{align*}
    -H_{\eps}\cdot H_{\eps} & = \set{\rho_1\rho_2 e^{i(\theta_1+\theta_2+\pi)}\mid
    \rho_i\ge-\frac{\eps}{\cos \theta_i}\text{ for $\theta_i\in\tp{\frac{\pi}{2},\frac{3\pi}{2}}$ and $i\in\{1,2\}$}} \\
    & = \set{\rho e^{i(\theta_1+\theta_2+\pi)}\mid
    \rho\ge \frac{\eps^2}{\cos \theta_1\cos\theta_2}\text{ for $\theta_1,\theta_2\in\tp{\frac{\pi}{2},\frac{3\pi}{2}}$}}\\
    & = \set{\rho e^{i\theta}\mid
    \rho\ge \frac{\eps^2}{\tp{\cos \frac{\theta-\pi}{2}}^2}\text{ for $\theta\in\tp{0,2\pi}$}}\\
    & = \set{\rho e^{i\theta}\mid
    \rho\ge \frac{2\eps^2}{1-\cos \theta}\text{ for $\theta\in\tp{0,2\pi}$}},
  \end{align*}
  where the third line is because $\cos\theta_1\cos\theta_2$ is maximised at $\theta_1=\theta_2$ if their sum is fixed.
  It is easy to check that $\delta=\eps^2/2$ suffices for the claim.
\end{proof}

Now we are ready to state a very useful lemma.
\begin{lemma}  \label{thm:Ruelle}
  Let $f$ be a symmetric signature of arity $\Delta$.
  If the local polynomial $P_f(z)$ is $H_{\eps}$-stable for some $\eps>0$,
  then the global polynomial $P_G(z)$ has no zero in the $\delta$-strip of $[0,1]$,
  where $\delta$ is a constant depending only on $\eps$.
\end{lemma}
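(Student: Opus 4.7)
The plan is to lift the hypothesised $H_\eps$-stability from $P_f$ up to $P_G$ via the standard Grace--Szeg\H{o}--Walsh~/~Asano construction: introduce a variable for every vertex-edge incidence, pass to the polar form and multiply across vertices to obtain a multilinear polynomial that is $H_\eps$-stable by inspection, then iteratively identify the two half-edges of each edge using \Cref{prop:Asano}, producing a polynomial in edge variables whose diagonal evaluation is $P_G$.

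First, by \Cref{prop:Grace-Walsh-Szego} the polar form $\widehat{P}_f(\*z) = \sum_{I \subseteq [\Delta]} f_{|I|} z_I$ is $H_\eps$-stable in $\Delta$ variables indexed by the half-edges at a vertex. Introduce independent variables $z_{v,e}$ for every vertex-edge incidence and form
\begin{align*}
  F\bigl(\{z_{v,e}\}\bigr) \defeq \prod_{v \in V} \widehat{P}_f\bigl(\{z_{v,e}\}_{e \in E(v)}\bigr).
\end{align*}
Since distinct factors involve disjoint variable sets, $F$ is $H_\eps$-stable in all of its variables.

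Next, I would enumerate the edges of $G$ and contract them one at a time. For an as-yet-uncontracted edge $e=(u,v)$, the current polynomial has degree at most one in each of $z_{u,e}$ and $z_{v,e}$, hence writes as $\alpha_e + \beta_e z_{u,e} + \gamma_e z_{v,e} + \delta_e z_{u,e} z_{v,e}$ with coefficients depending on the remaining variables. Taking the closed sets $K_1 = K_2 = \{z \in \=C : \Re z \le -\eps\}$ (which exclude $0$ and whose complement lies inside $H_\eps$), \Cref{prop:Asano} replaces $(z_{u,e}, z_{v,e})$ by a single new variable $z_e$ and collapses the expression to $\alpha_e + \delta_e z_e$, which is non-vanishing for $z_e$ outside $-K_1 \cdot K_2$. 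Iterating over all edges---each step independent of the others because different edges introduce disjoint pairs of half-edge variables---produces, by a routine coefficient expansion in which only the monomials $1$ and $z_{u,e} z_{v,e}$ survive the contraction on edge $e$, the multilinear polynomial
\begin{align*}
  Q\bigl(\{z_e\}_{e \in E}\bigr) \defeq \sum_{\sigma \in \{0,1\}^E} \left(\prod_{v \in V} f(\sigma|_{E(v)})\right) \prod_{e \,:\, \sigma_e = 1} z_e,
\end{align*}
whose diagonal evaluation is $P_G(z)$ by \eqref{eqn:Holant-polynomial}.

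Finally, by \Cref{lem:HH} the complement of $-K_1 \cdot K_2$ contains a $\delta$-strip of $[0,1]$ for some $\delta > 0$ depending only on $\eps$; setting every $z_e = z$ inside this strip yields $P_G(z) = Q(z,\dots,z) \neq 0$, as required. The main pedantic obstacle is the iterated-contraction bookkeeping: at each stage one must check that the partially contracted polynomial remains affine in the next pair of half-edge variables to be contracted and still satisfies the appropriate non-vanishing condition on the product of the relevant domains, both of which follow immediately from the fact that the Asano step on a single edge only involves its own two half-edge variables and hence commutes with the contractions on all other edges.
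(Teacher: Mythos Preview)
Your proposal is correct and follows essentially the same approach as the paper: lift to the polar form via \Cref{prop:Grace-Walsh-Szego}, take the product over vertices, then iteratively apply the Asano contraction of \Cref{prop:Asano} edge-by-edge, and finish with \Cref{lem:HH}. Your write-up is in fact slightly more explicit than the paper's about the intermediate multilinear polynomial $Q$ and the bookkeeping needed to iterate the contraction, but the argument is the same.
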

\begin{proof}
  We construct $G=(V,E)$ as follows.
  Start with a collection of vertices $v\in V$, each with $\Delta$ dangling half-edges $(e_i^v)_{i\in[\Delta]}$.
  Call this graph $G_0$, and connect dangling half-edges $e_i^v$ and $e_j^u$ sequentially for each edge $(u,v)\in E$.
  This gives a sequence of graphs $G_1,\dots,G_{\abs{E}}=G$.
  The polynomial of $G_0$ is $P_{G_0}(z)=\prod_{v\in V} P_{v}(z)$, where $P_v=p_f$,
  and consider the multivariate version $\widehat{P}_{G_0}(\*z)=\prod_{v\in V}\widehat{P}_v(\*z^v)$,
  where $\widehat{P}_v = \widehat{P}_f$ and $\*z^v$ is the local variables corresponding to $v$.
  Since $P_f(z)$ is $H_{\eps}$-stable, by \Cref{prop:Grace-Walsh-Szego}, $\widehat{P}_f(\*z)$ is as well,
  and so is $\widehat{P}_{G_0}(\*z)$.
  Suppose from $G_i$ to $G_{i+1}$, $e^v_i$ is connected with $e^u_j$.
  Then the transformation from $\widehat{P}_{G_i}$ to $\widehat{P}_{G_{i+1}}$ is
  exactly the Asano contraction as in \Cref{prop:Asano} applied to $z^v_i$ and $z^u_j$.
  Let $K$ be the complement of $H_{\eps}$.
  At the end of this procedure we obtain $G$ and the polynomial $\widehat{P}_G(\*z)$ does not vanish on the complement of $-K\cdot K$.
  It implies that the same is true for the univariate $P_G(z)$.
  By \Cref{lem:HH}, the complement of $-K\cdot K$ contains a $\delta$-strip of $[0,1]$,
  and this $\delta$ depends only on~$\eps$.
\end{proof}

We note that it is necessary to have some slack $\eps$ in \Cref{thm:Ruelle}.
One example is counting even subgraphs, namely the constraint $f$ is $[1,0,1,0,\dots]$.
Although all zeros of $P_f$ lie on the imaginary axis,
the zeros of $P_G$ can in fact be dense on the unit circle.
To see this, let $G$ be a cycle of length $n$.
Then $P_G(z)=1+z^n$ as there are only two even subgraphs.
The zeros thereof are dense on the unit circle as $n$ varies.

\Cref{thm:Ruelle} can be easily generalised to a set of functions,
if there is an $\eps>0$ such that all of the local polynomials are $H_{\eps}$-stable.
A univariate polynomial is called \emph{Hurwitz stable} if all of its zeros are in the open left half-plane.
For a fixed $f$, clearly if $P_f(z)$ is Hurwitz stable, then there is some $\eps>0$ such that $P_f(z)$ is $H_{\eps}$-stable.
However, Hurwitz stability is not enough to derive the same conclusion of \Cref{thm:Ruelle} for an infinite set of functions.

%

\section{Barvinok's algorithm}

Our interest in Ruelle's method, \Cref{thm:Ruelle} is due to the algorithmic approach developed by
Barvinok \cite[Section~2]{Bar16}.  It roughly states that if a polynomial $P(z)=\sum_{i=1}^nc_i z^i$
of degree $n$ is zero-free in a strip containing $[0,1]$, then $P(1)$ can be
$(1\pm\eps)$-approximated using $c_0,\dots,c_k$ for some $k=O\tp{\log \frac{n}{\eps}}$.

The basic idea is to truncate the Taylor expansion of $\log P(z)$ at $z=0$.  Let
$g(z)\defeq\log P(z)$ and for $k\ge 0$,
\begin{align*}
  T_k(g)(z)\defeq \sum_{i=0}^k\frac{g^{(i)}(0)}{i!} z^i,
\end{align*}
where $g^{(i)}$ is the $i$-th derivative of $g$.  In other words, $T_k(g)(z)$ is the first $k+1$
terms of the Taylor expansion of $g(z)$ at the origin.  Then \cite[Lemma 2.2.1]{Bar16} states the
following.

\begin{proposition}\label{prop:disk}
  Let $P(z)=\sum_{i=0}^n c_i z^i$ be a polynomial such that for some $\beta>1$, $P(z)$ is zero-free
  in the disk of radius $\beta$ centered at the origin.  Then there exists a constant $C_{\beta}$
  such that for any $0<\eps<1$,
  \begin{align*}
    \abs{\frac{\exp(T_k(g)(1))}{P(1)}-1}\le\eps,
  \end{align*}
  where $k=C_{\beta}\log\frac{n}{\eps}$.
\end{proposition}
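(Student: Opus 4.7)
The plan is to approximate $g(z) \defeq \log P(z)$ by its truncated Taylor polynomial $T_k(g)$ at the origin, and then exponentiate. Since $P$ is zero-free in $\abs{z}<\beta$ and $\beta>1$, in particular $c_0=P(0)\neq 0$, so I can factorise
\[
P(z) = c_0 \prod_{j=1}^n (1 - z/\zeta_j),
\]
where $\zeta_1,\dots,\zeta_n$ are the roots of $P$ (with multiplicity), each satisfying $\abs{\zeta_j}\ge\beta$. After choosing a value for $\log c_0$ and taking the principal branch of each factor, I obtain an analytic branch of $\log P$ on the whole disk $\abs{z}<\beta$, namely
\[
g(z) = \log c_0 + \sum_{j=1}^n \log(1 - z/\zeta_j).
\]
Expanding via $\log(1-w) = -\sum_{i\ge 1} w^i/i$, valid for $\abs{w}<1$, yields the Taylor coefficients of $g$ at $0$ in closed form: $g^{(i)}(0)/i! = -\frac{1}{i}\sum_{j=1}^n \zeta_j^{-i}$ for $i\ge 1$.

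Next I would bound the tail of the Taylor series at $z=1$. Since $\abs{\zeta_j}\ge\beta$,
\[
\abs{g(1) - T_k(g)(1)} \le \sum_{j=1}^n\sum_{i>k}\frac{1}{i\beta^i} \le \frac{n\,\beta^{-k}}{k(\beta-1)}.
\]
I would choose $C_\beta$ large enough in terms of $\beta$ alone (any $C_\beta > 1/\log\beta$ with a small additive slack will work) and set $k = C_\beta \log(n/\eps)$, which forces the right-hand side to be at most $\eps/2$. Using the elementary inequality $\abs{e^x - 1}\le 2\abs{x}$ for $\abs{x}\le 1$, I then convert this additive error on $g$ into a multiplicative error on $P$:
\[
\abs{\frac{\exp(T_k(g)(1))}{P(1)} - 1} = \abs{\exp(T_k(g)(1) - g(1)) - 1} \le \eps.
\]

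The argument is largely routine complex analysis; the only point requiring genuine care is that the formal expression $\log c_0 + \sum_j \log(1-z/\zeta_j)$ really does define a single-valued analytic branch of $\log P$ on $\abs{z}<\beta$. This holds because $P$ is zero-free on the simply connected disk, so any branch of $\log P$ specified by the value $\log c_0$ at $0$ extends uniquely and must coincide with the sum of principal local branches by analytic continuation. The key structural feature driving the result is that zero-freeness in a disk of radius $\beta>1$ forces the geometric decay $\beta^{-k}$ in the tail of the Taylor series, which is precisely what allows the logarithmic truncation depth $k = O(\log(n/\eps))$ to suffice.
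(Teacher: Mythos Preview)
Your argument is correct and is essentially the standard proof of this result. Note, however, that the paper does not supply its own proof here: the proposition is quoted directly from Barvinok's monograph \cite[Lemma~2.2.1]{Bar16}. The factorisation $P(z)=c_0\prod_j(1-z/\zeta_j)$ and the identity $g^{(i)}(0)=-(i-1)!\sum_j\zeta_j^{-i}$ that you use do appear later in the paper (around \eqref{eqn:Taylor-p_i}), but for the separate purpose of computing the Taylor coefficients efficiently via Newton's identities, not to establish the approximation bound itself.

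One small cosmetic point: you write the product over $j=1,\dots,n$, but $n$ is only an upper bound on the degree of $P$ (the statement does not require $c_n\neq 0$). The number of roots is the actual degree $m\le n$; since your tail estimate is monotone in the number of roots, replacing $m$ by $n$ only weakens the bound, so the argument goes through unchanged.
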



This result states that we can approximately evaluate $P(1)$ using the first
$O\tp{\log\frac{n}{\eps}}$ terms of the Taylor expansion of $\log P(x)$ at the origin, when the
polynomial is zero-free in the disk of radius $\beta>1$.  If our polynomial $P_G(x)$ is zero-free in
the $\delta$-strip of $[0,1]$, then we can apply a transformation, \cite[Lemma 2.2.3]{Bar16},
to transform it into a polynomial that is zero-free in the disk of radius $>1$.

The following lemma describe the construction.

\begin{lemma}\label{lem:trans}
  Let $0<\delta<1$ be a constant and
  $\beta=1+\frac{\exp\tp{-\frac{1}{\delta}}}{2-2\exp\tp{-\frac{1}{\delta}}}>1$. There exists a
  polynomial $\phi_{\delta}(z)$ of degree $\exp\tp{O\tp{\frac{1}{\delta}}}$ such that
  \begin{itemize}
  \item [(1)] $\phi_\delta(0)=0$ and $\phi_\delta(1)=1$;
  \item [(2)] for every $z\in\mathbb{C}$ with $\abs{z}\le \beta$, the value $\phi_\delta(z)$ is
    within the $2\delta$-strip of $[0,1]$.
  \end{itemize}
\end{lemma}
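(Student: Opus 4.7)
The plan is to realize $\phi_\delta$ as a renormalized truncation of the analytic function
\begin{align*}
\psi(z)\defeq\frac{\log(1-\alpha z)}{\log(1-\alpha)},\qquad \alpha\defeq 1-\exp\tp{-\tfrac{1}{\delta}},
\end{align*}
which satisfies $\psi(0)=0$, $\psi(1)=1$, and $\log(1-\alpha)=-1/\delta$. A direct computation from the definition of $\beta$ gives $\alpha\beta = 1-\tfrac{1}{2}\exp(-1/\delta)<1$, so $\psi$ is analytic on a neighborhood of the closed disk $\set{|z|\le\beta}$, and has Taylor series $\psi(z)=\delta\sum_{k\ge 1}(\alpha z)^k/k$.

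The next step is to understand the image $\psi(\set{|z|\le\beta})$. I would write $1-\alpha z = r e^{i\theta}$; as $z$ ranges over the closed disk of radius $\beta$, the point $1-\alpha z$ ranges over the closed disk $D(1,\alpha\beta)$, so $r\in[\tfrac{1}{2}\exp(-1/\delta),\,2-\tfrac{1}{2}\exp(-1/\delta)]$ and $|\theta|\le\arcsin(\alpha\beta)<\pi/2$. Dividing $\log(1-\alpha z)=\log r + i\theta$ by $-1/\delta$ yields $\Re\psi(z)\in[-\delta\log 2,\;1+\delta\log 2]$ and $|\Im\psi(z)|\le \tfrac{\pi}{2}\delta$. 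Both bounds are strictly below $2\delta$ once $\delta$ is small, so the analytic image of $\psi$ lies safely inside a $c\delta$-strip of $[0,1]$ with some $c<2$, leaving an $\Omega(\delta)$ margin.

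With that margin in hand I would truncate the Taylor series at degree $N$, setting $\psi_N(z)\defeq \delta\sum_{k=1}^N(\alpha z)^k/k$. The tail bound on $|z|\le\beta$ is
\begin{align*}
|\psi(z)-\psi_N(z)| \;\le\; \delta\sum_{k=N+1}^{\infty}\frac{(\alpha\beta)^k}{k}\;\le\; \frac{\delta\,(\alpha\beta)^{N+1}}{1-\alpha\beta} \;\le\; 2\,\delta\exp\tp{\tfrac{1}{\delta}}\tp{1-\tfrac{1}{2}\exp(-\tfrac{1}{\delta})}^{N+1},
\end{align*}
which, using $1-x\le e^{-x}$, is driven below any target error $\eta$ by taking $N=\Theta(\tfrac{1}{\delta})\exp(1/\delta)\log(1/\eta)=\exp(O(1/\delta))$. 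Finally I renormalize $\phi_\delta(z)\defeq \psi_N(z)/\psi_N(1)$ so that condition (1) holds exactly; since $|\psi_N(1)-1|$ can be made $o(\delta)$, dividing by $\psi_N(1)$ perturbs $\psi_N(z)$ by $o(\delta)$ uniformly on $|z|\le\beta$, keeping the image within a $2\delta$-strip of $[0,1]$ and giving condition (2).

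The main obstacle is bookkeeping the constants so that the compound error --- analytic image width $c\delta$, truncation error, and multiplicative perturbation from normalization --- still fits into $2\delta$. The crucial leverage is that $\log(1-\alpha)=-1/\delta$ exactly, which is what converts $-\delta\cdot\log r$ and $-\delta\cdot\theta$ into $O(\delta)$ deviations from $[0,1]$ despite $\alpha\beta$ being extremely close to $1$; after that the truncation degree can be chosen sufficiently large by a geometric-series argument to absorb the residual slack.
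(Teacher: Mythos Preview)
Your approach is essentially identical to the paper's: the function $\psi(z)=\frac{\log(1-\alpha z)}{\log(1-\alpha)}$ is exactly the paper's $h(z)=\delta\log\frac{1}{1-\alpha z}$ (since $\log(1-\alpha)=-1/\delta$), and the paper likewise bounds $\Re h$ by $[-\delta\log 2,\,1+\delta\log 2]$ and $|\Im h|$ by $\tfrac{\pi}{2}\delta$, truncates the Taylor series to degree $\exp(O(1/\delta))$ with error $\le \delta/10$, and renormalizes by dividing through by the value at $1$. The only cosmetic difference is that the paper writes down an explicit truncation degree $m$ rather than leaving it as an asymptotic.
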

\begin{proof}
  The idea to construct the polynomial $\phi_\delta$ is to start with the function $\log(z)$ (the
  principal logarithm) by noting that the logarithm function maps a circle centered at zero to an
  interval orthogonal to the real axis. We can then scale and shift the function to restrict the
  interval to some desired region. Finally, we construct the polynomial $\phi_\delta$ to approximate
  it.

  To this end, we let $h(z)\defeq \delta \log \frac{1}{1-\alpha z}$ where $\alpha$ is a parameter to
  be set. The condition $h(0)=0$ is automatically satisfied. To satisfy $h(1)=1$, we set
  $\alpha=1-\exp\tp{-\frac{1}{\delta}}$.
  Then $\beta=1+\frac{\exp\tp{-\frac{1}{\delta}}}{2-2\exp\tp{-\frac{1}{\delta}}}=\frac{1+\alpha}{2\alpha}$.
  It is easy to verify that for every $z\in\mathbb{C}$ with $\abs{z}\le \beta$, it holds that
  \[
    -\delta\log 2 \le \Re h(z)\le 1+\delta\log 2,
  \]
  and
  \[
    \abs{\Im h(z)}\le \frac{\pi}{2}\cdot\delta.
  \]
  We use a polynomial, namely the Taylor expansion of $h(z)$ at the origin to approximate
  $h(z)$. For every $k\ge 0$, the first $k$ terms of the Taylor expansion of $h$ at the origin is
  \[
    T_k(h)(z)=\delta \sum_{i=1}^k\frac{\alpha^i}{i}\cdot z^i.
  \]
  Then for
  $m=\frac{\log\tp{10(1+\alpha)}-\log\tp{1-\alpha}}{\log
    2-\log\tp{1+\alpha}}=\exp\tp{O\tp{\frac{1}{\delta}}}$, we have
  \[
    \abs{h(z)-T_k(z)}=\abs{\delta \sum_{i=m+1}^\infty\frac{\alpha^i}{i}\cdot z^i}\le
    \frac{2\delta}{(1-\alpha)(m+1)}\tp{\frac{1+\alpha}{2}}^{m+1}\le \frac{\delta}{10}.
  \]
  In particular, we have
  \[
    \abs{T_m(h)(1)-1}=\abs{T_m(h)(1)-h(1)}\le\frac{\delta}{10}.
  \]
  Finally, we define
  \[
    \phi_\delta(z)=\frac{T_m(h)(z)}{T_m(h)(1)}
  \]
  to force $\phi_\delta(1)=1$. This finishes the construction.
\end{proof}

Therefore, for a polynomial $P(z)$ that is zero-free in the $\delta$-strip of $[0,1]$, we can
use Proposition~\ref{prop:disk} to approximately evaluate
$P_\phi(z)\defeq P(\phi_{\frac{\delta}{2}}(z))$, which is zero-free in the disk of radius $\beta$ at
the origin for the value $\beta$ defined in Lemma~\ref{lem:trans}.
Note that $P(\phi_{\frac{\delta}{2}}(1))=P(1)$.

\begin{proposition}\label{prop:strip}
  Let $P(z)$ be a polynomial of degree $n$ such that for some $\delta>0$, $P(z)$ is zero-free in the
  $\delta$-strip of $[0,1]$. Then there exists a constant $C_\delta$ such that for any
  $0<\eps<1$,
  \[
    \abs{\frac{\exp\tp{T_k\tp{\log P_\phi}(1)}}{P(1)}-1}\le \eps,
  \]
  where $k=C_\delta\log\frac{n}{\eps}$.
\end{proposition}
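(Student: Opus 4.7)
The proof is essentially a direct composition of the two ingredients already in place. The plan is to apply Lemma~\ref{lem:trans} with parameter $\delta/2$ to produce a polynomial $\phi = \phi_{\delta/2}$ of degree $\exp(O(1/\delta))$ satisfying $\phi(0)=0$, $\phi(1)=1$, and mapping the closed disk of radius $\beta>1$ into the $\delta$-strip of $[0,1]$. Set $P_\phi(z) \defeq P(\phi(z))$.

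Next I would verify the hypothesis of \Cref{prop:disk} for $P_\phi$. By the mapping property of $\phi$, every $z$ with $\abs{z}\le\beta$ satisfies $\phi(z)$ lying in the $\delta$-strip of $[0,1]$, where by assumption $P$ has no zero. Hence $P_\phi(z)\ne 0$ on the closed disk of radius $\beta$ centered at the origin. The degree of $P_\phi$ is at most $n\cdot\deg(\phi) = n\cdot\exp(O(1/\delta))$, and importantly $\beta$ depends only on $\delta$.

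Applying \Cref{prop:disk} to $P_\phi$ with the same $\eps$, there is a constant $C_\beta$ (determined by $\delta$) such that
\[
  \abs{\frac{\exp\tp{T_k(\log P_\phi)(1)}}{P_\phi(1)}-1}\le\eps
\]
for $k = C_\beta \log\tp{\deg(P_\phi)/\eps} = C_\beta \log\tp{n \cdot \exp(O(1/\delta))/\eps}$. Since $\exp(O(1/\delta))$ is a constant depending only on $\delta$, this simplifies to $k = C_\delta \log(n/\eps)$ for a suitable $C_\delta$. Finally, $P_\phi(1)=P(\phi(1))=P(1)$ because $\phi(1)=1$, which closes the argument.

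There is no real obstacle here; every step is already prepared by \Cref{lem:trans} and \Cref{prop:disk}. The only thing to be mildly careful about is that the degree blow-up from the composition is absorbed inside the logarithm, so the cost in $k$ is only an additive constant depending on $\delta$, which is harmless because $C_\delta$ is already allowed to depend on $\delta$.
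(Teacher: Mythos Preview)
Your proposal is correct and follows exactly the paper's approach: compose $P$ with $\phi_{\delta/2}$ from \Cref{lem:trans} to obtain $P_\phi$, observe that $P_\phi$ is zero-free on the disk of radius $\beta$, apply \Cref{prop:disk}, and use $\phi(1)=1$ to identify $P_\phi(1)$ with $P(1)$. The paper presents this as a one-line consequence of the preceding discussion rather than as a formal proof, so your write-up is in fact more detailed, including the explicit handling of the degree blow-up being absorbed into $C_\delta$.
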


At last, we show the Taylor expansion $T_k\tp{\log P_\phi}(1)$ can be computed efficiently from the coefficients of $P$.

\begin{proposition}\label{prop:P-phi-x}
  Let $P(z)$ be a polynomial of degree $n$ such that for some constant $\delta>0$, $P(z)$ is zero-free in the
  $\delta$-strip of $[0,1]$.
  For every $0\le k\le n$, assume that that we have oracle access to
  the first $k$ coefficients of $P(z)$, we can compute
  \[
    T_k\tp{\log P_\phi}(1)
  \]
  in time $O(k^2)$.
\end{proposition}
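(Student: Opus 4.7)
The plan is to explicitly compute the truncated power series $P_\phi(z) \bmod z^{k+1}$ from the first $k+1$ coefficients of $P$, and then extract the Taylor coefficients of $\log P_\phi$ via a standard recurrence.

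First I would compute $P_\phi(z) \bmod z^{k+1}$. Write $\phi = \phi_{\delta/2}$; by \Cref{lem:trans}, $\phi$ has constant coefficient $0$ and degree $m = \exp\tp{O\tp{1/\delta}}$, which is $O(1)$ because $\delta$ is a fixed constant. Since $\phi(0) = 0$, we have $\phi(z)^i \equiv 0 \pmod{z^i}$, so if $P(z) = \sum_{i=0}^n c_i z^i$ then
\[
  P_\phi(z) \equiv \sum_{i=0}^k c_i\, \phi(z)^i \pmod{z^{k+1}},
\]
which uses only the first $k+1$ coefficients of $P$. Compute the sequence $\phi(z)^i \bmod z^{k+1}$ for $i = 0, 1, \ldots, k$ iteratively by multiplying by $\phi$; since $\phi$ has constant degree, each multiplication truncated at $z^{k+1}$ costs $O(k)$. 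Accumulating the weighted sum above then gives the coefficients $Q_0, Q_1, \ldots, Q_k$ of $P_\phi \bmod z^{k+1}$ in overall time $O(k^2)$.

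Next I would extract $T_k(\log P_\phi)$ from $Q$. Note $Q_0 = P(\phi(0)) = P(0) \neq 0$, since $0$ lies in the $\delta$-strip of $[0,1]$ where $P$ is zero-free. Let $L(z) = \log P_\phi(z) = \sum_{i\ge 0} L_i z^i$; differentiating yields $L'(z)\, P_\phi(z) = P_\phi'(z)$, and comparing coefficients modulo $z^{k+1}$ gives
\[
  L_0 = \log Q_0, \qquad L_{n+1} = \frac{1}{Q_0}\tp{Q_{n+1} - \frac{1}{n+1}\sum_{i=1}^{n} i\, L_i\, Q_{n+1-i}}, \quad 0 \le n < k.
\]
Applying this recurrence successively produces $L_0, \ldots, L_k$ in time $O(k^2)$. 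Finally, $T_k(\log P_\phi)(1) = \sum_{i=0}^k L_i$ is obtained in $O(k)$ additional time, yielding the claimed $O(k^2)$ bound.

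There is no real obstacle: the argument is essentially formal power series bookkeeping. The two points that must be flagged are (i) that only $c_0, \ldots, c_k$ are needed, which follows from $\phi(0) = 0$, and (ii) that $Q_0 \neq 0$, so the logarithm and the recurrence are well-defined, which follows from the zero-freeness hypothesis on $P$ at the origin.
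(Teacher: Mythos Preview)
Your proof is correct and follows essentially the same approach as the paper: compute the first $k+1$ coefficients of $P_\phi$ from those of $P$ (using $\phi(0)=0$), then recover the Taylor coefficients of $\log P_\phi$ by a quadratic-time recurrence. The only cosmetic difference is that the paper phrases the second step via Newton's identities and the inverse power sums $p_i=\sum_j z_j^{-i}$ of the zeros of $P_\phi$, whereas you derive the recurrence directly from $L'\,P_\phi=P_\phi'$; since $L_i=-p_i/i$, the two recurrences are literally the same identity.
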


Since the degree of $\phi_{\frac{\delta}{2}}(z)$ is $\exp\tp{O\tp{\frac{1}{\delta}}}$, we can write
$P_{\phi}(z)=\sum_{i=1}^mc_i z^i$ where $m=n+C_\delta$ for some constant $C_{\delta}$ depending only on
$\delta$. It is easy to compute the coefficients $c_k$ given the coefficients of $P(z)$ of degree at
most $k$ in $O(k)$ time. Let $g_\phi\defeq \log P_\phi$, we now show how to compute $T_k(g_\phi)$ using
$(c_i)_{i\le k}$.

Let $z_1,\dots,z_m$ be the zeros of a polynomial $P_\phi(z)$ and for $0\le k\le m$, let
$p_k:=\sum_{i=1}^m z_i^{-k}$ be the $k$-th \emph{inverse power sum} of the zeros of $P_\phi(z)$.

Newton's identities state the relation between $(p_k)_k$ and the coefficients $(c_i)_i$.

\begin{proposition}[Newton's Identity]\label{prop:newton}
  For every $1\le k\le m$, it holds that
  \begin{align*}
    k\cdot c_k=-\sum_{i=0}^{k-1} c_i\cdot p_{k-i}
  \end{align*}
\end{proposition}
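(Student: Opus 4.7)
The plan is to derive the identity from the logarithmic derivative of $P_\phi(z)$, viewed as a formal power series around the origin.

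First, I would verify that $c_0 \neq 0$. Since $P(z)$ is zero-free in the $\delta$-strip of $[0,1]$ and $\phi_{\frac{\delta}{2}}(0) = 0$ lies in this strip (as $0 \in [0,1]$), we have $c_0 = P_\phi(0) = P(0) \neq 0$. Consequently none of the roots $z_1, \ldots, z_m$ of $P_\phi$ equals zero, so the inverse power sums $p_k = \sum_{i=1}^m z_i^{-k}$ are all well-defined.

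Next I would factor $P_\phi(z) = c_0 \prod_{i=1}^m (1 - z/z_i)$ over $\mathbb{C}$, take formal logarithms, and expand $\log(1-w) = -\sum_{k \ge 1} w^k/k$ to obtain
\[
  \log P_\phi(z) = \log c_0 - \sum_{k \ge 1} \frac{p_k}{k}\, z^k
\]
as formal power series at the origin. Differentiating both sides and multiplying through by $P_\phi(z)$ gives the key identity
\[
  P_\phi'(z) = -P_\phi(z) \cdot \sum_{k \ge 1} p_k\, z^{k-1}.
\]
Writing $P_\phi'(z) = \sum_{j \ge 1} j c_j z^{j-1}$ on the left and forming the Cauchy product on the right, then comparing coefficients of $z^{k-1}$ for each $1 \le k \le m$, yields exactly
\[
  k \cdot c_k = -\sum_{i=0}^{k-1} c_i \, p_{k-i},
\]
which is the claimed identity.

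I do not anticipate a serious obstacle: this is the classical Newton identity relating coefficients and power sums of roots. The only mild subtlety is justifying the formal logarithm and the existence of the $p_k$, which reduces to checking $c_0 \neq 0$; as noted, this follows immediately from the zero-freeness assumption on $P$ together with $\phi_{\frac{\delta}{2}}(0) = 0$.
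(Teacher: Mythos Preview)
The paper does not prove this proposition at all; it simply states it as the classical Newton identity and then uses it. Your derivation via the logarithmic derivative is correct and is one of the standard proofs of this fact. Amusingly, the paper performs essentially the same computation immediately after stating the proposition (factoring $P_\phi$, taking $\log$, and differentiating) but for a different purpose, namely to express the Taylor coefficients of $\log P_\phi$ at the origin as $-p_i/i$; it just does not package that computation as a proof of the Newton identity itself.
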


Newton's identities essentially provide a way to compute all $p_k$ consecutively using $(c_i)_i$,
and vice versa.  To be specific,
\begin{align}
  p_0 & = m; \notag\\
  p_k & = -c_0^{-1}\cdot\tp{\sum_{i=1}^{k-1}p_i\cdot c_{k-i}+k\cdot c_k} \mbox{ for $1\le k\le m$.} \notag
\end{align}
Therefore, it costs $O(k^2)$ time to compute $p_k$ using above recurrence.

On the other hand, we can write $P_\phi(z)=c_m\prod_{i=1}^m(z-z_i)$.
Recall that $g_\phi(z)=\log P_{\phi}(z)=\log
c_m+\sum_{i=1}^m\log\tp{z-z_i}$.

It is easy to calculate that for any $i\ge 1$,
\begin{align*}
  g_\phi^{(i)}(0) = - (i-1)!\sum_{j=1}^m z_j^{-i} = - (i-1)!p_i.
\end{align*}
Therefore,
\begin{align} \label{eqn:Taylor-p_i}
  T_k(g_\phi)(z)\defeq \log c_0 - \sum_{i=1}^k\frac{p_i}{i} z^i.
\end{align}
This proves \Cref{prop:P-phi-x}.

\subsection{Computing the inverse power sums}

Given \Cref{prop:disk} and \eqref{eqn:Taylor-p_i}, the main task then reduces to compute the first
$k$ inverse power sums $(p_i)_{i\le k}$.  We follow the method of Patel and Regts \cite{PR17a}.

We need some notations first.  Let $\+G$ be a family of all graphs, and $\+G_k$ be all graphs with
at most $k$ vertices.  We call a function $g:\+G\to\mathbb{C}$ a \emph{graph invariant} if
$g(G)=g(H)$ whenever $G\simeq H$.  A graph \emph{polynomial} is a graph invariant
$Q:\+G\rightarrow \=C[z]$, where $\=C[z]$ is the polynomial ring over $\=C$.  We call a graph
invariant $g(\cdot)$ \emph{additive} if for any two graphs $G$ and $H$, it holds that
$g(G\sqcup H)=g(G)+g(H)$, where $G\sqcup H$ is the graph consisting of disjoint copies of $G$ and
$H$.  Similarly, we call it \emph{multiplicative} if for every two graphs $G$ and $H$, it holds that
$g(G\sqcup H)=g(G)\cdot g(H)$.  For graphs $H$ and $G$, we use $\ind{H}{G}$ to denote the number of
\emph{induced} subgraphs of $G$ isomorphic to $H$.  Then $\ind{H}{\cdot}$ is a graph invariant for a
fixed graph $H$.  By convention let $\ind{\emptyset}{G}=1$ for any $G$.

\begin{definition}\label{def:BIGCP}
  Let $Q(G)(z)=\sum_{i=1}^{d(G)}a_i(G) z^{i}$ be a multiplicative graph polynomial of degree $d(G)$
  such that $Q(G)(0)=1$ for any $G$.  We call $Q(\cdot)$ a \emph{bounded induced graph counting
    polynomial (BIGCP)} if there are constants $\alpha,\beta\in\=N$ such that the following holds:
  \begin{itemize}
  \item for every graph $G$, there exist $\lambda_{H,i}\in\=C$ such that
    \begin{align} \label{eqn:BIGCP-induce} a_i(G)=\sum_{H\in\+G_{\alpha i}}\lambda_{H,i}\cdot
      \ind{H}{G};
    \end{align}
  \item for every $H\in\+G_{\alpha i}$, $\lambda_{H,i}$ can be computed in time
    $\exp\tp{\beta \cdot\abs{V(H)}}$, where $V(H)$ is the set of vertices of $H$.
  \end{itemize}
\end{definition}

Patel and Regts \cite[Theorem 3.2]{PR17a} has shown that the inverse power sums can be computed for
BIGCP in single exponential time.

\begin{proposition}\label{prop:BIGCP-count}
  Let $\Delta\in\=N$, $G$ be a graph with
  maximum degree $\Delta$ and $Q(G)(\cdot)$ be a BIGCP.  There is a deterministic $\exp\tp{C\Delta k}$-time algorithm, which computes
  the inverse power sums $(p_i)_{i\le k}$ of $Q(G)(\cdot)$, for some constant $C>0$.
\end{proposition}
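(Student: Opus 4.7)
The plan is to follow the Patel--Regts approach \cite{PR17a}. Newton's identity (\Cref{prop:newton}) already expresses $(p_i)_{i\le k}$ as a recursion in $(a_i(G))_{i\le k}$ evaluable in $O(k^2)$ arithmetic operations, so the task reduces to computing the first $k$ coefficients of $Q(G)$. By the BIGCP definition,
\[
  a_i(G)=\sum_{H\in\+G_{\alpha i}}\lambda_{H,i}\cdot\ind{H}{G},
\]
with each $\lambda_{H,i}$ computable in time $\exp(\beta|V(H)|)$, so everything reduces to evaluating $\ind{H}{G}$ for all graphs $H$ on at most $\alpha k$ vertices.

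The key structural reduction uses multiplicativity of $Q$: since $Q(G_1\sqcup G_2)=Q(G_1)Q(G_2)$ and $Q(G)(0)=1$, the formal power series $\log Q(G)(z)=-\sum_{i\ge 1}\frac{p_i(G)}{i}z^i$ is additive under disjoint union. A standard cluster-expansion (or Möbius inversion) argument then shows that each coefficient admits a representation
\[
  -\frac{p_i(G)}{i}=\sum_{H\text{ connected},\;|V(H)|\le\alpha i}\mu_{H,i}\cdot\ind{H}{G},
\]
with $\mu_{H,i}$ depending only on $H$ and recoverable from $\set{\lambda_{H',i}\cmid H'\subseteq H}$ in time $\exp(O(|V(H)|))$. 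This collapses everything to counting \emph{connected} induced subgraphs.

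For a connected $H$ on $s\le\alpha k$ vertices, any induced copy in $G$ is supported on a connected vertex subset of $V(G)$. The classical bound asserts that the number of connected subsets of size $s$ containing a fixed vertex in a graph of maximum degree $\Delta$ is at most $(e\Delta)^s$, and a BFS-based enumeration rooted at every vertex of $G$ produces all such subsets in time $|V(G)|\cdot(e\Delta)^s$. Each candidate is tested for isomorphism against the current $H$ in $\exp(O(s))$ time. Summing over all connected $H$ with $|V(H)|\le\alpha k$ that actually appear in $G$, and folding in the $\mu_{H,i}$ computations, yields total running time $\exp(O(\Delta k))$, as claimed.

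The step I expect to be most delicate is the cluster-expansion identity of the second paragraph: verifying that the coefficient of $z^i$ in $\log Q(G)$ is \emph{truly} supported on connected $H$ with $|V(H)|\le\alpha i$ (inheriting the size bound from the BIGCP hypothesis), and that the coefficients $\mu_{H,i}$ for connected $H$ can be extracted from the BIGCP data $\lambda_{H',i}$ by a computation whose cost is $\exp(O(|V(H)|))$ rather than scaling with the size of $G$. The connected-subset enumeration, the isomorphism tests, and the Newton recursion are routine by comparison.
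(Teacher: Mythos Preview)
The paper does not supply its own proof of this proposition; it is quoted as a black box from Patel and Regts \cite[Theorem~3.2]{PR17a}. Your sketch is a faithful outline of their argument: multiplicativity of $Q$ makes each inverse power sum $p_i$ an additive graph invariant, which forces its induced-subgraph expansion to be supported on \emph{connected} $H$ of size at most $\alpha i$; the $(e\Delta)^s$ bound on connected vertex subsets of size $s$ containing a fixed vertex then delivers the running time. Your identification of the ``delicate step'' is accurate as well---showing that additivity kills the disconnected terms (and that the surviving coefficients $\mu_{H,i}$ are computable locally) is exactly where the content lies in \cite{PR17a}.
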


To our need, we just need to verify that $P_G(\cdot)$ from \eqref{eqn:Holant-polynomial} is a BIGCP,
whenever $f_0=1$.

\begin{lemma} \label{lem:Holant-BIGCP} Let $G=(V,E)$ be a $\Delta$-regular graph and $f=\sqtp{f_0,f_1,\ldots,f_\Delta}$ be a signature. If $f_0=1$, then the Holant polynomial $P_G(\cdot)$ is a BIGCP with $\alpha =2$ and $\beta = C \Delta$ for some constant $C>0$.
\end{lemma}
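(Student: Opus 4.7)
The plan is to verify the three requirements of \Cref{def:BIGCP} for the assignment $G \mapsto P_G(z)$.

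\emph{Normalisation and multiplicativity.} Because $f_0 = 1$, the empty edge set contributes $\prod_v f_0 = 1$, so $P_G(0) = Z_0(G) = 1$; and $P_{G_1 \sqcup G_2}(z) = P_{G_1}(z) \cdot P_{G_2}(z)$ follows from the factorisation of the sum in \eqref{eqn:Holant-def} across the two components.

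\emph{Reformulating $Z_i$.} For $\sigma \subseteq E(G)$ with $|\sigma| = i$, let $V_\sigma$ be the set of vertices incident to some edge of $\sigma$, so $|V_\sigma| \le 2i$. Since $f_0 = 1$, the Holant weight of $\sigma$ depends only on the isomorphism class of $F_\sigma := (V_\sigma, \sigma)$; grouping $\sigma$ by this iso class yields
\begin{align*}
Z_i(G) = \sum_{F} w(F) \cdot \sub{F}{G},
\end{align*}
where $F$ runs over iso classes of graphs with $i$ edges and no isolated vertices and $w(F) := \prod_{v \in V(F)} f_{d_F(v)}$. I then apply the standard identity $\sub{F}{G} = \sum_{H : |V(H)| = |V(F)|} \sub{F}{H} \cdot \ind{H}{G}$ (which holds because each $S \subseteq V(G)$ with $G[S]$ in iso class $H$ contributes $\sub{F}{H}$ subgraphs of $G$ isomorphic to $F$). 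Swapping the order of summation and noting that $(V(H), E') \cong F$ for some $F$ with no isolated vertex precisely when $E'$ spans $V(H)$, this produces
\begin{align*}
Z_i(G) = \sum_{H \in \+G_{2i}} \lambda_{H, i} \cdot \ind{H}{G}, \qquad \lambda_{H, i} \defeq \sum_{\substack{E' \subseteq E(H),\, |E'|=i \\ V_{E'} = V(H)}} \prod_{v \in V(H)} f_{d_{E'}(v)}.
\end{align*}
This is the required BIGCP form with $\alpha = 2$.

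\emph{Single-exponential cost.} For any $\Delta$-regular $G$ and any $H$ whose maximum degree exceeds $\Delta$, $\ind{H}{G} = 0$; I may therefore set $\lambda_{H, i} := 0$ for all such $H$ without disturbing the identity. For the remaining $H$, $|E(H)| \le \Delta |V(H)|/2$, so enumerating all $E' \subseteq E(H)$ takes $2^{|E(H)|} \le \exp\tp{O(\Delta |V(H)|)}$ operations, and each product is evaluated in $O(|V(H)|)$ time, giving $\beta = C\Delta$ for some absolute $C$. This single-exponential bound is the only delicate step: a naive estimate would allow $|E(H)| = \Theta(|V(H)|^2)$, and the argument succeeds only after restricting to $H$ with maximum degree at most $\Delta$.
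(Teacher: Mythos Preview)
Your proof is correct and follows essentially the same strategy as the paper: group configurations $\sigma$ by their support, rewrite $Z_i(G)$ as a combination of $\ind{H}{G}$ over $H\in\+G_{2i}$, and bound the cost of computing $\lambda_{H,i}$ via $|E(H)|\le\Delta|V(H)|/2$ after restricting to $H$ of maximum degree at most $\Delta$. The only difference is cosmetic---you pass through $\sub{F}{G}$ before converting to induced-subgraph counts, whereas the paper groups directly by the induced subgraph $G[\sigma]$---and your formula for $\lambda_{H,i}$ makes the spanning constraint $V_{E'}=V(H)$ explicit.
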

\begin{proof}
  Clearly $P_G(0)=Z_0(G)=f_0^{\abs{V}}=1$. We would like to define $\lambda_{H,i}$ so that for every
  $1\le i\le n$,
  \begin{align}\label{eq:Ziind}
    Z_i(G)=\sum_{H\in\+G_{2i}}\lambda_{H,i}\cdot\ind{H}{G}.
  \end{align}

  For any $\sigma\in \set{0,1}^E$, let $G[\sigma]$ be the subgraph induced by the set of vertices
  with at least $1$ adjacent edges under $\sigma$.  Let $S_i$ be the set of subgraphs induced by
  assignments of Hamming weight $i$, namely
  $S_i\defeq\set{G[\sigma]\cmid \sigma\in \set{0,1}^E\mbox{ and }\abs{\sigma}=i}$.  The equivalence
  relation of graph isomorphisms induces a partition of $S_i$.  We choose one graph from each
  equivalence class and denote this family of graphs by $\+H_i$.  Therefore, for every two distinct
  graphs $H_1,H_2\in \+H_i$, they are not isomorphic.  Moreover, as $G[\sigma]$ has at most $2i$
  vertices, $\+H_i\subseteq \+G_{2i}$.

  For every $H\in\+H_i$, consider an assignment $\pi$ of signatures, where $v\in V$ of degree
  $d\le\Delta$ is assigned $[f_0,f_1,\dots,f_d]$, a truncated $f$.  Let
  \begin{align*}
    \lambda_{H,i}\defeq Z_i(H;\pi).
  \end{align*}
  To verify \eqref{eq:Ziind}, we rewrite
  \begin{align*}
    Z_i(G)
    &=\sum_{\sigma\in\set{0,1}^E\text{ and }\abs{\sigma}=i}\;\prod_{v\in V} f(\sigma|_{E(v)})\\
    &=\sum_{H\in \+G_{2i}}\;\sum_{\substack{\sigma\in\set{0,1}^E\\\abs{\sigma}=i\text{ and }G[\sigma]\simeq H}}\;\prod_{v\in V} f(\sigma|_{E(v)})\\
    &=\sum_{H\in \+G_{2i}}\;
      \sum_{\substack{G'\text{ is an induced subgraph of $G$}\\G'\simeq H}}\;
    \sum_{\substack{\sigma\in\set{0,1}^E\\\abs{\sigma}=i\text{ and }G[\sigma]=G'}}\;\prod_{v\in V} f(\sigma|_{E(v)})\\
    &=\sum_{H\in \+G_{2i}}\;
      \sum_{\substack{G'\text{ is an induced subgraph of $G$}\\G'\simeq H}}\; Z_i(G';\pi)\cdot f_0^{\abs{V\setminus V(H)}}\\
    &=\sum_{H\in \+G_{2i}}\; Z_i(H;\pi)\cdot \ind{H}{G},
  \end{align*}
  since $Z_i(G';\pi)=Z_i(H;\pi)$ whenever $G'\simeq H$.  Thus \eqref{eq:Ziind} holds.

  Since $\+H_i\subseteq \+G_{2i}$, we have that $\alpha =2$.  Moreover, $H$ contains at most
  $\Delta\abs{V(H)}$ edges.  As a consequence, $Z_i(H;\pi)$ can be computed in time
  $2^{O(\Delta \abs{H})}$.  Thus, we can take $\beta=C \Delta$ for some constant~$C>0$.
\end{proof}

Gathering what we have seen so far, we have the following theorem.

\begin{theorem} \label{thm:roots-Holant}
  Let $f$ be a symmetric signature of arity $\Delta$.  If the
  local polynomial $P_f(x)$ is $H_{\eps}$-stable for some $\eps>0$, then there is an FPTAS for
  $\Holant(f)$.
\end{theorem}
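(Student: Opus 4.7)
The plan is to chain the four earlier ingredients in a modular fashion. \Cref{thm:Ruelle} supplies zero-freeness of the global polynomial, \Cref{prop:strip} turns that zero-freeness into a short truncated Taylor expansion that $(1\pm\eps)$-approximates $P_G(1) = Z(G)$, \Cref{prop:P-phi-x} reduces computation of that Taylor expansion to the low-order coefficients of $P_G$, and \Cref{lem:Holant-BIGCP} together with \Cref{prop:BIGCP-count} produces those coefficients efficiently on the bounded-degree (in fact, $\Delta$-regular) input.

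First I would normalize. Since $P_f(0) = f_0$ and $P_f$ is $H_\eps$-stable, we have $f_0 \neq 0$; for a nonnegative signature this forces $f_0 > 0$, so after scaling each $f_i$ by $1/f_0$ (and multiplying the final output by the exactly computable factor $f_0^{|V|}$) we may assume $f_0 = 1$. \Cref{thm:Ruelle} then guarantees that $P_G(z)$ from \eqref{eqn:Holant-polynomial} is zero-free in the $\delta$-strip of $[0,1]$ for a constant $\delta > 0$ depending only on $\eps$. By \Cref{lem:Holant-BIGCP}, this normalized $P_G$ is a BIGCP with parameters $\alpha = 2$ and $\beta = O(\Delta)$, both of which are constants once $f$ (and hence $\Delta$) is fixed.

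Given an input graph $G$ on $n$ vertices and accuracy $\eps > 0$, set $k = C_\delta \log(n/\eps)$ as prescribed by \Cref{prop:strip}. Use \Cref{prop:BIGCP-count} to compute the first $k$ inverse power sums of $P_G$ in time $\exp(C \Delta k) = (n/\eps)^{O(\Delta C_\delta)}$; Newton's identities (\Cref{prop:newton}) then convert these to the first $k$ coefficients of $P_G$ in an additional $O(k^2)$ time. Feed these coefficients into \Cref{prop:P-phi-x} to obtain $T_k(\log P_\phi)(1)$, and output $\exp(T_k(\log P_\phi)(1))$; \Cref{prop:strip} certifies this as a $(1\pm\eps)$-approximation of $Z(G;f)$.

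Since $\Delta$ is fixed and both $\delta$ and $C_\delta$ depend only on $\eps$ (not on $n$), the total running time is polynomial in $n$ and $1/\eps$, yielding an FPTAS. There is no real obstacle at this stage, since the substantive work has been packed into the earlier pieces; the main thing to verify is that the constants propagate correctly — namely, that the $\delta$ coming out of \Cref{thm:Ruelle} is $n$-independent (it is, by \Cref{lem:HH}), that the transformation from \Cref{lem:trans} only inflates the polynomial's degree by an $\eps$-dependent but $n$-independent factor, and that the BIGCP exponent $C\Delta k$ remains polynomial when $k$ grows logarithmically in $n/\eps$, all of which follow directly from the preceding lemmas.
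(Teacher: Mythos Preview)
Your proposal is correct and follows essentially the same approach as the paper: normalize so that $f_0=1$, invoke \Cref{thm:Ruelle} for zero-freeness of $P_G$ in a $\delta$-strip, use \Cref{lem:Holant-BIGCP} and \Cref{prop:BIGCP-count} to obtain the low-order coefficients (via inverse power sums and Newton's identities), and then appeal to \Cref{prop:P-phi-x} and \Cref{prop:strip} for the approximation. The only cosmetic differences are that the paper tracks $m=|E|$ rather than $n=|V|$ (equivalent up to constants in the $\Delta$-regular setting), and that the theorem does not assume nonnegativity, so $f_0\neq 0$ alone suffices for the normalization step.
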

\begin{proof}
  Since $P_f(x)$ is $H_{\eps}$-stable, $f_0\neq 0$.  We may thus normalize $f$ so that $f_0=1$.  By
  \Cref{thm:Ruelle}, $P_f(x)$ being $H_{\eps}$-stable implies that for any $\Delta$-regular
  $G=(V,E)$, $P_G(x)$ is zero-free in a $\delta$-strip containing $[0,1]$.  Recall that
  $Z(G;f)=P_G(1)$.  By \Cref{prop:newton}, we can
  $(1\pm\eps)$-approximate $P_G(1)$ using $\exp(T_k(\log P_G)(x))$ for some
  $k=O\tp{\log \frac{m}{\eps}}$, where $m=\abs{E}$.  In order to compute $T_k(\log P_G)(x)$, we use
  \Cref{prop:BIGCP-count} and \Cref{lem:Holant-BIGCP} to compute the inverse power sums $(p_i)$ of $P_G(x)$, and
  then apply \Cref{prop:newton} to get the first $k$ coefficients of $P_G(x)$.
  The theorem then follows from \Cref{prop:P-phi-x}.
\end{proof}

\begin{remark}
  \Cref{thm:roots-Holant} is a sufficient but not necessary condition for a $\Holant$ problem to be approximable.
  To see this, once again, consider the problem of counting even subgraphs.
\end{remark}

\section{Holographic transformations} \label{sec:holotrans}

\Cref{thm:roots-Holant} implies an FPTAS for $\Holant(f)$ if $f$ is $H_{\eps}$-stable.
However, an FPTAS may still exist even if $f$ is not $H_{\eps}$-stable.
One way to extend the reach of this approach is via Valiant's holographic transformation \cite{Val08},
which changes $f$ but preserves the partition function.
We remark that even with holographic transformations, this approach is not exhaustive.
An example is the problem of counting even subgraphs.

We use $\holant{f}{g}$ to denote the Holant problem where the input is a bipartite graph $H = (U,V,E)$.
Each vertex in $U$ or $V$ is assigned the signature $f$ or $g$, respectively.
Call this assignment $\pi$, namely $\pi(u)=f$ for any $u\in U$ and $\pi(v)=g$ for any $v\in V$.
Recall \eqref{eqn:Holant-def}, and $Z(H;\pi)$ is the output of the computational problem $\holant{f}{g}$.
The signature $f$ is considered as a row vector (or covariant tensor),
whereas the signature $g$ is considered as a column vector (or contravariant tensor).

Let $T$ be an invertible $2$-by-$2$ matrix.
Let $d_1=\arity(f)$ and $d_2=\arity(g)$.
Define $f'=f\cdot T^{\otimes d_1}$ and $g'=\tp{T^{-1}}^{\otimes d_2} g$.
Let $\pi'$ be the assignment such that $\pi'(u)=f'$ for any $u\in U$ and $\pi'(v)=g'$ for any $v\in V$.

\begin{proposition}[Valiant's Holant Theorem~\cite{Val08}]\label{prop:holographic-transformation}
 If $T \in \=C^{2 \times 2}$ is an invertible matrix,
 then for any bipartite graph $H$, $Z(H;\pi) = Z(H;\pi')$,
 where $\pi'$ is defined above.
\end{proposition}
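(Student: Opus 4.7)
The plan is to prove the equality by a direct tensor-network manipulation: insert $TT^{-1}=I$ on each edge, absorb the $T$ into the adjacent $f$-vertex and the $T^{-1}$ into the adjacent $g$-vertex, and verify that this is exactly the transformation prescribed by the statement. Concretely, I would start from the right-hand side, expand $f'$ and $g'$ in terms of $f$ and $g$, and then carry out the edge-by-edge contraction.

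First I would unfold the definition of $Z(H;\pi')$. Writing $f'(\mathbf{y}) = \sum_{\mathbf{x}\in\{0,1\}^{d_1}} f(\mathbf{x}) \prod_{i=1}^{d_1} T_{x_i,y_i}$ and $g'(\mathbf{y}) = \sum_{\mathbf{x}\in\{0,1\}^{d_2}} \prod_{i=1}^{d_2} (T^{-1})_{y_i,x_i}\, g(\mathbf{x})$, I would introduce auxiliary assignments $\tau: E\to\{0,1\}$ recording the internal $f$-side indices and $\rho: E\to\{0,1\}$ recording the internal $g$-side indices. After interchanging the order of summation, the expression becomes
\begin{align*}
Z(H;\pi') = \sum_{\sigma,\tau,\rho} \prod_{u\in U} f(\tau|_{E(u)}) \prod_{v\in V} g(\rho|_{E(v)}) \prod_{e=(u,v)\in E} T_{\tau(e),\sigma(e)}\, (T^{-1})_{\sigma(e),\rho(e)}.
\end{align*}

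Next, I would perform the sum over $\sigma$ one edge at a time. Because the bipartiteness ensures each edge has exactly one end in $U$ and one end in $V$, each variable $\sigma(e)$ appears only in the two factors $T_{\tau(e),\sigma(e)}(T^{-1})_{\sigma(e),\rho(e)}$. Summing over $\sigma(e)\in\{0,1\}$ collapses this to $(TT^{-1})_{\tau(e),\rho(e)}=\delta_{\tau(e),\rho(e)}$. Taking the product of these Kronecker deltas over all edges forces $\tau=\rho$, so only diagonal terms survive and the expression reduces to $\sum_{\tau}\prod_{u}f(\tau|_{E(u)})\prod_{v}g(\tau|_{E(v)})$, which is precisely $Z(H;\pi)$.

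The only real bookkeeping issue is keeping the row/column (covariant/contravariant) conventions consistent so that the cancellation $TT^{-1}=I$ appears in the right order on each edge; the bipartite structure is what guarantees each edge sees exactly one $T$ and one $T^{-1}$, so nothing more subtle is needed. I do not anticipate a genuine obstacle: once the index conventions are pinned down, the proof is a single interchange of summations followed by one matrix identity applied $\abs{E}$ times.
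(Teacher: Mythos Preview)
Your argument is correct and is the standard proof of Valiant's Holant Theorem. Note, however, that the paper does not actually give a proof of this proposition: it is stated as a cited result from \cite{Val08} and used as a black box, so there is no ``paper's own proof'' to compare against. Your tensor-contraction argument (insert $TT^{-1}$ on each edge, sum out the middle index, use bipartiteness to guarantee one $T$ and one $T^{-1}$ per edge) is exactly the usual justification and would serve perfectly well if a proof were required.
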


Therefore, an invertible holographic transformation does not change the complexity of the Holant problem in the bipartite setting.
For a (non-bipartite) Holant problem, we can always view the edge as a binary equality function $=_2$.
Thus, $\Holant(f)$ is the same as $\holant{f}{=_2}$.
Let $\*O_2(\=C)$ be the set of $2$-by-$2$ orthogonal matrices,
namely $\*O_2(\=C)=\set{T\in\=C^{2\times 2}\mid T\transpose{T}=I_2}$.
As orthogonal transformations preserve the binary equality,
the following result will become handy in the standard setting.

\begin{proposition}[\cite{CLX11}] \label{prop:orthogonal}
 If $T \in \*O_2(\=C)$ is an orthogonal matrix 
 then for any $d$-regular graph $G$ and a signature $f$ of arity $d$, $Z(G;f) = Z(G;f\cdot T^{\otimes d})$.
\end{proposition}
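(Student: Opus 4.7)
The plan is to reduce the non-bipartite problem to a bipartite one, apply the already-established bipartite Holant theorem (\Cref{prop:holographic-transformation}) with the transformation $T$, and verify that an orthogonal $T$ leaves the binary equality constraint invariant, so that the transformed instance remains of the same non-bipartite form with $f$ replaced by $f\cdot T^{\otimes d}$.

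First I would construct a bipartite graph $H=(U,V,E_H)$ from $G=(V(G),E(G))$ by subdividing every edge: set $U=V(G)$, $V=E(G)$, and connect each $v\in V$ to the two endpoints of the corresponding edge in $G$. Under the assignment $\pi$ that puts $f$ on every vertex of $U$ and the binary equality signature $=_2$ on every vertex of $V$, each inserted equality vertex simply forces its two incident half-edges to carry the same value, so $Z(H;\pi)$ reproduces term-by-term the sum defining $Z(G;f)$.

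Next I would apply \Cref{prop:holographic-transformation} with the orthogonal matrix $T$ to obtain $Z(H;\pi)=Z(H;\pi')$, where $\pi'(u)=f\cdot T^{\otimes d}=f'$ and $\pi'(v)=(T^{-1})^{\otimes 2}(=_2)$. The core of the argument is the tensor identity showing that $(T^{-1})^{\otimes 2}$ fixes $=_2$. A direct calculation gives
$$\bigl((T^{-1})^{\otimes 2}(=_2)\bigr)_{ij}=\sum_{k}(T^{-1})_{ik}(T^{-1})_{jk}=\bigl(T^{-1}\transpose{(T^{-1})}\bigr)_{ij},$$
and the orthogonality relation $T\transpose{T}=I_2$ yields $T^{-1}=\transpose{T}$, so this reduces to $(\transpose{T}T)_{ij}=\delta_{ij}$, which is exactly the equality signature $=_2$. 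Reversing the subdivision then identifies $Z(H;\pi')$ with $Z(G;f')$, and chaining the three equalities gives $Z(G;f)=Z(G;f\cdot T^{\otimes d})$.

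I do not expect a substantive obstacle: subdividing edges is the standard device for recasting a Holant instance as a bipartite one, and the only non-bookkeeping step is the one-line verification that $T^{-1}\transpose{(T^{-1})}=I_2$ is equivalent to orthogonality of $T$. This is precisely why \emph{orthogonal} (rather than arbitrary invertible) transformations are the correct subclass in the non-bipartite setting: they are exactly the $T$ for which the contravariant action $(T^{-1})^{\otimes 2}$ preserves the identity tensor $=_2$ sitting on each edge.
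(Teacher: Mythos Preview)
Your proposal is correct and follows exactly the approach the paper itself sketches: the paper does not give a formal proof of this cited proposition, but immediately before stating it remarks that $\Holant(f)$ is the same as $\holant{f}{=_2}$ and that ``orthogonal transformations preserve the binary equality,'' which is precisely your subdivision-plus-\Cref{prop:holographic-transformation} argument together with the verification that $(T^{-1})^{\otimes 2}(=_2)=\,=_2$ when $T\transpose{T}=I_2$.
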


As a particular consequence of \Cref{prop:orthogonal},
under the transformation $\trans{0}{1}{1}{0}$,
the complexity of $\Holant(f)$ is equivalent to $\Holant(\ol f)$ where $\ol f=[f_d,f_{d-1},\dots,f_0]$.
We will use this fact in the following without explicitly mentioning it.

\section{Second-order recurrences}\label{sec:2nd-rec}

The aim of this section is to study the locations of zeros of local polynomials of signatures
satisfying generalised second-order recurrences in order to apply \Cref{thm:roots-Holant}. Specifically, we identify the family of signatures
whose local polynomials are $H_{\eps}$-stable for some $\eps>0$, under some suitable holographic transformations.

For a tuple of reals $(a,b,c)\ne (0,0,0)$, define
\[
  \+F_{a,b,c} \defeq \set{ [f_0,f_1,\dots,f_{d}] \cmid a f_k + b f_{k+1} + c f_{k+2} = 0, \forall 0\le k\le d-2, \mbox{ and
    } f_k\ge 0, \forall 0\le k \le d}.
\]
The family $\+F_{a,b,c}$ consists of signatures with non-negative entries satisfying second-order
linear recurrence relation parameterized by $(a,b,c)$.  Whenever $\+F_{a,b,c}$ appears, we always
assume that $(a,b,c)\ne (0,0,0)$.


The following proposition states the general form of a function satisfying a generalised
second-order recurrence.

\begin{proposition}\label{prop:2order}
  Let $f=[f_0,\dots,f_d]\in \+{F}_{a,b,c}$ be a signature and $c\ne 0$.  There are two cases:
  \begin{itemize}
  \item if $b^2\ne 4ac$, then
    \[
      f_k= x\phi_1^k+y\phi_2^k,
    \]
    where $\phi_1,\phi_2$ are the two roots of the polynomial $cz^2+bz+a=0$ and $x,y$ are two
    constants determined by $f_0$ and $f_1$;
  \item if $b^2=4ac$, then
    \[
      f_k=x\phi^k+yk\phi^{k-1},
    \]
    where $\phi$ is the unique root of the polynomial $cz^2+bz+a=0$ and $x,y$ are two constants
    determined by $f_0$ and $f_1$. In case of $\phi=0$, we follow the convention that $0\cdot 0^{-1}=0$.
  \end{itemize}
\end{proposition}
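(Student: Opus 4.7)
The proposition is the standard solution theory for a second-order linear recurrence with constant coefficients, so the plan is to carry out the usual characteristic-root argument and verify that the stated formulas exhaust the solution space. I would organise it as follows.

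First, observe that since $c\ne 0$, the recurrence $af_k+bf_{k+1}+cf_{k+2}=0$ lets us express $f_{k+2}$ as a linear function of $f_k$ and $f_{k+1}$. By induction every $f_k$ is determined by the pair $(f_0,f_1)$, and the map $(f_0,f_1)\mapsto (f_0,f_1,\dots,f_d)$ is linear. Therefore the set of sequences satisfying the recurrence is a two-dimensional vector space, so it suffices in each case to exhibit two linearly independent solutions and then fit the initial conditions $f_0,f_1$.

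In the case $b^2\ne 4ac$, I would check by direct substitution that for each root $\phi$ of $cz^2+bz+a=0$, the geometric sequence $\phi^k$ satisfies the recurrence, since $a\phi^k+b\phi^{k+1}+c\phi^{k+2}=\phi^k(a+b\phi+c\phi^2)=0$. Because $\phi_1\ne\phi_2$, the two sequences $\phi_1^k$ and $\phi_2^k$ are linearly independent (their Wronskian-style $2\times 2$ matrix at $k=0,1$ has determinant $\phi_2-\phi_1\ne 0$). They therefore form a basis of the solution space, and the constants $x,y$ are the unique solution of the $2\times 2$ linear system $x+y=f_0$, $x\phi_1+y\phi_2=f_1$.

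In the case $b^2=4ac$, the double root satisfies $b=-2c\phi$ and $a=c\phi^2$. I would verify, again by direct substitution using these identities, that both $\phi^k$ and $k\phi^{k-1}$ satisfy the recurrence; the latter reduces to the identity $k-2(k+1)+(k+2)=0$ after factoring $c\phi^2$. Linear independence follows from matching values at $k=0,1$, and the constants are again determined by $f_0,f_1$. The only real wrinkle, which is where I expect to have to be careful, is the degenerate subcase $\phi=0$ (equivalent to $a=b=0$ and $c\ne 0$), where the recurrence forces $f_2=\cdots=f_d=0$ and the formula $x\phi^k+yk\phi^{k-1}$ must be interpreted with the stated convention $0\cdot 0^{-1}=0$ together with $0^0=1$; one then checks that $x=f_0$ and $y=f_1$ reproduce the correct sequence. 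Once these verifications are in place the proposition follows from the two-dimensionality of the solution space.
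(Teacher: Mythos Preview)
Your proposal is correct and is the standard characteristic-root argument. Note that the paper does not actually supply a proof of this proposition: it is stated as a well-known fact from the theory of linear recurrences and used without further justification, so your write-up provides exactly the kind of routine verification the paper omits.
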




In this section, we assume that all signatures (or their reversals) in consideration have nonzero leading term,
i.e., $f_0\ne 0$ or $f_d\ne 0$. We will discuss the case of $f_0=f_d=0$ in
Section~\ref{sec:exception}.

We will use $\mathcal{F}^*_{a,b,c}$ to denote the subset family of $\mathcal{F}_{a,b,c}$ with this
additional property $f_0>0$. It turns out that the behaviour of
signatures in $\+F^*_{a,b,c}$ is closely related to the sign of the value $b^2-4ac$, namely the discriminant of the characteristic polynomial
$cz^2+bz+a$. Therefore, our discussion is divided into three parts.





\subsection{\texorpdfstring{$b^2-4ac>0$}{b2-4ac>0}}

In this case, the characteristic polynomial of signatures in $\+F^*_{a,b,c}$ has two distinct real
roots.  We first single out a special case.

\begin{lemma} \label{lem:ps+qt<0}
  Let $f$ be a symmetric signature of arity $d\ge 3$, where $d$ is an odd integer,
  $f_i\ge 0$ for all $i=0,1,\dots,d$, and $f$ is not identically zero.
  If there exist $p,q,s,t\in\=R$ such that
  $p^2+q^2=s^2+t^2$, $ps+qt<0$, and $f=\tp{p,q}^{\otimes d}+\tp{s,t}^{\otimes d}$, then up to a
  non-zero scaler, $f$ or $\ol f$ is $[1,0,\lambda^2,0,\dots,\lambda^{d-1},0]$ for some
  $\lambda> 1$, where $\ol f=\sqtp{f_d,f_{d-1},\ldots,f_0}$
\end{lemma}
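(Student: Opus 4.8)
The plan is to exploit the rigidity imposed by the two conditions $p^2+q^2=s^2+t^2$ and $ps+qt<0$, viewing $(p,q)$ and $(s,t)$ as two vectors of equal length in $\=R^2$ with negative inner product. First I would dispose of degenerate subcases: if one of the two vectors is zero then by the equal-norm condition both are zero and $f\equiv 0$, contradicting the hypothesis; so both vectors have the same positive length $r$, and we may write $(p,q)=r(\cos\theta,\sin\theta)$ and $(s,t)=r(\cos\psi,\sin\psi)$. The entries of $f$ are then $f_k = r^d\bigl(\cos^{d-k}\theta\sin^k\theta + \cos^{d-k}\psi\sin^k\psi\bigr)$. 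Scaling out $r^d$, the task is to read off from the nonnegativity of all $f_k$ (for $k=0,\dots,d$ with $d$ odd) and from $ps+qt = r^2\cos(\theta-\psi) < 0$ exactly which angle pairs are possible.

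The key step is to show that nonnegativity of \emph{every} $f_k$, $0\le k\le d$, forces $\{\theta,\psi\}$ to be (modulo the symmetries $\theta\mapsto\theta+\pi$, which only flips the sign of $(p,q)^{\otimes d}$ by $(-1)^d=-1$ since $d$ is odd, hmm — so actually one must be careful) a very restricted set. Concretely I expect the only solutions, up to the relevant symmetries and up to swapping the two vectors and up to the reversal $f\mapsto\ol f$, to be $\theta = 0$ and $\psi = \pi/2$ (or vice versa): then $f = (1,0)^{\otimes d} + (0,1)^{\otimes d} = [1,0,0,\dots,0,1]$, which after a holographic rotation by a diagonal-type orthogonal matrix — or rather, a different realisation of the same $f$ — matches $[1,0,\lambda^2,0,\dots,\lambda^{d-1},0]$; more precisely I would check that the stated signature $[1,0,\lambda^2,0,\dots,\lambda^{d-1},0]$ with $\lambda>1$ is exactly $(1,0)^{\otimes d} + (0,\mu)^{\otimes d}$-type once one also allows unequal lengths, and reconcile this with the equal-norm hypothesis by tracking the normalisation carefully. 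The argument that no other angle pair works is the combinatorial heart: one analyses the sign pattern of $k\mapsto \cos^{d-k}\theta\sin^k\theta$, which for $\theta\notin\{0,\pi/2,\pi,\dots\}$ is a geometric-like sequence with a fixed sign determined by the signs of $\cos\theta$ and $\sin\theta$, so the sum of two such sequences can be nonnegative for all $k$ only if the sign oscillations cancel in a way that pins down the angles; the condition $\cos(\theta-\psi)<0$ then selects the "orthogonal" configuration and rules out $\theta=\psi$.

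I would organise this as: (i) reduce to equal positive norm and introduce angles; (ii) observe that if both $\cos\theta,\sin\theta$ are nonzero then the sequence $(\cos^{d-k}\theta\sin^k\theta)_k$ alternates in sign whenever $\cos\theta\sin\theta<0$ and is strictly positive whenever $\cos\theta\sin\theta>0$, and similarly for $\psi$; (iii) do a short case analysis on the signs of the two products $\cos\theta\sin\theta$ and $\cos\psi\sin\psi$, using $f_k\ge 0$ for all $k$ together with $d$ odd to eliminate all but the case where one vector lies on a coordinate axis; (iv) then the equal-norm plus negative-inner-product conditions force the second vector onto the other axis, giving $f = c\,[1,0,\dots,0,1]$ up to scaling; (v) finally exhibit the holographic transformation / alternative tensor decomposition identifying $c\,[1,0,\dots,0,1]$ with $[1,0,\lambda^2,0,\dots,\lambda^{d-1},0]$ (or its reversal) for an appropriate $\lambda>1$, using \Cref{prop:orthogonal} and the fact that $d$ is odd so that the "middle" entries are forced to vanish. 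The main obstacle I anticipate is step (iii): making the sign-pattern argument airtight when $\cos\theta$ or $\sin\theta$ is negative requires care because then $(p,q)^{\otimes d}$ has entries of mixed sign and one must argue that a single "good" (all-positive) sequence cannot compensate for the alternating one at every index $k$ simultaneously — a counting/parity argument that crucially uses $d\ge 3$ odd. Once the angles are pinned down, steps (iv)–(v) are routine algebra.
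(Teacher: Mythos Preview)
Your proposal contains a genuine conceptual error in steps (iii)--(v). You aim to show that the two vectors $(p,q)$ and $(s,t)$ end up on orthogonal coordinate axes, yielding $f = c\,[1,0,\dots,0,1]$, and then hope to identify this with $[1,0,\lambda^2,0,\dots,\lambda^{d-1},0]$ via a holographic transformation. But this endpoint is impossible under the hypotheses: if $(p,q)=(r,0)$ and $(s,t)=(0,r)$ then $ps+qt=0$, contradicting $ps+qt<0$. More generally, orthogonal equal-length vectors always have inner product zero, so the ``orthogonal configuration'' you expect in (iv) is precisely the one ruled out by the hypothesis. Consequently step (iii), which seeks to force one vector onto a coordinate axis, is aiming at the wrong target: in the actual answer neither vector lies on an axis.

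What actually happens is that $(s,t)$ must be the \emph{reflection} of $(p,q)$ across a coordinate axis, i.e.\ $(s,t)=(-p,q)$ or $(s,t)=(p,-q)$. For instance with $(s,t)=(-p,q)$ one has $ps+qt=-p^2+q^2<0$ exactly when $|p|>|q|$, and then $f_k = q^kp^{d-k}\bigl(1+(-1)^{d-k}\bigr)$, which for odd $d$ vanishes at even $k$ and equals $2q^kp^{d-k}$ at odd $k$; after reversing and rescaling this is precisely $[1,0,\lambda^2,0,\dots,\lambda^{d-1},0]$ with $\lambda=p/q>1$. No holographic transformation is needed, so step (v) is a dead end. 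The paper's proof reaches this reflection configuration by a direct case analysis on the signs of $qt$ and $ps$, extracting from just the four inequalities $f_0,f_1,f_{d-1},f_d\ge 0$ (plus $p^2+q^2=s^2+t^2$) enough to force $|p|=|s|$ and $|q|=|t|$ with the correct sign pattern. Your angular parametrisation is fine as a starting point, but the sign-pattern argument you sketch in (iii) would have to be redirected toward proving a reflection relation rather than an axis relation; the clean way to do that is exactly the inequality chase in the paper.
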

\begin{proof}
%
  Since $f=\tp{p,q}^{\otimes d}+\tp{s,t}^{\otimes d}$,
  we have $f_i=q^ip^{d-i}+t^is^{d-i}$.
  We discuss the sign of $qt$.

  First assume $qt\ge 0$.
  The fact $f_1\ge 0$ yields
  \[
    qp^{d-1}+ts^{d-1}\ge 0.
  \]
  Since $d$ is odd, then $q$ and $t$ must be both non-negative. Let $t=\sqrt{p^2+q^2-s^2}\ge 0$. It
  follows from $ps+qt<0$ that $ps<0$. We can assume without loss of generality that $p>0$,
  $s< 0$ and $\abs{p}\ge \abs{s}$ (a consequence of $f_0\ge 0$). To ease the presentation, let
  $s'=-s > 0$. Then
  \begin{align*}
    ps+qt < 0 \iff qt < ps' \iff q^2(p^2+q^2-s'^2) < p^2s'^2 \iff \abs{q} < \abs{s'}.
  \end{align*}
  We then consider the requirement $f_{d-1}\ge 0$. This is equivalent to
  \begin{align*}
    q^{d-1}p+t^{d-1}s\ge 0
    &\iff q^{d-1}p\ge t^{d-1}s'\\
    &\iff q^2p^{\frac{2}{d-1}}\ge (p^2+q^2-s'^2)s'^{\frac{2}{d-1}}\\
    &\iff q^2(p^{\frac{2}{d-1}}-s'^{\frac{2}{d-1}})\ge (p^2-s'^2)s'^{\frac{2}{d-1}}.
  \end{align*}
  We apply $\abs{q} < \abs{s'}$ and obtain
  \begin{align*}
    (p^2-s'^2)s'^{\frac{2}{d-1}}\le s'^2(p^{\frac{2}{d-1}}-s'^{\frac{2}{d-1}})
    & \iff \frac{p^2}{s'^2}-1\le \frac{p^{\frac{2}{d-1}}}{s'^{\frac{2}{d-1}}}-1\\
    & \iff \abs{s}\ge \abs{p}.
  \end{align*}
  Therefore, it must hold that $p=-s$, $q=t$ and we have
  $f=(p,q)^{\otimes d}+(-p,q)^{\otimes d}$.
  Moreover, $ps+qt<0$ implies that $p>q$.
  If $q=t=0$, then $f$ is identically zero, a contradiction.
  Otherwise $q>0$, and we can choose $\lambda=\frac{p}{q}>1$ and $\ol f$ is
  $[1,0,\lambda,0,\lambda^2,0,\dots]$ up to a non-zero scalar.

  Now we assume $qt<0$, and without loss of generality further assume that $q > 0$ and $t < 0$. Let
  $t=-\sqrt{p^2+q^2-s^2}$.  We distinguish between $ps\ge 0$ and $ps<0$.
  \begin{itemize}
  \item[(i)] If $ps\ge 0$, then
    \begin{align}
      ps<-qt
      &\implies p^2s^2<q^2(p^2+q^2-s^2)\notag\\
      &\implies \abs{s}<\abs{q}.      \label{eq:sq1}
    \end{align}
    Again, $f_1\ge 0$ implies that $qp^{d-1}+ts^{d-1}\ge 0$. This is equivalent to
    \begin{align*}
      qp^{d-1}\ge\sqrt{p^2+q^2-s^2}\cdot s^{d-1}
      &\iff q^2p^{2d-2}\ge (p^2+q^2-s^2)\cdot s^{2d-2}\\
      &\iff q^2(p^{2d-2}-s^{2d-2})\ge s^{2d-2}(p^2-s^2).
    \end{align*}
    Since $f_d\ge 0$, we have $\abs{q}\ge\abs{t}$.
    Together with $p^2+q^2=s^2+t^2$, it implies that $p^2\le s^2$.
    Thus we have either $\abs{p}=\abs{s}$ or
    \[
      q^2\le \frac{s^{2d-2}(s^2-p^2)}{s^{2d-2}-p^{2d-2}}.
    \]

    If $\abs{p}=\abs{s}$, then $p=s\ge 0$ and $t=-q$.
    In this case, $f=(p,q)^{\otimes d}+(p,-q)^{\otimes d}$.
    If $p=0$, then $f$ is identically zero, a contradiction.
    Thus $p>0$, and we can choose $\lambda=\frac{q}{p}$, and $\lambda>1$ because $0>ps+qt=p^2-q^2$.

    Otherwise, since
    \[
      s^{2d-2}-p^{2d-2}=(s^2-p^2)\tp{\sum_{i=0}^{d-2}s^{2i}p^{2(d-2-i)}}\ge s^{2d-4}(s^2-p^2),
    \]
    we have $q^2\le s^2$.
    This contradicts to \eqref{eq:sq1}.
  \item[(ii)] If $ps < 0$, we first assume that $p < 0$ and $s > 0$. In this case, we let $p'=-p$
    and $t'=-t$. Then $f_0,f_1,f_2\ge 0$ implies
    \[
      s^d\ge p'^d;\quad t's^{d-1}\le q p'^{d-1};\quad t'^2s^{d-2}\ge q^2p'^{d-2},
    \]
    where $p',q,t',s$ above are all positive.
    The first two imply that $t'p'\le qs$,
    and the last two imply that $t'p'\ge qs$.
    Thus $t'p' = qs$.
    This is further equivalent to $s^2q^2=p^2(p^2+q^2-s^2)$, or $(p^2+q^2)(p^2-s^2)=0$.
    It implies that either $p=q=0$ or $p=-s$.
    In both cases, $f$ is identically zero, a contradiction.

    Finally, consider the case when $p > 0$ and $s < 0$. Then $f_0=p^d+s^d\ge 0$ implies
    $\abs{p}\ge\abs{s}$. On the other hand, $f_d=q^d+t^d\ge 0$ is equivalent to $\abs{q}\ge \abs{t}$.
    However $p^2+q^2=s^2+t^2$.
    Thus we have $p=-s$ and $q=-t$. This means that $f$ is identically zero, also a contradiction. \qedhere
  \end{itemize}
\end{proof}

Let $=_d$ be the equality function of arity $d$, namely the function $[1,0,\dots,0,1]$.
We call the problem
$\holant{=_d}{[\beta,1,\beta]}$ a \emph{ferromagnetic Ising model without external fields}, if
$\beta>1$.  An FPRAS for this problem has been given by Jerrum and Sinclair \cite{JS93}.  Then we
have the following lemma.

\begin{lemma}\label{lem:delta1}
  Let $f=\sqtp{f_0,f_1,\dots,f_d}\in \+F^*_{a,b,c}$ with $b^2-4ac>0$. Then one of the following
  holds:
  \begin{itemize}
  \item $\Holant\tp{f}$ can be solved exactly in polynomial-time; or
  \item there is an invertible matrix $M\in\=C^{2\times 2}$ such that
    $\holant{fM^{\otimes d}}{\tp{M^{-1}}^{\otimes 2}\cdot\tp{=_2}}$ is a ferromagnetic Ising model
    without external fields; or
  \item there is an orthogonal matrix $M\in \*O_2(C)$ such that either
    $P_{f\cdot M^{\otimes d}}\tp{z}$ or $P_{\ol f\cdot M^{\otimes d}}\tp{z}$ is
    $H_{\varepsilon}$-stable for some $\varepsilon>0$, where $\ol f=\sqtp{f_d,f_{d-1},\ldots,f_0}$; or
  \item $f$ or $\ol f$ is $[1,0,\lambda^2,0,\lambda^4,0,\dots,\lambda^{d-1},0]$ for some $\lambda > 1$ and has an odd arity $d$.
  \end{itemize}
\end{lemma}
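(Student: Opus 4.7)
I will apply \Cref{prop:2order} to put $f$ in closed form and then work through a case analysis. If $c=0$, the recurrence is first order and $f=f_0(1,\phi)^{\otimes d}$ for some $\phi\ge 0$; the Holant factors across edges as $f_0^{|V|}(1+\phi^2)^{|E|}$ and is exactly computable (first outcome). When $c\ne 0$ and $b^2-4ac>0$, \Cref{prop:2order} yields
\[
  f=x(1,\phi_1)^{\otimes d}+y(1,\phi_2)^{\otimes d},\qquad P_f(z)=x(1+\phi_1 z)^d+y(1+\phi_2 z)^d,
\]
for real distinct $\phi_1>\phi_2$ and real $x,y$. If $xy=0$, $f$ is a single tensor power and the Holant again factorizes (first outcome), so I assume $xy\ne 0$ and, after rescaling, $x>0$.

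\textbf{Finding an $H_\eps$-stable representative.} The zeros of $P_f(z)$ correspond, via the M\"obius change of variable $\zeta=(1+\phi_1 z)/(1+\phi_2 z)$, to the $d$ values of $\zeta$ satisfying $\zeta^d=-y/x$, the inverse being $z=(\zeta-1)/(\phi_1-\phi_2\zeta)$. A direct real-part computation identifies a range of $(y,\phi_1,\phi_2)$ in which every such $z$ lies strictly left of the imaginary axis; in those cases $P_f$ is $H_\eps$-stable for some $\eps>0$ by compactness, and \Cref{thm:roots-Holant} gives the third outcome with $M=I$. When this fails, I apply an orthogonal $M\in\*O_2(\=R)$ rotating $(1,\phi_1)$ onto a coordinate axis. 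The transformed local polynomial is $C+y'(u+vz)^d$ for constants determined by the orthogonal invariants $1+\phi_1^2$, $1+\phi_2^2$ and $1+\phi_1\phi_2$; its zeros lie on a single rotated circle and $H_\eps$-stability reduces to a compatible sign condition on $(u,v,C,y')$. The same trick applied to $\ol f$, together with \Cref{prop:orthogonal}, covers the complementary sign configurations.

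\textbf{Ising reduction and exceptional case.} If neither $f$ nor $\ol f$ admits an $H_\eps$-stabilising orthogonal rotation, I resort to a non-orthogonal $T$ built from $(1,\phi_1)$ and $(1,\phi_2)$ sending $f$ to a scalar multiple of $=_d$. The conjugate binary signature $(T^{-1})^{\otimes 2}(=_2)$ has matrix proportional to $TT^T$, which, after imposing the scaling constraint $x\alpha^d=y\beta^d$, reduces to the symmetric signature $[1+\phi_2^2,-(1+\phi_1\phi_2),1+\phi_1^2]$ up to a positive scalar. For this to match a ferromagnetic Ising signature $[\beta^*,1,\beta^*]$ with $\beta^*>1$ one needs $\phi_1=-\phi_2$, $|\phi_1|>1$ and $x=y$, yielding the second outcome. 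In the residual configurations, I absorb $x^{1/d}$ and $y^{1/d}$ into the two tensor factors to write $f=(p,q)^{\otimes d}+(s,t)^{\otimes d}$ with $p^2+q^2=s^2+t^2$ and $ps+qt<0$, and invoke \Cref{lem:ps+qt<0} to identify $f$ or $\ol f$, up to a non-zero scalar, as the exceptional form $[1,0,\lambda^2,0,\ldots,\lambda^{d-1},0]$ with odd $d$ and $\lambda>1$.

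\textbf{Main obstacle.} The chief difficulty is sign/parity bookkeeping. Absorbing a negative scalar into a real $d$-th power fails when $d$ is even, so for $d$ even one must either switch to $\ol f$ or keep complex factors while preserving the real orthogonal invariants $p^2+q^2$, $s^2+t^2$ and $ps+qt$; this is exactly what forces the exceptional case to have $d$ odd. Matching the failure mode of the $H_\eps$-stability analysis---a M\"obius-circle condition on the $d$-th roots of $-y/x$---with the hypothesis $ps+qt<0$ of \Cref{lem:ps+qt<0} requires verifying that these two sign characterisations coincide on the residual configuration, and threading through all sub-cases to ensure no configuration slips through all three attempted outcomes is the most intricate part of the proof.
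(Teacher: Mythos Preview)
Your proposal has a concrete error and a structural gap.

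\textbf{The $c=0$ case is wrong.} When $c=0$ and $b\ne 0$, the recurrence $af_k+bf_{k+1}=0$ only constrains $f_0,\dots,f_{d-1}$; the value $f_d$ remains free. Thus $f=x(1,\phi)^{\otimes d}+y(0,1)^{\otimes d}$ with $y$ not necessarily zero, and $f$ is generally not degenerate (take for instance $f=[1,0,\dots,0,1]$, the equality $=_d$). Your claim that the Holant factors as $f_0^{|V|}(1+\phi^2)^{|E|}$ fails here. The paper folds this case into the same $(p,q)^{\otimes d}+r(s,t)^{\otimes d}$ framework as $c\ne 0$, with $(s,t)=(0,1)$.

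\textbf{The case structure does not close.} The paper's argument rests on a single clean dichotomy that your sketch never isolates. After absorbing $x,y$ into the tensor factors to write $f$ as a multiple of $(p,q)^{\otimes d}+r(s,t)^{\otimes d}$ with $r=\pm 1$, one splits on whether $p^2+q^2=s^2+t^2$. In the equal-norm case the transformation with $M'=\trans{p}{q}{s}{t}$ sends $f$ to $=_d$ and the edge to $[p^2+q^2,\,ps+qt,\,s^2+t^2]$, which is ferromagnetic Ising whenever $ps+qt>0$; when $ps+qt<0$, even $d$ is again Ising after a sign flip, while odd $d$ is exactly the hypothesis of \Cref{lem:ps+qt<0}. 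In the unequal-norm case the paper uses the one-parameter family $M'=\trans{w}{1}{1}{-w}$ of scaled orthogonal matrices: the zeros of $P_{fM'^{\otimes d}}$ then lie on the perpendicular bisector of two explicit real points $z_1,z_2$, and $w$ is chosen so that $z_1+z_2<0$, forcing the bisector into the open left half-plane. Your proposal instead tries stability with $M=I$, then with a rotation sending $(1,\phi_1)$ to an axis, and only then falls back to Ising, but you never identify the boundary between ``stability succeeds'' and ``stability fails''. In particular, your assertion that the Ising outcome requires $\phi_1=-\phi_2$, $|\phi_1|>1$, $x=y$ is far too restrictive: the equal-norm condition is the codimension-one locus $|x|^{2/d}(1+\phi_1^2)=|y|^{2/d}(1+\phi_2^2)$, and the Ising branch must absorb all of it, not a single slice. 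As written, there are configurations on this locus (with $\phi_1\ne-\phi_2$) where neither of your two rotation attempts yields $H_\eps$-stability and your Ising attempt does not apply either, so the proof does not cover all cases.
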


\begin{proof}
  If $c=0$, then $af_k+b f_{k+1}=0$ for all $k\le d-2$.  Thus, $f_0,\dots,f_{d-1}$ form a geometric
  sequence with some ratio $\phi\in\=R$, and $f$ can be written as
  $f=x\tp{1,\phi}^{\otimes d}+y\tp{0,1}^{\otimes d}$, where $x,y,\phi\in\=R$.  Pulling $x$ and $y$
  into the tensor power, there exist $p,q,s,t\in\=R$ and $r=1$ or $-1$ such that $f$ is a non-zero
  multiple of $\tp{p,q}^{\otimes d} + r\tp{s,t}^{\otimes d}$.

  Otherwise $c\neq 0$.  It follows from \Cref{prop:2order} that we can rewrite
  $f=x\tp{1,\phi_1}^{\otimes d}+y\tp{1,\phi_2}^{\otimes d}$, where $\phi_1,\phi_2\in\=R$ and
  $\phi_1\neq\phi_2$.  Since $f$ has non-negative weights, it implies that $x,y\in\=R$ as well.
  Thus, similar to the case above, there exist $p,q,s,t\in\=R$ and $r=1$ or $-1$ such that $f$ is a
  non-zero multiple of $\tp{p,q}^{\otimes d} + r\tp{s,t}^{\otimes d}$.

  The four possibilities of the lemma come from the values these reals might take. If $pt=qs$, then
  $f$ is degenerate and the partition function can be computed in polynomial time (see
  e.g.~\cite[Chapter 2]{CC17}).  Thus we assume $pt-qs\ne 0$ in the following.

  First we consider the case that $p^2+q^2=s^2+t^2$. We claim that we can always write
  $f=(p,q)^{\otimes d}+(s,t)^{\otimes d}$ without loss of generality. To see this, we distinguish
  between the parity of $d$. If $d$ is odd, then
  $(p,q)^{\otimes d}-(s,t)^{\otimes d} = (p,q)^{\otimes d}+(-s,-t)^{\otimes d}$. If $d$ is even, we
  know from $f=(p,q)^{\otimes d}-(s,t)^{\otimes d}$ that $f_0= p^d-s^d$ and
  $f_d=q^d-t^d$. Therefore, $f_0>0$ and $f_d\ge 0$ imply $p^2>s^2$ and $q^2\ge t^2$, which
  contradicts $p^2+q^2=s^2+t^2$.

  We write $\Holant\tp{f}$ as $\holant{f}{=_2}$. Let $M'=\trans{p}{q}{s}{t}$ be an invertible matrix
  and $M=M'^{-1}$. It follows from~\Cref{prop:holographic-transformation} that $\holant{f}{=_2}$ is
  equivalent to $\holant{fM^{\otimes d}}{\tp{M^{-1}}^{\otimes 2}\cdot\tp{=_2}}$. We verify that this
  particular Holant problem is either solvable in polynomial-time or equivalent to a ferromagnetic
  Ising model without external fields. We have
  \begin{align*}
    fM^{\otimes d} = \tp{\tp{1,0}^{\otimes d} + \tp{0,1}^{\otimes d}}M'^{\otimes d}M^{\otimes} = \tp{1,0}^{\otimes d} + \tp{0,1}^{\otimes d},
  \end{align*}
  and
  \begin{align*}
    \tp{M^{-1}}^{\otimes 2}\cdot\tp{=_2}
    = M'^{\otimes 2}\cdot\tp{=_2} = \transpose{\tp{p^2+q^2,ps+qt,ps+qt,s^2+t^2}}.
  \end{align*}
  If $ps+qt=0$, clearly it is solvable in polynomial-time since the edges in every component of the
  instance must be assigned with the same value in order to contribute a non-zero weight to the
  partition function. If $ps+qt> 0$, we have that
  $\tp{p^2+q^2}\tp{s^2+t^2}-\tp{ps+qt}^2=\tp{pt-qs}^2>0$, and it is a ferromagnetic Ising model
  without external fields.
  If $ps+qt<0$ and $d$ is even,
  then a further transformation $\trans{1}{0}{0}{-1}$ makes the middle term positive,
  and it is a ferromagnetic Ising model again.
  Otherwise, \Cref{lem:ps+qt<0} applies, and we are in the last case of the lemma.

  The remaining case is that $pt\ne qs$ and $p^2+q^2\neq s^2+t^2$.  If $\abs{q}=\abs{t}$, then
  $\abs{p}\neq \abs{s}$ and we replace $\tp{p,q,s,t}$ by $\tp{q,p,t,s}$. This is equivalent to work
  with $\ol f$. So from now on we also assume that $\abs{q}\neq \abs{t}$.  Let
  $M'=\trans{w}{1}{1}{-w}$ where $w\in\=R$ is a parameter to be set later.
  Then $f\cdot M'^{\otimes d}$ is $\tp{q+pw,p-qw}^{\otimes d}+a\tp{t+sw,s-tw}^{\otimes d}$ and
  \begin{align*}
    P_{f\cdot M'^{\otimes d}}\tp{z}=\tp{q+pw+\tp{p-qw}z}^d + a\tp{t+sw+\tp{s-tw}z}^d.
  \end{align*}
  The zeros of this polynomial must satisfy
  \begin{align}
    \abs{q+pw+\tp{p-qw}z}=\abs{t+sw+\tp{s-tw}z}. \label{eq:roots1}
  \end{align}
  We show that by choosing appropriate $w$ the roots to this equation are in the open left
  half-plane.

  If $p=q=0$ and $s-tw\neq 0$, the roots to the equation \eqref{eq:roots1} must be
  $-\frac{t+sw}{s-tw}$. Since $p^2+q^2\neq s^2+t^2$, it holds that $\tp{s,t}\neq \tp{0,0}$.  There
  are four cases.
  \begin{itemize}
  \item If $t=0$, let $w=1$. It holds that $s-tw=s\neq 0$ and $-\frac{t+sw}{s-tw}=-w<0$.
  \item If $s=0$, let $w=-1$. It holds that $s-tw=t\neq 0$ and $-\frac{t+sw}{s-tw}=\frac{1}{w}<0$.
  \item If $st<0$, let $w=\frac{2s}{t}<0$. It holds that $s-tw=-s\neq 0$ and
    $-\frac{t+sw}{s-tw}=\frac{t}{s}+w<0$.
  \item If $st>0$, let $w=0$. It holds that $s-tw=s\neq 0$ and $-\frac{t+sw}{s-tw}=-\frac{t}{s}<0$.
  \end{itemize}
  The case of $s=t=0$ is completely analogous.

  Now we can make the further assumption that $\tp{p,q}\neq \tp{0,0}$ and $\tp{s,t}\neq \tp{0,0}$.
  Let $\alpha = -\frac{p-qw}{s-tw}\in\=R$ be another parameter, which eventually will be set to $1$
  or $-1$.  As $w=\frac{\alpha s+p}{\alpha t+q}$ and $\abs{q}\neq \abs{t}$, the value of the
  parameter $w$ will be determined when the sign of $\alpha$ is chosen.
  Since $p-qw=\frac{\alpha \tp{pt-qs}}{\alpha t + q}\neq 0$, we let $z_1 = -\frac{q+pw}{p-qw}$ which
  is well-defined.  Similarly it holds that $s-tw=\frac{qs-pt}{\alpha t + q}\neq 0$, and we let
  $z_2=-\frac{t+sw}{s-tw}$.  The equation \eqref{eq:roots1} is equivalent to
  \begin{align}
    \abs{\alpha}\cdot\abs{z-z_1} = \abs{z-z_2}. \label{eq:roots2}
  \end{align}
  Since $\abs{\alpha}=1$, in order to make the roots to the equation \eqref{eq:roots2} in the open
  left half-plane, it suffices to make sure that
  \begin{align}
    z_1 + z_2 = \frac{\tp{p^2+q^2}-\tp{s^2+t^2}}{\alpha\tp{qs-pt}} < 0. \label{eq:negative}
  \end{align}
  Since $p^2+q^2\neq s^2+t^2$, we can let $\alpha=-1$ if
  $\frac{\tp{p^2+q^2}-\tp{s^2+t^2}}{qs-pt} > 0$, or let $\alpha =1$ otherwise.

  We have showed that there is a matrix $M'\in \=C^{2\times 2}$ such that the zeros of
  $P_{f\cdot M'^{\otimes d}}\tp{z}$ are in the open left half-plane.  Since a polynomial has only a
  finite number of zeros, there is a constant $\varepsilon>0$ that $P_{f\cdot M'^{\otimes d}}\tp{z}$
  is $H_{\varepsilon}$-stable.  It holds that
  $M'\transpose{\tp{M'}}=\trans{1+w^2}{0}{0}{1+w^2}=\tp{1+w^2}I_2$ where $1+w^2>0$ as $w\in\=R$. Let
  $M=\frac{1}{\sqrt{1+w^2}}M'$. Clearly $M\transpose{M}=I_2$ and $M\in\*O_2(\=C)$.  Since
  $P_{f\cdot M'^{\otimes d}}\tp{z}=\tp{1+w^2}^{d/2} P_{f\cdot M^{\otimes d}}\tp{z}$,
  $P_{f\cdot M^{\otimes d}}\tp{z}$ has the same set of zeros as $P_{f\cdot M'^{\otimes d}}\tp{z}$.
  So $P_{f\cdot M^{\otimes d}}$ is also $H_\varepsilon$-stable for some $\varepsilon>0$.
\end{proof}

\subsection{\texorpdfstring{$b^2-4ac=0$}{b2-4ac=0}}

When the characteristic polynomial of $f$ has only one real root of
multiplicity two, we show that there always exists an orthogonal
transformation to reduce $f$ to a function whose local polynomial is
$H_\eps$-stable.

\begin{lemma} \label{lem:delta0} Let
  $f=\sqtp{f_0,f_1,\ldots,f_d}\in \+F^*_{a,b,c}$ with $b^2-4ac=0$,
  then there is an orthogonal matrix $M\in\*O_2(\=C)$ such that
  $P_{fM^{\otimes d}}\tp{z}$ is $H_{\varepsilon}$-stable for some
  $\varepsilon>0$.
\end{lemma}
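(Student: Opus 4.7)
The plan is to show that the identity $M = I_2$ already suffices, i.e., that the local polynomial $P_f(z)$ itself is $H_{\eps}$-stable when $b^2 - 4ac = 0$. First, I note $c \neq 0$: otherwise $b^2 = 4ac = 0$ forces $b = 0$, and the recurrence degenerates to $a f_k = 0$ with $a \neq 0$, contradicting $f_0 > 0$. Hence \Cref{prop:2order} applies and gives the closed form $f_k = x\phi^k + y k \phi^{k-1}$, where $\phi = -b/(2c) \in \=R$ and $x = f_0 > 0$.

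Next I would pin down the sign of $\phi$ and dispatch the degenerate case $\phi = 0$ separately. If $\phi = 0$, the formula (with the convention $0 \cdot 0^{-1} = 0$) collapses to $f = \sqtp{x, y, 0, \dots, 0}$, so $P_f(z) = x + dyz$ is linear and non-negativity of $f_1 = y$ places its unique zero (if any) at $-x/(dy) \leq 0$; thus $P_f$ is $H_{\eps}$-stable for small $\eps > 0$. To rule out $\phi < 0$, I use $f_1 = x\phi + y \geq 0$ to deduce $y \geq -x\phi > 0$, and $f_2 = \phi(x\phi + 2y) \geq 0$ with $\phi < 0$ to deduce $x\phi + 2y \leq 0$, i.e.\ $y \leq -x\phi/2$; combined, these force $-x\phi \leq -x\phi/2$, which contradicts $x\phi < 0$.

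For the main case $\phi > 0$, I would compute $P_f$ directly using the identities
\begin{align*}
  \sum_{k=0}^{d} \binom{d}{k} \phi^k z^k = (1+\phi z)^d, \qquad \sum_{k=0}^{d} \binom{d}{k} k \phi^{k-1} z^k = d z (1+\phi z)^{d-1},
\end{align*}
to obtain the factorisation
\begin{align*}
  P_f(z) = (1 + \phi z)^{d-1}\bigl(x + (x\phi + dy) z\bigr).
\end{align*}
The $(d-1)$-fold zero $-1/\phi$ is strictly negative. Non-negativity of $f_d = \phi^{d-1}(x\phi + yd)$ with $\phi > 0$ forces $x\phi + dy \geq 0$, so the linear factor is either the positive constant $x$ (when $x\phi + dy = 0$) or has its unique zero at $-x/(x\phi + dy) < 0$. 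Every zero of $P_f$ therefore lies in the open left half-plane; taking $\eps > 0$ smaller than the distance from $0$ to the nearest zero and setting $M = I_2$ completes the argument.

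The proof should be almost entirely routine once \Cref{prop:2order} is invoked, so I do not anticipate a serious obstacle. The only step requiring a moment of care is the exclusion of $\phi < 0$: alternating signs of $\phi^{k-1}$ could in principle cooperate with $y$ to keep all $f_k$ non-negative, but the two inequalities from $k = 1$ and $k = 2$ already close this off, and the rest is bookkeeping around the factored form of $P_f$.
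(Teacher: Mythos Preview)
Your proof is correct and in fact cleaner than the paper's. The key difference is that you rule out $\phi<0$ at the outset using only $f_1\ge 0$ and $f_2\ge 0$, after which the identity matrix $M=I_2$ already does the job: the closed form $f_k = x\phi^k + yk\phi^{k-1}$ yields the exact factorisation
\[
  P_f(z) = (1+\phi z)^{d-1}\bigl(x + (x\phi + dy)z\bigr),
\]
and with $\phi\ge 0$, $x=f_0>0$, and $x\phi+dy\ge 0$ (from $f_d\ge 0$), every root is a negative real. The paper instead introduces a one-parameter family of orthogonal matrices $M'=\trans{1}{w}{-w}{1}$, computes $P_{fM'^{\otimes d}}$, and then performs a case analysis on the signs of $\phi$ and of $x+yd$ (in its normalisation), selecting a specific nonzero $w$ in each surviving case; it only eliminates $\phi<0$ late in the argument via $f_{d-1}$ and $f_d$, and even handles a $\phi<0$ subcase under $x+yd=0$ that your inequalities show is vacuous. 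Your route avoids the transformation entirely and exposes why no rotation is needed: the non-negativity constraints already force the characteristic root to be non-negative. The paper's approach, on the other hand, fits the uniform template used in \Cref{lem:delta1}, where a genuine rotation is sometimes necessary. One cosmetic point: when you write the zero of $x+dyz$ as lying at $-x/(dy)\le 0$, note that $x>0$ forces strict inequality whenever $y>0$, so there is no boundary case to worry about.
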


\begin{proof}
  If $c=0$, then $b=0$ since $b^2-4ac=0$. This cannot happen because
  if so $f_0$ would be zero.  If $b=0$, then $a=0$ since $c\neq 0$ and
  $b^2-4ac=0$.  In this case, $f$ is of form
  $\sqtp{f_0,f_1,0,\ldots,0}$ and we can simply pick $M=I_2$.  Clearly
  $P_{fM^{\otimes d}}\tp{z}=f_0 + df_1 z$, which is
  $H_\varepsilon$-stable for some $\varepsilon>0$ since $f_0>0$ and
  $f_1\ge 0$.

  So now we assume that $b\neq 0$.  Since $c\neq 0$ and $b^2-4ac=0$,
  the equation $cz^2 + bz + a =0$ has one real root with multiplicity
  two and we denote it by $\phi$.  Note that
  $\phi=-\frac{b}{2c}\neq 0$ since $b\neq 0$.  It follows from
  \Cref{prop:2order} that $f_k=x\phi^k+y'k\phi^{k-1}$ for
  $0\le k \le d$ and some $x,y'\in \=R$. Since $\phi\ne 0$, to ease
  the presentation, we let $y=\frac{y'}{\phi}$ and rewrite
  $f_k=x\phi^k+yk\phi^k$. Clearly $x=f_0>0$. We can write $f$ as
  \begin{align*}
    f=x\tp{1,\phi}^{\otimes d} + y\sum_{k=1}^d \tp{1,\phi}^{\otimes \tp{k-1}}\otimes \tp{0,\phi} \otimes \tp{1,\phi}^{\otimes \tp{d-k}}.
  \end{align*}
  Let $M'=\trans{1}{w}{-w}{1}$ where $w\in\=R$ is a parameter to be
  set later. Then
  \begin{align*}
    f\cdot M'^{\otimes d}= x\tp{1-\phi w,\phi+w}^{\otimes d}
    +y\sum_{k=1}^d \tp{1-\phi w,\phi+w}^{\otimes\tp{k-1}}\otimes \tp{-\phi w,\phi}\otimes \tp{1-\phi w,\phi+w}^{\otimes \tp{d-k}},
  \end{align*}
  and
  \begin{align*}
    P_{f\cdot M'^{\otimes d}}\tp{z}=x\tp{1-\phi w+\tp{\phi+w}z}^d + yd\tp{1-\phi w+\tp{\phi+w}z}^{d-1}\tp{-\phi w+\phi z}.
  \end{align*}
  The zeros of this polynomial must satisfy
  \begin{align}
    \tp{1-\phi w+\tp{\phi+w}z}^{d-1} \tp{x-\tp{x+yd}\phi w+\tp{xw+\tp{x+yd}\phi}z}=0. \label{eq:zeros8}
  \end{align}
  If $\phi+w\neq 0$ and $xw+\tp{x+yd}\phi\neq 0$, then the roots of
  this equation must be of the form $-\frac{1-\phi w}{\phi+w}$ or
  $-\frac{x-\tp{x+yd}\phi w}{xw+\tp{x+yd}\phi}$.  We choose
  appropriate $w$ and check that these two roots are negative,
  $\phi+w\neq 0$ and $xw+\tp{x+yd}\phi\neq 0$.  Recall that
  $\phi\neq 0$ and $x=f_0 > 0$.  We discuss various cases depending on
  the sign of $\phi$ and $x+yd$.
  \begin{itemize}
  \item If $x+yd=0$, then the roots of the equation \eqref{eq:zeros8}
    are $-\frac{1-\phi w}{\phi+w}$ and $-\frac{1}{w}$. If $\phi<0$,
    let $w=-2\phi>0$ and
    $-\frac{1-\phi w}{\phi+w}=\frac{1+2\phi^2}{\phi}<0$. If $\phi>0$,
    let $w=\frac{1}{2\phi}>0$ and
    $-\frac{1-\phi
      w}{\phi+w}=-\frac{1}{2\phi+\frac{1}{\phi}}<0$. Clearly
    $\phi+w\neq 0$ and $xw+\tp{x+yd}\phi \neq 0$ in both cases.
  \item If $\phi>0$ and $x+yd>0$, then let
    $w=\min\set{\frac{1}{2\phi}, \frac{x}{2\tp{x+yd}\phi}}>0$. It
    holds that
    \begin{align*}
      -\frac{1-\phi w}{\phi+w} & \le -\frac{1}{2\tp{\phi+w}} < 0,\\
      -\frac{x-\tp{x+yd}\phi w}{xw+\tp{x+yd}\phi} &\le -\frac{x}{2\tp{xw+\tp{x+yd}\phi}} < 0.
    \end{align*}
    Whatever $w=\frac{1}{2\phi}$ or $w=\frac{x}{2\tp{x+yd}\phi}$, it
    is clear that $\phi+w\neq0$ and $xw+\tp{x+yd}\phi\neq 0$.
  \item If $\phi>0$ and $x+yd<0$, then $f_d = \phi^d\tp{x+yd} <
    0$. This contradicts to $f_d\ge 0$.
    \item If $\phi<0$ and $x+yd>0$, then consider $f_d = \phi^d\tp{x+yd}$.
      If $d$ is odd, then $f_d < 0$. Contradiction.
      Thus $d$ must be even.
      Then $\phi^{d-1}<0$. Since $f_{d-1} = \phi^{d-1}\tp{x+y\tp{d-1}}\ge 0$, it holds that $x+y\tp{d-1}\le 0$.
      As $x>0$, $y$ must be negative, and then it contradicts to $x+yd>0$.
    \item If $\phi<0$ and $x+yd<0$, then consider $f_d=\phi^d\tp{x+yd}$. If $d$ is even, then $f_d < 0$. But $f_d$ must be non-negative, so $d$ must be odd. Then $\phi^{d-1}>0$. Since $f_{d-1} = \phi^{d-1}\tp{x+y\tp{d-1}}\ge 0$, it holds that $x+y\tp{d-1}\ge 0$. Since $d > 1$, we can similarly deduce that $x+y\tp{d-2}\le 0$. This contradicts that $x>0$ and $x+y\tp{d-1}\ge 0$.
  \end{itemize}

  We have showed that there is a matrix $M'\in \=C^{2\times 2}$ such that the zeros of $P_{f\cdot M'^{\otimes d}}\tp{z}$ are in the open left half-plane.
  Since a polynomial has only a finite number of zeros, there is a constant $\varepsilon>0$ that $P_{f\cdot M'^{\otimes d}}\tp{z}$ is $H_{\varepsilon}$-stable.
  It holds that $M'M'^T=\trans{1+w^2}{0}{0}{1+w^2}=\tp{1+w^2}I_2$ where $1+w^2>0$ as $w\in\=R$.
  Let $M=\frac{1}{\sqrt{1+w^2}}M'$, and clearly $M\in\*O_2(\=C)$.
  Since $P_{f\cdot M'^{\otimes d}}\tp{z}=\tp{1+w^2}^{d/2} P_{f\cdot M^{\otimes d}}\tp{z}$,
  $P_{f\cdot M^{\otimes d}}\tp{z}$ has the same set of zeros as $P_{f\cdot M'^{\otimes d}}\tp{z}$.
  So $P_{f\cdot M^{\otimes d}}$ is also $H_\varepsilon$-stable for some $\varepsilon>0$.
\end{proof}

\subsection{\texorpdfstring{$b^2-4ac<0$}{b2-4ac<0}}

When the characteristic polynomial of $f$ has two distinct complex roots, we show that the local
polynomial of $f$ itself is $H_\eps$-stable.

\begin{lemma} \label{lem:delta-1} Let $f=\sqtp{f_0,f_1,\ldots,f_d}\in \+F^*_{a,b,c}$ with
  $b^2-4ac<0$, then $P_f\tp{z}$ is $H_{\varepsilon}$-stable for some $\varepsilon>0$.
\end{lemma}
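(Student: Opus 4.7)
The plan is to apply \Cref{prop:2order} to obtain a closed form for $f$ and then locate the zeros of $P_f(z)$ directly, rather than invoking a holographic transformation. Since $b^2-4ac<0$ we have $c\neq 0$, and the characteristic polynomial $cz^2+bz+a$ has two complex conjugate roots $\phi$ and $\bar\phi$ with $\Im\phi\neq 0$. \Cref{prop:2order} then yields $f_k = x\phi^k + y\bar\phi^k$ for some $x,y\in\=C$. Since $f_0,f_1$ are real and $\phi\neq\bar\phi$, inverting the $2\times 2$ Vandermonde system forces $y=\bar x$; moreover $x\neq 0$, as otherwise $f$ would vanish identically, contradicting $f_0>0$.

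Next, I would combine this closed form with the binomial theorem to obtain the compact expression
\[
  P_f(z)=\sum_{k=0}^d\binom{d}{k}f_k z^k = x\tp{1+\phi z}^d+\bar x\tp{1+\bar\phi z}^d.
\]
If $P_f(z_0)=0$, then $x\tp{1+\phi z_0}^d=-\bar x\tp{1+\bar\phi z_0}^d$. Taking absolute values and using $|x|=|\bar x|>0$ gives $|1+\phi z_0|=|1+\bar\phi z_0|$. Squaring both sides and cancelling common terms reduces this to $\Re(\phi z_0)=\Re(\bar\phi z_0)$, i.e.\ $\Im\phi\cdot\Im z_0=0$. Since $\Im\phi\neq 0$, every zero of $P_f$ must be real.

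To conclude, note that for real $z\ge 0$ we have $P_f(z)\ge f_0>0$, since every coefficient of $P_f$ is non-negative and the constant term $f_0$ is strictly positive. Hence every zero of $P_f$ is a strictly negative real number, and because $P_f$ has only finitely many zeros, they all lie in $(-\infty,-\eps_0]$ for some $\eps_0>0$. Taking any $0<\eps<\eps_0$ shows that $P_f$ is $H_\eps$-stable, as required.

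The only step that requires mild care is the identification $y=\bar x$, which comes from a one-line Vandermonde inversion against a real right-hand side; beyond that, the argument is a short modulus computation followed by a sign check using non-negativity of the coefficients, so no substantial obstacle is anticipated.
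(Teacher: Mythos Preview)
Your proposal is correct and follows essentially the same approach as the paper: derive the closed form $P_f(z)=x(1+\phi z)^d+\bar x(1+\bar\phi z)^d$ via \Cref{prop:2order}, use a modulus argument to force any zero onto the real axis, and then use $f_0>0$ with non-negativity of the coefficients to push all zeros into the negative reals. The only cosmetic difference is that the paper phrases the modulus step as $\abs{z-(-1/\phi)}=\abs{z-(-1/\bar\phi)}$ (perpendicular bisector of conjugate points), whereas you expand $|1+w|^2=1+2\Re(w)+|w|^2$ directly; both yield the same conclusion.
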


\begin{proof}
  It holds that $c\neq 0$ since otherwise $b^2-4ac\ge 0$.  Since $c\neq 0$ and $b^2-4ac<0$, it
  follows from \Cref{prop:2order} that $f_k=x\phi^k + y\ol{\phi}^k$ for $0\le k\le d$, where
  $\phi,\ol\phi$ are the two conjugate roots of the polynomial $cz^2+bz+a=0$ and $x,y\in \=R$ are
  constants.  Clearly $x+y=f_0$ and $x\phi + y\ol\phi=f_1$.  Since $f_0$ is real, it holds that
  $\Im\tp{y}= -\Im\tp{x}$.  Since $f_1$ is real and
  $f_1=x\phi+y\ol{\phi}=\tp{x+y}\Re\tp{\phi} + i\tp{x-y}\Im\tp{\phi}$, it holds that
  $\Re\tp x = \Re\tp y$. Thus $y=\ol{x}$ and $f_k=x\phi^k+\ol{x}\ol{\phi}^k$ for $0\le k\le d$. We
  write $f=x\tp{1,\phi}^{\otimes d}+\ol x\tp{1,\ol\phi}^{\otimes d}$ and
  \begin{align*}
    P_f(z)=x\tp{1+\phi z}^d + \ol{x}\tp{1+\ol\phi  z}^d.
  \end{align*}
  The zeros of $P_f\tp{z}$ must satisfy
  \begin{align}
    \abs{x}\cdot \abs{1+\phi z}^d = \abs{\ol x}\cdot \abs{1+\ol \phi z}^d. \label{eq:zeros4}
  \end{align}
  Note that $\phi\neq 0$, and $x\neq 0$ since otherwise $\ol{x}=0$ and $f$ would be
  $\sqtp{0,0,\ldots,0}$. So the equation \eqref{eq:zeros4} is equivalent to
  \begin{align*}
    \abs{z-\tp{-\frac{1}{\phi}}}=\abs{z-\tp{-\frac{1}{\ol \phi}}}.
  \end{align*}
  Since $-\frac{1}{\phi}$ and $-\frac{1}{\ol \phi}$ are the complex conjugates of each other, the
  roots of this equation and thus the zeros of $P_f\tp{z}$ must lie on the real axis.
  On the other hand, if $z\ge 0$
  \begin{align*}
    P_f\tp{z} = \sum_{k=0}^d \binom{n}{k} f_k \cdot z^k >0, 
  \end{align*}
  since $f_0>0$.
  Thus the zeros of $P_f\tp{z}$ are negative reals.
  Since a polynomial has only a finite number of zeros,
  there is a constant $\varepsilon>0$ such that $P_f\tp{x}$ is $H_{\varepsilon}$-stable.
\end{proof}

\section{Exceptional cases}\label{sec:exception}

\Cref{sec:2nd-rec} covered all signatures in $\+F_{a,b,c}$ unless $f_0=f_d=0$.
We discuss the remaining cases in this section.
We will classify all of them, but the approximation complexity in one case is still open.

Let $b\in\=R$, and define $\+A_b$ to be the following class
\begin{align*}
  \set{ [f_0,f_1,\dots,f_{d}] \mid \forall 0\le k\le d-2,\;\frac{b^2}{4\cos^2\frac{\pi}{d}} f_k + b f_{k+1} + f_{k+2} = 0,\;f_0=0\text{ and }f_1>0}.
\end{align*}
Notice that $\+A_b$ is a special case of $\+F_{a,b,c}$ except that the parameter $a$ depends on the arity $d$.
In fact, if $f\in\+A_b$, then we can scale $f$ so that $f$ has the following form
\begin{align*}
  \left[0,\lambda\sin\frac{\pi}{d},\lambda^2\sin\frac{2\pi}{d},\dots,\lambda^{d-1}\sin\frac{(d-1)\pi}{d},0\right],
\end{align*}
for $\lambda=-\frac{b}{2\cos\frac{\pi}{d}}>0$. (Recall that $b< 0$.)
Namely, $f_i=\lambda^{i}\sin\frac{i\pi}{d}$.

\begin{lemma}\label{lem:exceptions}
  Let $f=[f_0,f_1,\dots,f_{d}] \in \+F_{a,b,c}$ for some $d\ge 3$.
  If $f_0=f_d=0$, then there are three possibilities:
  \begin{itemize}
  \item[I.] $f\in\+A_b$ for some $b < 0$;
  \item[II.] $[f_0,f_1,\dots,f_{d}]$ is of form $[0,*,0,0,\dots,0]$ or its reversal
    $[0,0,\dots,0,*,0]$;
  \item[III.] $[f_0,f_1,\dots,f_{d}]$ is of form
    $\lambda [0,1,0,\mu, 0, \mu^2, \dots,0,\mu^{\frac{d-2}{2}},0]$ for
    some $\lambda, \mu> 0$ and even $d$.
  \end{itemize}
\end{lemma}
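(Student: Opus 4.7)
The plan is to apply \Cref{prop:2order} and carry out a case analysis on whether $c=0$ and on the sign of the discriminant $b^2-4ac$. First I would dispose of the degenerate branches. When $c=0$, the one-step recurrence $af_k+bf_{k+1}=0$ combined with $f_0=0$ propagates zeros whenever $b\ne 0$, forcing $f\equiv 0$; the only nontrivial possibility is $b=0$ and $a\ne 0$, where $af_k=0$ for $0\le k\le d-2$ together with $f_d=0$ leaves at most $f_{d-1}$ nonzero, producing the reversal of type II. When $c\ne 0$ and $b^2=4ac$, the double root $\phi=-b/(2c)$ either vanishes (so $a=b=0$ and $cf_{k+2}=0$ yields type II), or is nonzero, in which case $f_k=x\phi^k+yk\phi^{k-1}$ with $f_0=x=0$ and $f_d=yd\phi^{d-1}=0$ forces $f\equiv 0$.

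In the remaining case $c\ne 0$ and $b^2\ne 4ac$, \Cref{prop:2order} gives $f_k=x\phi_1^k+y\phi_2^k$ with distinct roots. If one root is zero (say $\phi_1=0$, forcing $a=0$) then $f_0=0$ together with $f_d=y\phi_2^d=0$ yields $f\equiv 0$, so I assume $\phi_1,\phi_2\ne 0$. Then $f_0=0$ gives $y=-x$ (with $x\ne 0$ if $f$ is nontrivial), and $f_d=0$ becomes $\phi_1^d=\phi_2^d$, i.e.\ $\phi_1/\phi_2$ is a nontrivial $d$-th root of unity.

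In the subcase of two real distinct roots, the only real nontrivial $d$-th root of unity is $-1$, which forces $d$ even and $\phi_1=-\phi_2$. Setting $\phi_2=\mu$, the formula $f_k=x\bigl((-\mu)^k-\mu^k\bigr)$ vanishes on even indices and equals $-2x\mu^k$ on odd indices; non-negativity fixes the sign of $x\mu$, and after rescaling the odd entries form a geometric progression of ratio $\mu^2$, which is exactly type III.

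In the subcase of complex conjugate roots, write $\phi_2=re^{i\theta}$ with $r>0$ and $\theta\in(0,\pi)$. Requiring every $f_k$ to be real forces $x$ to be purely imaginary, so writing $x=it$ gives $f_k=2tr^k\sin(k\theta)$ for some $t\in\=R$, while $\phi_1^d=\phi_2^d$ pins $\theta=j\pi/d$ for some $j\in\{1,\ldots,d-1\}$. The main obstacle is ruling out $j\ge 2$ using non-negativity. For $j=1$ all arguments $k\pi/d$ lie in $(0,\pi)$ so the sines are uniformly positive; for $j\ge 2$ with $d\ge 3$, the integer $k=\lceil(d+1)/j\rceil$ satisfies $1\le k\le d-1$ and $kj\pi/d\in(\pi,2\pi)$, hence $\sin(kj\pi/d)<0$ while $\sin(j\pi/d)>0$, so the $f_k$ cannot all be of one sign. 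Since $f_1\ge 0$ then demands $t\ge 0$, only $j=1$ (with $t>0$) produces a nontrivial signature. Normalising $c=1$ in the characteristic polynomial with roots $re^{\pm i\pi/d}$ gives $b=-2r\cos(\pi/d)<0$ and $a=r^2=b^2/(4\cos^2(\pi/d))$, so $f\in\+A_b$, which is type I.
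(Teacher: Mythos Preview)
Your proof is correct and follows essentially the same route as the paper's: case split on $c=0$ versus $c\ne 0$, then on the discriminant, invoking \Cref{prop:2order} and using $f_0=f_d=0$ to force $y=-x$ and $\phi_1^d=\phi_2^d$, then separating real and complex-conjugate roots. Your exclusion of $j\ge 2$ via $k=\lceil(d+1)/j\rceil$ is a minor variant of the paper's $k=\lfloor d/t\rfloor+1$; both place $kj\pi/d$ strictly in $(\pi,2\pi)$. One small point where you are actually more careful than the paper: in the $c=0$ case the paper asserts $f\equiv 0$, but as you note, when $b=0$ and $a\ne 0$ the entry $f_{d-1}$ is unconstrained, giving the reversal of type~II directly---the conclusion is unaffected, but your treatment is cleaner.
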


\begin{proof}
  We start by considering the case $c=0$. Then $af_k+bf_{k+1}=0$ for every $0\le
  k\le d-2$. It is easy to verify that $f$ is identically $0$ as $f_0=f_d=0$, which belongs to type II.
  Thus, we may assume that $c\ne 0$ and normalise $c$ to $1$ in the following.
  There are two further cases depending on whether $b^2-4a=0$.

  The first case is when $b^2-4a\ne 0$. It follows from Proposition~\ref{prop:2order} that
  $f_0=x+y=0$ and $f_d=x\phi_1^d+y\phi_2^d=0$. These two identities together imply
  \[
    x\tp{\phi_1^d-\phi_2^d}=0,
  \]
  which further implies either $x=y=0$ (and therefore $f_k=0$ for all $k$) or
  $\phi_1^d=\phi_2^d$. We only need to discuss the case when $\phi_1^d=\phi_2^d$ and $x\ne 0$.
  There are two possibilities.
  \begin{itemize}
    \item [(1)] If $\frac{\phi_1}{\phi_2}\in\=R$, then $\phi_1=-\phi_2$ as $b^2\neq 4a$.
      It implies that $d$ is even. This is type III.
    \item [(2)] Otherwise, $\frac{\phi_1}{\phi_2}\not\in\=R$.
      In this case, $b^2-4a<0$ and $\phi_1$ and $\phi_2$ are conjugate of each other.
      By swapping $\phi_1$ and $\phi_2$ if necessary, we may assume that $0<\arg\phi_1<\pi$.
      Then there exists some integer $0< t< d$, $t\neq d/2$, so that $\arg\phi_1 = \frac{t\pi}{d}$ and $\frac{\phi_1}{\phi_2}=e^{\frac{2t\pi}{d}i}\not\in\=R$.
      Since $a>b^2/4\ge 0$, $\abs{\phi_1}=\abs{\phi_2}=\sqrt{a}$, and
     \begin{align*}
        f_k=x\tp{\phi_1^k-\phi_2^k} = 2 x \cdot a^{\frac{k}{2}}\tp{\sin\frac{tk\pi}{d}} i.
      \end{align*}
      Recall that we have the further requirement $f_k\ge 0$ for every $0\le k\le d$.
      For $k=1$, as $0<t<d$, $\sin\frac{t\pi}{d}>0$, and thus $x$ must lie on the negative imaginary axis.
      Then, it must be that $\sin\frac{tk\pi}{d}\ge 0$ for all $0\le k\le d$.
      If $t>1$, then taking $k=\floor{\frac{d}{t}}+1\le d$ implies a contradiction.
      Thus $t=1$.

      The assumption $0<\arg\phi_1<\pi$ implies that $\cos\frac{\pi}{d}=\frac{-b}{2\sqrt{a}}>0$.
      Thus, $b< 0$ and $a=\frac{b^2}{4\cos^2\frac{\pi}{d}}$.
      This verifies that $f$ is of type I.
  \end{itemize}

  At last we turn to the case that $b^2-4a=0$.
  It follows from Proposition~\ref{prop:2order} that $f_k=x\phi^k+yk\phi^{k-1}$ where $\phi=-b/2$.
  Then $f_0=0$ means that $x=0$, and $f_d=0$ means that $y\phi^{d-1}=0$.
  Thus either $y=0$ or $\phi=0$, and any of the two cases implies that
  $f$ is of type II.
\end{proof}

\newcommand{\EO}{\textnormal{\textsc{ExactOne}}}
\newcommand{\PM}{\textnormal{\#PM}}
\newcommand{\AP}{\le_{\textnormal{\texttt{AP}}}}
\newcommand{\APge}{\ge_{\textnormal{\texttt{AP}}}}
\newcommand{\leGadget}{\le_{\textnormal{\texttt G}}}
\newcommand{\geGadget}{\ge_{\textnormal{\texttt G}}}

Next we show that type II and type III signatures are equivalent to approximately counting perfect matchings in general graphs.
Denote by $\EO_d$ the function $[0,1,0,\dots,0]$ of arity $d$,
and by $\+EO$ the (infinite) set $\set{\EO_d\mid d\in\=N^+}$.
Then $\Holant(\+EO)$ is the problem of counting perfect matchings in a graph, denoted $\PM$.
(There is one function per each degree/arity. So the mapping from vertices to functions is obvious for the infinite set $\+EO$.)

For type III signatures, since multiplying by a constant does not change the complexity, we may assume that $\lambda=\sqrt{\mu}$.
Then $f=[0,\lambda,0,\lambda^3,0,\dots,\lambda^{d-1},0]$ with $\lambda>0$.
We will assume $\lambda<1$.
This is because that if $\lambda=1$, then the problem is tractable exactly, (see, for example, \cite{CGW16})
and if $\lambda>1$, then taking its reversal makes $\lambda<1$.
We adopt the approximation-preserving reduction $\AP$ from \cite{DGGJ04},
and use $\leGadget$ to denote gadget reductions, which is a special form of $\AP$.

\begin{lemma}  \label{lem:PM4<=TypeIII}
  Let $d\ge 4$ be an even integer, and $0<\lambda<1$.
  If $f=[0,\lambda,0,\lambda^3,0,\dots,\lambda^{d-1},0]$ of arity $d$, then
  \[\Holant(\EO_4)\AP\Holant(f).\]
\end{lemma}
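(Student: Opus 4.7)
The plan is to reduce to the arity-$4$ case, then to build a ``thickened edge'' gadget whose equality-like signature has a contraction factor $t<1$, serially compose copies of it to drive $t$ exponentially small, and thereby make the $\lambda^3$-terms of $f$ negligible compared with the perfect matching count.

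I would first reduce to arity $4$. Applying a self-loop to an $f$-vertex produces an effective signature of arity $d-2$ whose $k$-th entry is $f_k+f_{k+2}$; for the $f$ at hand this equals $(1+\lambda^2)\cdot[0,\lambda,0,\lambda^3,\ldots,\lambda^{d-3},0]$, the same family with arity lowered by $2$. Iterating $(d-4)/2$ self-loops realises at each $f$-vertex the effective arity-$4$ signature $g=\lambda\cdot[0,1,0,\mu,0]$ with $\mu=\lambda^2\in(0,1)$. Hence it suffices to establish $\Holant(\EO_4)\AP\Holant(g)$. Next I would construct an equality-like edge gadget $H$ from two $g$-vertices joined by three parallel edges, with one external port left at each vertex. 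A short computation from $H(p,q)=\sum_{s=0}^{3}\binom{3}{s}g_{p+s}g_{q+s}$ gives $H=[a,0,b]$ with $a=3+\mu^2$ and $b=1+3\mu^2$; the crucial inequality is $t:=b/a<1$, which holds precisely because $\mu<1$. Serially composing $k$ copies of $H$ (identifying end-ports head-to-tail) multiplies the two entries, producing $H^{(k)}$ with signature $[a^k,0,b^k]$ and contraction factor $t^k$, exponentially small in $k$.

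Given a $4$-regular graph $G_0$ (an instance of $\Holant(\EO_4)$), I would form $G^{\ast}$ by replacing every edge of $G_0$ with $H^{(k)}$ and decorating every $g$-vertex with $(d-4)/2$ self-loops so that $G^{\ast}$ is a valid $d$-regular input to $\Holant(f)$. The equality-like structure of $H^{(k)}$ forces the two endpoints of any original edge $e$ to agree on some value $\tau(e)\in\{0,1\}$, contributing $a^k$ or $b^k$ accordingly; summing out the internal variables yields
\begin{align*}
Z(G^{\ast};f) \;=\; C_{d,n,k}\cdot a^{k\abs{E(G_0)}}\, t^{kn/2}\sum_{j\ge 0} M_j\,\rho^j,
\end{align*}
where $\rho=\mu t^k$ and $M_j$ counts odd-degree subgraphs of $G_0$ with exactly $j$ vertices of degree $3$; in a $4$-regular graph such a subgraph has $n/2+j$ edges, and $M_0$ equals $\PM(G_0)$.

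To finish the AP-reduction, choose $k=\operatorname{poly}(n,\log(1/\eps))$ so that $\rho\cdot 2^{\abs{E(G_0)}}\le\eps/4$, making the tail $\sum_{j\ge 1}M_j\rho^j$ negligible relative to any positive $M_0$. If $G_0$ admits no perfect matching (decidable in polynomial time by Edmonds' algorithm) output $0$; otherwise query the $\Holant(f)$ oracle on $G^{\ast}$ with precision $\eps/2$, divide by the exactly computable prefactor $C_{d,n,k}\cdot a^{k\abs{E(G_0)}}\, t^{kn/2}$, and return the result as a $(1\pm\eps)$-approximation of $\PM(G_0)$. The main obstacle is the gadget analysis: the entire reduction hinges on the strict inequality $t<1$, which is exactly where the hypothesis $\lambda<1$ is used.
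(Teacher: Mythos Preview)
Your proposal is correct and follows essentially the same strategy as the paper's proof: reduce to arity $4$ via self-loops, build a binary equality-like gadget whose ratio is strictly below $1$ precisely because $\lambda<1$, iterate it to drive the effective edge weight to zero, and then observe that weighted odd subgraphs with vanishing edge weight approximate perfect matchings. The paper first applies the holographic transformation $T=\trans{1}{0}{0}{\lambda}$ to pass to the bipartite form $\holant{[0,1,0,\dots,1,0]}{[1,0,\mu]}$, manufactures $[1,0,1]$ on the left via a small gadget, and then chains copies to realise $[1,0,\mu^n]$ on the right; you instead stay in the non-bipartite setting and obtain the contraction directly from the three-parallel-edge gadget $H=[3+\mu^2,0,1+3\mu^2]$. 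The two routes are interchangeable: your gadget is exactly what the paper's gadget becomes after undoing the holographic transformation, and your ratio $t=(1+3\mu^2)/(3+\mu^2)<1$ plays the same role as the paper's $\mu<1$. Your presentation is arguably more self-contained since it avoids the detour through \Cref{prop:holographic-transformation}, at the cost of carrying the scalar $\lambda$ through the gadget computation.
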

\begin{proof}
  Applying a holographic transformation by $T=\trans{1}{0}{0}{\lambda}$,
  we have that
  \begin{align*}
    \Holant(f) & \equiv \holant{f\cdot \tp{T^{-1}}^{\otimes d}}{T^{\otimes 2}\cdot=_2} \\
    & \equiv \holant{[0,1,0,1,0,\dots,1,0]}{[1,0,\mu]},
  \end{align*}
  where $0<\mu=\lambda^2<1$.
  Thus $\Holant(f)$ is to count the number of odd subgraphs with edge weight $\mu$ in a $d$-regular graph.
  Notice that doing a self-loop simply reduces the degree of a vertex by $2$,
  while leaving the constraint on the vertex still requires ``odd-degrees''.
  Thus, with enough self-loops, we may simulate a binary disequality $[0,1,0]$
  as well as an arity-$4$ signature $[0,1,0,1,0]$ on the left hand side of the bipartite Holant formulation.

  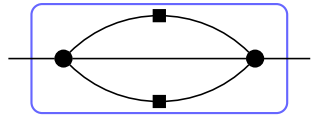
\begin{figure}[h]
    \centering
     \begin{tikzpicture}[scale=\scale,transform shape,node distance=\nodeDist,semithick]
      \node[external] (0)              {};
      \node[internal] (1) [right of=0] {};
      \node[external] (2) [right of=1] {};
      \node[external] (3) [right of=2] {};
      \node[internal] (4) [right of=3] {};
      \node[external] (5) [right of=4] {};
      \path (0) edge                          node[near end]   (e1) {}               (1)
            (1) edge[out= 45, in= 135]        node[square]     (e2) {}               (4)
                edge[out=  0, in= 180]                                               (4)
                edge[out=-45, in=-135]        node[square]     (e3) {}               (4)
            (4) edge                          node[near start] (e4) {}               (5);
      \begin{pgfonlayer}{background}
       \node[inner sep=2pt,transform shape=false,draw=\borderColor,thick,rounded corners,fit = (e1) (e2) (e3) (e4)] {};
      \end{pgfonlayer}
    \end{tikzpicture}
    \caption{A gadget for type III signatures, where squares are $[0,1,0]$, and circles are $[0,1,0,1,0]$. All edges are $[1,0,\mu]$.}
    \label{fig:gadget}
  \end{figure}

  Consider the gadget in \Cref{fig:gadget}.
  Then, it is easy to verify that the effective binary function is $(2\mu^2+2\mu^3)[1,0,1]$ on the left hand side.
  Finally, with $[1,0,1]$ on the left, we can form a path of length $n$, and the resulting binary function is $[1,0,\mu^n]$ on the right.
  More formally, we have the following chain of reductions:
  \begin{align*}
    \holant{[0,1,0,1,0,\dots,1,0]}{[1,0,\mu]} & \geGadget \holant{[0,1,0],[0,1,0,1,0]}{[1,0,\mu]} \\
    & \geGadget \holant{[1,0,1],[0,1,0,1,0]}{[1,0,\mu]} \\
    & \geGadget \holant{[0,1,0,1,0]}{[1,0,\mu^n]}.
  \end{align*}
  The last problem is counting odd subgraphs with $\mu^n$ edge weights in $4$-regular graphs and $\mu<1$.
  Now, one moment's reflection realises that odd subgraphs with exponentially small edge weights is approximately perfect matchings,
  which finishes the reduction.
\end{proof}

Similar ideas can also handle the last case in \Cref{lem:delta1}, after taking its reversal and renaming $\lambda$.

\begin{lemma}  \label{lem:PM3<=TypeIII}
  Let $d\ge 3$ be an odd integer, and $0<\lambda<1$.
  If $f=[0,\lambda,0,\lambda^3,0,\dots,\lambda^{d}]$ of arity $d$, then
  \[\Holant(\EO_3)\AP\Holant(f).\]
\end{lemma}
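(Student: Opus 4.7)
My plan is to mimic the strategy of \Cref{lem:PM4<=TypeIII}, adapted to odd arity and cubic graphs. First I would apply the holographic transformation by $T=\trans{1}{0}{0}{\lambda}$ to reach the equivalent bipartite problem $\holant{[0,1,0,1,\dots,0,1]}{[1,0,\mu]}$ with $\mu=\lambda^2\in(0,1)$ on $d$-regular graphs, which counts odd-degree subgraphs weighted by $\mu$ per selected edge. Then I would use self-loops (a right-vertex $[1,0,\mu]$ attached to two dangling half-edges of the same left-vertex) to reduce the arity from $d$ down to $3$; each self-loop contributes a factor of $(1+\mu)$ and preserves the alternating-zero pattern. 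This brings the problem to $\holant{[0,1,0,1]}{[1,0,\mu]}$ on cubic graphs, up to a known positive scalar.

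Next I would construct a left-side binary equality signature $[1,0,1]$ via the following gadget. Take an arity-$3$ vertex $A$ with signature $[0,1,0,1]$ and place a self-loop on two of its edges; the remaining edge then carries the unary ``pin-to-$1$'' signature $(1+\mu)[0,1]$. Connect this port via a $[1,0,\mu]$ right-vertex to one edge of a second arity-$3$ vertex $B$ with signature $[0,1,0,1]$. A direct calculation shows that the two remaining dangling edges of $B$ realize the binary signature $\mu(1+\mu)\cdot[1,0,1]$: the pinning forces the shared variable to value $1$, which in turn forces the remaining two variables to have even sum by $[0,1,0,1]$.

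With $[1,0,1]$ available, I would build a ``long edge'' right-side gadget by chaining $n$ copies of $[1,0,\mu]$ with $n-1$ interleaved $[1,0,1]$ equalities; this yields an effective binary signature $(\mu(1+\mu))^{n-1}\cdot[1,0,\mu^n]$ between the two endpoints. Given a cubic graph $G$ for $\Holant(\EO_3)$ on $N$ vertices, I would form the instance $H$ by assigning $[0,1,0,1]$ to each vertex of $G$ and replacing every edge of $G$ with this length-$n$ gadget (padding each left-vertex with extra self-loops to restore arity $d$ if $d>3$). Up to a known overall scalar,
\[
Z(H) \;\propto\; (\mu^n)^{N/2}\left(\PM(G)+\sum_{k\ge 1}(\mu^n)^k\, N_k\right),
\]
where $N_k\le\binom{3N/2}{N/2+k}\le 2^{3N/2}$ is the number of odd subgraphs of $G$ with $N/2+k$ edges. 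Picking $n=O(N+\log(1/\eps))$ forces the correction term to be at most $\eps\cdot\PM(G)$ when $\PM(G)\ge 1$; the case $\PM(G)=0$ is decidable in polynomial time via Edmonds' algorithm.

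The main obstacle is the construction of $[1,0,1]$ on the left: a single self-loop on an arity-$3$ signature reduces arity from $3$ to $1$, not $2$, so one cannot obtain a binary equality from a single $[0,1,0,1]$ vertex, in contrast to the even-$d$ case where the longer $[0,1,0,1,0]$ admits the shortcut gadget of \Cref{lem:PM4<=TypeIII}. The two-vertex construction above is the natural workaround, but one must verify carefully that the net effect really is a positive scalar multiple of $[1,0,1]$ and not some spurious binary signature. Once this is done, the rest of the argument (arity reduction, long-edge weight amplification, and the dominance of perfect-matching configurations when the effective edge weight is exponentially small) is routine and parallels \Cref{lem:PM4<=TypeIII}.
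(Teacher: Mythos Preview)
Your proposal is correct and follows essentially the same route as the paper. Both arguments apply the holographic transformation by $T=\trans{1}{0}{0}{\lambda}$ to reach $\holant{[0,1,0,1,\dots,1]}{[1,0,\mu]}$, use self-loops to extract the unary $[0,1]$ and the ternary $[0,1,0,1]$ on the left, connect $[0,1]$ through $[1,0,\mu]$ into $[0,1,0,1]$ to realise a scalar multiple of $[1,0,1]$, and then chain $[1,0,\mu]$'s with $[1,0,1]$'s to obtain $[1,0,\mu^n]$ on the right before invoking the ``exponentially small edge weight $\Rightarrow$ perfect matchings dominate'' argument; your write-up simply spells out the scalars and the final tail bound more explicitly than the paper does.
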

\begin{proof}
  As in the proof of \Cref{lem:PM4<=TypeIII},
  we do the same holographic transformation by $T=\trans{1}{0}{0}{\lambda}$:
  \begin{align*}
    \Holant(f) & \equiv \holant{[0,1,0,1,0,\dots,1]}{[1,0,\mu]},
  \end{align*}
  where $0<\mu=\lambda^2<1$.
  Once again, with sufficiently many self-loops,
  we get $[0,1,0,1]$ and $[0,1]$ on the left hand side.
  Connecting $[0,1]$ back to $[0,1,0,1]$ through $[1,0,\mu]$ yields $\mu[1,0,1]$ on the left.
  Thus, similar to the proof of \Cref{lem:PM4<=TypeIII}, we can simulate $[1,0,\mu^n]$ on the right.
  More formally, we have the following chain of reductions:
  \begin{align*}
    \holant{[0,1,0,1,0,\dots,1]}{[1,0,\mu]} & \geGadget \holant{[0,1],[0,1,0,1]}{[1,0,\mu]} \\
    & \geGadget \holant{[1,0,1],[0,1,0,1]}{[1,0,\mu]} \\
    & \geGadget \holant{[0,1,0,1]}{[1,0,\mu^n]} \\
    & \APge \Holant(\EO_3). \qedhere
  \end{align*}
\end{proof}

On the other hand, we have the following lemma.

\begin{lemma}  \label{lem:typeIII<=PM}
  Let $d\ge 3$ be an integer and $0<\lambda<1$.
  Let $f=[0,\lambda,0,\lambda^3,0,\dots]$ be a symmetric signature of arity $d$.
  Then \[\Holant(f)\AP\PM.\]
\end{lemma}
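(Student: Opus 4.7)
The plan is to apply the same holographic transformation $T=\trans{1}{0}{0}{\lambda}$ used in the proofs of \Cref{lem:PM4<=TypeIII} and \Cref{lem:PM3<=TypeIII}. After this transformation,
\[
  \Holant(f)\equiv\holant{[0,1,0,1,\ldots]}{[1,0,\mu]},\qquad\mu=\lambda^2\in(0,1),
\]
so that the partition function becomes $Z(G;f)=\sum_{\sigma}\mu^{|\sigma|}$, where $\sigma$ ranges over edge subsets of $G$ in which every vertex has odd degree.

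The key step is to realize this weighted odd-subgraph sum as a weighted perfect-matching count on an auxiliary graph $G'$. For each vertex $v$ of degree $d$ in $G$, we build a gadget $H_v$ with $d$ external half-edges whose perfect-matching signature in the boundary variable $x\in\{0,1\}^d$ realizes the odd-parity indicator $[0,1,0,1,\ldots]$ of arity $d$. For $d=2$ a path $v_1\text{-}v_0\text{-}v_2$ with external half-edges at $v_1,v_2$ realizes $[0,1,0]$; for $d=3$ the triangle $K_3$ with an external half-edge at each vertex realizes $[0,1,0,1]$, as can be verified by direct enumeration and was already implicitly used in the proof of \Cref{lem:PM3<=TypeIII}. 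For larger $d$, such gadgets can be built by iteratively combining these basic building blocks via port identification; the relevant matchgate identities (for example $f_0f_4=f_2^2$ when $d=4$) hold trivially for the odd-parity signature because the even-indexed entries all vanish, so realizations continue to exist.

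Identifying the external half-edges of each $H_v$ with the endpoints of original $G$-edges, and assigning weight $\mu$ to each such edge, produces a weighted graph $G'$ whose total weighted perfect-matching sum equals $Z(G;f)$. Since weighted PM counting with positive rational weights AP-reduces to unweighted $\PM$ via the standard edge-multiplication and rational-approximation techniques (as in \cite{DGGJ04}), this yields $\Holant(f)\AP\PM$. The main obstacle is giving a uniform construction of the parity gadget $H_v$ with \emph{non-negative} edge weights for every arity $d$: small arities are immediate, but the general case requires an inductive composition argument that maintains positivity of all edge weights throughout, so that the gadget's PM count (rather than a signed Pfaffian) realizes the odd-parity signature.
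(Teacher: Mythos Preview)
Your approach is essentially the paper's: apply the same diagonal holographic transformation, then realise both the odd-parity vertex constraint and the weighted edge $[1,0,\mu]$ using gadgets built from the Exact-One family $\+EO$, so that the whole instance becomes a perfect-matching instance.

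The gap you flag as the ``main obstacle'' is in fact easy, and the paper dispatches it in one picture. Having built $[0,1,0,1]$ from the triangle of $\EO_3$'s (your $K_3$ gadget, which is exactly the paper's construction), one simply places $d$ copies of this ternary gadget along a path, joining consecutive copies by a single edge; the result is a signature of arity $d+2$ that equals $[0,1,0,\dots,1,0]$ when $d$ is odd and $[1,0,1,\dots,0,1]$ when $d$ is even (in the even case, cap one dangling edge with $[0,1,0]$ to shift the parity). All vertices involved are Exact-One vertices with unit weights, so positivity is automatic and no delicate inductive bookkeeping is required. Your appeal to matchgate identities is a red herring here: those identities characterise \emph{planar} matchgate realisability, whereas nothing in this reduction needs planarity, and the direct combinatorial construction above avoids that machinery entirely.

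For the weighted edge $[1,0,\mu]$, the paper gives an explicit $\+EO$ gadget realising $[1,0,n_2/n_1]$ for arbitrary positive integers $n_1,n_2$ (two $\EO_2$ vertices joined to two hubs by $n_1$ and $n_2$ parallel edges respectively), which is the concrete incarnation of the ``rational-approximation technique'' you invoke. With both pieces in hand the reduction chain is
\[
  \holant{[0,1,0,1,\dots]}{[1,0,\mu]}\ \leGadget\ \Holant\bigl(\{[0,1,0,1,\dots],[1,0,\mu]\}\bigr)\ \AP\ \PM,
\]
matching the paper's proof.
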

\begin{proof}
  First, by the same holographic transformations as in the proofs of \Cref{lem:PM4<=TypeIII} and \Cref{lem:PM3<=TypeIII},
  \[\Holant(f)\equiv\holant{[0,1,0,1,0,\dots]}{[1,0,\mu]},\]
  where $\mu=\lambda^2>0$.

  Consider the gadget in \Cref{fig:gadget-eq}, where all vertices are the ``exact one'' function, namely $[0,1,0,\dots,0]$.
  It is easy to see that this gadget is equivalent to a weighted equality $[1,0,\frac{n_2}{n_1}]$.
  Thus we can use it to arbitrarily closely approximate $[1,0,\mu]$ by tuning $n_1$ and $n_2$ for any $\mu>0$.

  \begin{figure}[h]
  \begin{minipage}{.55\textwidth}
    \centering
     \begin{tikzpicture}[scale=\scale,transform shape,node distance=\nodeDist,semithick]
      \node[external] (0)              {};
      \node[internal] (1) [right of=0, label=90:{\Large $u$}] {};
      \node[internal] (2) [right of=1, label=90:{\Large $u'$}] {};
      \node[internal] (3) [right of=2, label=90:{\Large $v'$}] {};
      \node[internal] (4) [right of=3, label=90:{\Large $v$}] {};
      \node[external] (5) [right of=4] {};
      \path (0) edge                          node[near end]   (e1) {}               (1)
            (1) edge[out= 45, in= 135]        (2)
                edge[out= 20, in= 180, opacity=0]        node[opacity=1]{\Large \vdots}(2)
                edge[out=-45, in=-135]        (2)
            (2) edge[out= 60, in= 120]        node (e2) {} (3)
                edge[out= 30, in= 150]                     (3)
                edge[out= -30, in= 150, opacity=0]        node [opacity=1]{\Large \vdots} (3)
                edge[out=-60, in=-120]        node (e3) {} (3)
            (3) edge (4)
            (4) edge                          node[near start] (e4) {}               (5);
      \begin{pgfonlayer}{background}
        \node[inner sep=2pt,transform shape=false,draw=\borderColor,thick,rounded corners,fit = (e1) (e2) (e3) (e4)] {};
      \end{pgfonlayer}
    \end{tikzpicture}
    \caption{A gadget for weighted equalities. There are $n_1$ edges between $u$ and $u'$, and $n_2$ edges between $u'$ and $v'$.}
    \label{fig:gadget-eq}
  \end{minipage}%
  \begin{minipage}{.45\textwidth}
    \centering
     \begin{tikzpicture}[scale=\scale,transform shape,node distance=\nodeDist,semithick]
        \draw (90:1cm) node [internal] (v1) {};
        \draw (210:1cm) node [internal] (v2) {};
        \draw (330:1cm) node [internal] (v3) {};
        \draw (90:2.15cm) node [external] (u1) {};
        \draw (210:2.15cm) node [external] (u2) {};
        \draw (330:2.15cm) node [external] (u3) {};
        \draw (v1) -- (v2);
        \draw (v2) -- (v3);
        \draw (v3) -- (v1);
        \foreach \x in {1,2,3}
        {
          \draw (v\x) -- (u\x);
        }

      \begin{pgfonlayer}{background}
       \node[inner sep=4pt,transform shape=false,draw=\borderColor,thick,rounded corners,fit = (v1) (v2) (v3)] {};
      \end{pgfonlayer}
    \end{tikzpicture}
    \caption{A gadget to create $[0,1,0,1]$.}
    \label{fig:gadget-odd}
  \end{minipage}
  \end{figure}

  In addition, consider the gadget in \Cref{fig:gadget-odd}, where, once again, all vertices are $[0,1,0,0]$.
  The resulting signature is $[0,1,0,1]$.

  \begin{figure}[h]
     \begin{tikzpicture}[scale=\scale,transform shape,node distance=\nodeDist,semithick]
        \draw (0,0) node [internal] (v1) {};
        \draw (1.5,0) node [internal] (v2) {};
        \draw (3,0) node [external] (v3) {\Huge \dots};
        \draw (4.5,0) node [internal] (v4) {};
        \draw (6,0) node [internal] (v5) {};
        \draw (-1,1) node [external] (e11) {};
        \draw (-1,-1) node [external] (e12) {};
        \draw (1.5,1) node [external] (e2) {};
        \draw (4.5,1) node [external] (e4) {};
        \draw (7,1) node [external] (e51) {};
        \draw (7,-1) node [external] (e52) {};

        \draw (v1) -- (v2);
        \draw (v2) -- (v3);
        \draw (v3) -- (v4);
        \draw (v4) -- (v5);

        \draw (v1) -- (e11);
        \draw (v1) -- (e12);
        \draw (v2) -- (e2);
        \draw (v4) -- (e4);
        \draw (v5) -- (e51);
        \draw (v5) -- (e52);
      \begin{pgfonlayer}{background}
       \node[inner sep=4pt,transform shape=false,draw=\borderColor,thick,rounded corners,fit = (v1) (v5)] {};
      \end{pgfonlayer}
    \end{tikzpicture}
    \caption{A gadget to create $[1,0,1,0,\dots,1]$ or $[0,1,0,1,\dots,0]$. }
    \label{fig:gadget-odd2}
  \end{figure}
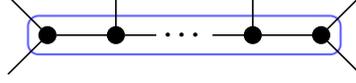

  A simple calculation verifies that a sequence of $d$ signatures $[0,1,0,1]$ connected together, as in \Cref{fig:gadget-odd2},
  yields a signature $[0,1,0,\dots,1,0]$ of arity $d+2$ if $d$ is odd,
  or a signature $[1,0,1,0,\dots,1]$ of arity $d+2$ if $d$ is even.
  In the even case, to get $[0,1,0,1,\dots,0]$, we simply connect one of its dangling edges with $[0,1,0]$.
  Formally, we have the following sequence of reductions:
  \begin{align*}
    \holant{[0,1,0,1,0,\dots]}{[1,0,\mu]} & \leGadget \Holant(\{[0,1,0,1,0,\dots],[1,0,\mu]\}) \\
    & \le_{\texttt{AP}} \PM. \qedhere
  \end{align*}
\end{proof}

\Cref{lem:PM4<=TypeIII}, \Cref{lem:PM3<=TypeIII}, and \Cref{lem:typeIII<=PM} together imply the following:
\begin{align}  \label{eqn:PM4-TypeIII-PM}
  \Holant(\EO_4)&\AP\Holant(f)\AP\PM, \text{ if $d$ is even,}\\
  \Holant(\EO_3)&\AP\Holant(f)\AP\PM, \text{ if $d$ is odd,} \label{eqn:PM3-TypeIII-PM}
\end{align}
where $f=[0,\lambda,0,\lambda^3,0,\dots]$ for some $0<\lambda<1$ has arity $d\ge 3$.
Note that $\Holant(\EO_3)$ or $\Holant(\EO_4)$ is just an alias of counting perfect matchings in $3$- or $4$-regular graphs,
which is equivalent to $\PM$ in approximation.
This is a folklore fact, and is shown in the next couple of lemmas.

\begin{lemma}  \label{lem:EO3<=EO4}
  $\Holant(\EO_3)\AP\Holant(\EO_4)$.
\end{lemma}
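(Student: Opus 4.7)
The plan is to give an exact polynomial-time reduction from $\Holant(\EO_3)$ to $\Holant(\EO_4)$ (which is in particular an AP-reduction), by attaching a small constant-size gadget built entirely from $\EO_4$ vertices to each pair of vertices of the input graph.

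The core construction is a ``square gadget'' consisting of four $\EO_4$ vertices $v_1, v_2, v_3, v_4$ arranged in a $4$-cycle, with self-loops on $v_2, v_3, v_4$ and two dangling edges attached to $v_1$, so that every internal vertex has degree $4$. A self-loop at an $\EO_4$ vertex must carry value $0$ (otherwise it would already contribute two $1$'s to the ``exactly one'' constraint), which in turn forces the disequality signature $[0,1,0]$ on the remaining two edges of that vertex. Propagating these three disequalities around the $4$-cycle and then imposing the $\EO_4$ constraint at $v_1$, a direct check shows that both dangling edges at $v_1$ are forced to $0$ and that there are exactly two consistent internal configurations in this forced state; equivalently, the effective binary signature on the two dangling edges is $[2,0,0]$. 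To use it, pair up the vertices of the input $3$-regular graph $G$ arbitrarily into $n/2$ pairs (possible since every cubic graph has an even number of vertices), and for each pair $\{u,v\}$ introduce a fresh copy of the square gadget with its two dangling edges connected to $u$ and $v$, respectively. The resulting graph $G'$ is $4$-regular; since the gadget forces both new edges incident to $u$ and $v$ to $0$, the $\EO_4$ constraint at each original vertex of $G$ restricts to $\EO_3$ on its three original edges, and each gadget contributes a multiplicative factor of $2$. Therefore
\[
  \PM(G') \;=\; 2^{n/2}\cdot \PM(G),
\]
and $\PM(G)$ is recovered by division, yielding the required reduction.

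The main obstacle (and the reason the construction must act on pairs of vertices rather than on individual vertices) is the nonexistence of a single-edge ``force-to-zero'' pendant gadget built from $\EO_4$ alone: any such gadget has even total arity, because each $\EO_4$ vertex contributes four slots and internal edges or self-loops always consume slots in pairs. The square gadget circumvents this parity obstruction by being a $2$-dangling gadget attached to two original vertices at once, and verifying its external signature $[2,0,0]$ is the key computational step in the proof.
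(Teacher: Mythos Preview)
Your proof is correct. The square gadget does realise the binary signature $[2,0,0]$ as you claim, the pairing argument is sound (a cubic graph has an even number of vertices since $3n=2m$), and the resulting identity $\PM(G')=2^{n/2}\PM(G)$ gives an exact polynomial-time reduction, hence certainly an AP-reduction.

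The paper's proof follows the same core idea---build a binary ``force both edges to $0$'' gadget out of $\EO_4$ vertices and use it to raise the degree of every original vertex from $3$ to $4$---but differs in two implementation details. First, its gadget is smaller: a single self-loop on an $\EO_4$ vertex already gives $[0,1,0]$, and connecting both ends of that to a fresh $\EO_4$ vertex yields $[2,0,0]$ with only two internal vertices rather than four. Second, instead of pairing vertices within one copy of $G$, the paper takes two disjoint copies $G$ and $G'$ and attaches the gadget between each vertex and its mirror image; the partition function of the resulting $4$-regular instance is then (a constant times) $\PM(G)^2$, and one recovers $\PM(G)$ by an integer square root. Your single-copy pairing is arguably cleaner, since it avoids the squaring/rooting step, while the paper's two-copy trick has the mild aesthetic advantage of a canonical pairing. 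Either way the reductions are equivalent in strength, and your parity remark explaining why no arity-$1$ pendant gadget exists is a nice justification for why a binary gadget is the right object here.
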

\begin{proof}
  Note that a self-loop on $[0,1,0,0,0]$ gives $[0,1,0]$,
  and connecting it back to $[0,1,0,0,0]$ yields $[1,0,0]$.
  Thus,
  \begin{align*}
    \Holant([0,1,0,0,0],[1,0,0])\leGadget \Holant([0,1,0,0,0]).
  \end{align*}
  Given an instance $G$ (namely a $3$-regular graph) of $\Holant([0,1,0,0])$,
  consider a disjoint union of $G$ and its copy $G'$.
  We add a new vertex $u$ for each pair $v$ and $v'$, and connect $u$ to both $v$ and $v'$.
  Now all original vertices in $G$ and $G'$ have degrees exactly $4$.
  Put $[0,1,0,0,0]$ on all these vertices, and $[1,0,0]$ on all new vertices.
  It is easy to see that the partition function of this new instance is the square of the number of perfect matchings of $G$.
  Thus, we have the following reduction chain:
  \begin{align*}
    \Holant([0,1,0,0]) & \AP \Holant([0,1,0,0,0],[1,0,0]) \\
    & \leGadget \Holant([0,1,0,0,0]). \qedhere
  \end{align*}
\end{proof}

However, approximate counting perfect matchings in $3$-regular graphs is as hard as that in general graphs.

\begin{lemma}  \label{lem:PM<=EO3}
  $\PM\AP\Holant(\EO_3)$.
\end{lemma}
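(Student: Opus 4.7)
The plan is to give an approximation-preserving gadget reduction: given an arbitrary graph $G$, construct a $3$-regular graph $G'$ such that $\#\PM(G') = c(G)\cdot \#\PM(G)$ for some explicitly computable positive factor $c(G)$, and then recover $\#\PM(G)$ from a single oracle query for $\Holant(\EO_3)$ on $G'$.

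To build $G'$, for every vertex $v$ of degree $d_v\neq 3$ in $G$, replace $v$ with a finite gadget $H_d$ (with $d=d_v$): a $3$-regular graph equipped with $d$ distinguished dangling half-edges (``ports'') whose \emph{port signature} --- the function on subsets $S\subseteq [d]$ returning the number of internal perfect matchings of $H_d$ in which the ports in $S$ are left for external matching --- realizes $c_d\cdot \EO_d$ for some positive constant $c_d$ depending only on $d$. The required gadgets $H_d$ can be constructed for every $d\ge 1$: for $d=1$ and $d=2$, one can exhibit explicit small $3$-regular graphs with one or two ports whose port signatures equal $[0,c_1]$ and $c_2[0,1,0]$, respectively (using copies of $K_4$ with appropriate subdivisions, with parity chosen so that internal PMs exist only when exactly one port is used); for $d\ge 4$, one inductively obtains $H_{d+1}$ from $H_d$ by attaching a constant-size $3$-regular block containing a single arity-$3$ ``$\EO_3$-vertex'' that propagates and re-expands the exact-one constraint by one arity. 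After replacement, wire the $d_v$ ports of the gadget at $v$ to the other endpoints (or to the gadgets at the other endpoints) of the original edges incident to $v$.

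By construction, $G'$ is $3$-regular. A perfect matching of $G'$ factors as (i) an assignment at each gadget specifying which of its ports is matched externally, and (ii) a consistent internal PM. By the exact-one property of the port signatures, condition (i) forces the external edges used by gadgets to form a perfect matching of $G$, and condition (ii) contributes a factor of $c_{d_v}$ at each vertex $v$. Summing yields
\[
  \#\PM(G') \;=\; \Bigl(\prod_{v\in V} c_{d_v}\Bigr)\cdot \#\PM(G),
\]
with $c_3\defeq 1$, and the product is computable in polynomial time since each $c_d$ depends only on the local gadget. This furnishes the AP reduction, completing the proof. The principal technical obstacle is exhibiting the family of gadgets $\{H_d\}_{d\ge 1}$ realizing the exact-one signature exactly (in particular handling the parity/realizability issues for small $d$); once these are in hand, the rest of the argument is purely bookkeeping.
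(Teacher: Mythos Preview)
Your proposal has a genuine gap: the gadgets $H_d$ realising $c_d\cdot\EO_d$ from $\EO_3$ alone do \emph{not} exist for any even $d$, so the scheme cannot handle vertices of even degree in $G$. The obstruction is parity. In any gadget built from $n$ vertices each carrying $\EO_3$, with $d$ dangling ports and port assignment $x\in\{0,1\}^d$, summing the ``exactly one matched edge'' constraint over all vertices gives
\[
  n \;=\; 2\cdot(\text{\# internal matched edges}) + |x|,
\]
so the port signature vanishes unless $|x|\equiv n\pmod 2$. On the other hand the handshake identity $3n = 2\cdot(\text{\# internal edges}) + d$ forces $n\equiv d\pmod 2$. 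Hence the signature is supported only on weights of the same parity as $d$, and for even $d$ (already $d=2$) no gadget can have $f_1>0$. Your ``$K_4$ with subdivisions'' suggestion does not escape this: subdividing creates degree-$2$ vertices, which would need $\EO_2$, not $\EO_3$. Your inductive step from $H_d$ to $H_{d+1}$ likewise cannot work as stated; connecting an $\EO_3$ vertex directly to a port of $H_d$ produces the signature $c_d\cdot[\,|x|=|y|\in\{0,1\}\,]$, not $\EO_{d+1}$.

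The paper's proof meets exactly this obstruction and resolves it by a squaring trick rather than a pure gadget reduction. It first builds $\EO_{d+1}$ from $\EO_d$ and $\EO_3$ using a degree-$2$ vertex carrying $\EO_2=[0,1,0]$ in between (this is what makes the induction go through), so that $\PM\leGadget\Holant(\EO_3,[0,1,0])$. Since $[0,1,0]$ is $\EO_3$ with one input pinned to $0$, it then suffices to realise the unary $[1,0]$; but $[1,0]$ has the wrong parity too. Instead the paper realises the \emph{binary} $[1,0,0]$ from $\EO_3$ (a self-loop gives $[0,1]$; feeding $[0,1]$ back into $\EO_3$ gives $[1,0,0]$), takes two disjoint copies of the instance, and connects corresponding pin-sites by $[1,0,0]$, squaring the partition function. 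This single $\AP$ (non-gadget) step is precisely the missing idea in your argument.
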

\begin{proof}
  Consider the gadget in \Cref{fig:gadget-EOd}.

  \begin{figure}[h]
     \begin{tikzpicture}[scale=\scale,transform shape,node distance=\nodeDist,semithick]
        \draw (0,0) node [internal] (v1) {};
        \draw (1.5,0) node [internal] (v2) {};
        \draw (3,0) node [internal] (v5) {};
        \draw (-1,1) node [external] (e11) {};
        \draw (-1,-1) node [external] (e12) {};
        \draw (4,1) node [external] (e51) {};
        \draw (4,-1) node [external] (e52) {};

        \draw (v1) -- (v2);
        \draw (v2) -- (v5);

        \draw (v1) -- (e11);
        \draw (v1) -- (e12);
        \draw (v5) -- (e51);
        \draw (v5) -- (e52);
      \begin{pgfonlayer}{background}
       \node[inner sep=4pt,transform shape=false,draw=\borderColor,thick,rounded corners,fit = (v1) (v5)] {};
      \end{pgfonlayer}
    \end{tikzpicture}
    \caption{A gadget to create $\EO_d$.}
    \label{fig:gadget-EOd}
  \end{figure}

  Notice that if we put $[0,1,0,0]$ on the two degree three vertices, and $[0,1,0]$ on the middle vertex,
  the resulting signature is $[0,1,0,0,0]$.
  More generally, if we replace one of the degree three vertex by $\EO_d$,
  then the resulting signature is $\EO_{d+1}$.
  Namely, using this gadget, we can simulate the whole set of $\+EO$, and
  \begin{align*}
    \PM & \leGadget \Holant([0,1,0,0],[0,1,0]).
  \end{align*}

  Moreover, a self-loop on $[0,1,0,0]$ gives $[0,1]$,
  and connecting back to it gives $[1,0,0]$.
  By using the same squaring trick in \Cref{lem:EO3<=EO4},
  we can use $[1,0,0]$ as $[1,0]$.
  Thus, we have the following reduction chain:
  \begin{align*}
    \PM & \leGadget \Holant([0,1,0,0],[0,1,0]) \\
    & \leGadget \Holant([0,1,0,0],[1,0]) \\
    & \AP \Holant([0,1,0,0],[1,0,0]) \\
    & \leGadget \Holant([0,1,0,0]). \qedhere
  \end{align*}
\end{proof}

Holant problems defined by type II signatures are counting perfect matchings in $d$-regular graphs.
Clearly, by doing sufficiently many self-loops, either $\Holant(\EO_3)\AP\Holant(\EO_d)$ or $\Holant(\EO_4)\AP\Holant(\EO_d)$,
depending on the parity of $d$.
Thus, combining this fact with \Cref{lem:EO3<=EO4}, \Cref{lem:PM<=EO3}, \eqref{eqn:PM4-TypeIII-PM} and \eqref{eqn:PM3-TypeIII-PM},
we have the following result.

\begin{lemma}  \label{lem:TypeII-TypeIII}
  Let $f=[0,1,0,\lambda^2,0,\dots]$ for some $0\le\lambda<1$.
  Then \[\Holant(f)\equiv_{\textnormal{\texttt{AP}}}\PM.\]
\end{lemma}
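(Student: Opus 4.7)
The plan is to split into two cases according to whether $\lambda=0$ (type II) or $\lambda>0$ (type III), and then stitch together the reductions already proved.

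For the type II case, $f$ is just $\EO_d=[0,1,0,\dots,0]$. The main observation is that attaching a self-loop to $\EO_d$ yields $\EO_{d-2}$: the loop sums over the two assignments $(0,0)$ and $(1,1)$, and only $(0,0)$ contributes, leaving the constraint ``exactly one $1$ among the remaining $d-2$ inputs''. Iterating this down to arity $3$ or $4$ according to the parity of $d$ gives a gadget reduction $\Holant(\EO_3)\leGadget\Holant(\EO_d)$ if $d$ is odd and $\Holant(\EO_4)\leGadget\Holant(\EO_d)$ if $d$ is even. The converse direction is immediate because an instance of $\Holant(\EO_d)$ is a restricted instance of $\PM$ with a fixed degree. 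Combining with Lemma~\ref{lem:EO3<=EO4} and Lemma~\ref{lem:PM<=EO3} we obtain $\Holant(\EO_d)\equiv_{\texttt{AP}}\PM$ for all $d\ge 3$.

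For the type III case, $f=[0,\lambda,0,\lambda^3,0,\dots]$ with $0<\lambda<1$. Lemma~\ref{lem:typeIII<=PM} already gives the easy direction $\Holant(f)\AP\PM$. For the hard direction, apply Lemma~\ref{lem:PM3<=TypeIII} if the arity $d$ is odd or Lemma~\ref{lem:PM4<=TypeIII} if $d$ is even to obtain $\Holant(\EO_3)\AP\Holant(f)$ or $\Holant(\EO_4)\AP\Holant(f)$. Lemma~\ref{lem:EO3<=EO4} and Lemma~\ref{lem:PM<=EO3} tell us $\Holant(\EO_3)\equiv_{\texttt{AP}}\Holant(\EO_4)\equiv_{\texttt{AP}}\PM$, so in either parity we conclude $\PM\AP\Holant(f)$.

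Transitivity of $\AP$ then gives $\Holant(f)\equiv_{\texttt{AP}}\PM$ in both cases, proving the lemma. There is no real obstacle here: essentially all the technical work has been done in the preceding lemmas, and what remains is a short case analysis together with the routine self-loop gadget to handle arbitrary arity in the type II case.
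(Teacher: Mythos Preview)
Your proposal is correct and follows essentially the same route as the paper: the paper's proof is just the one-line remark that the result follows by combining the self-loop observation for $\EO_d$ with \Cref{lem:EO3<=EO4}, \Cref{lem:PM<=EO3}, and the chains \eqref{eqn:PM4-TypeIII-PM}, \eqref{eqn:PM3-TypeIII-PM}, and you have spelled this out explicitly. One small notational point: the signature in the statement is $[0,1,0,\lambda^2,0,\dots]$ while Lemmas~\ref{lem:PM4<=TypeIII}--\ref{lem:typeIII<=PM} are phrased for $[0,\lambda,0,\lambda^3,0,\dots]$, so you should remark that these differ only by the nonzero scalar $\lambda$ and hence define the same Holant problem.
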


Notice that in \Cref{lem:TypeII-TypeIII} we manipulate the form a little bit so that it cover type II and type III in \Cref{lem:exceptions},
as well as the last case in \Cref{lem:delta1}.

\section{Proof of main theorems}

We are now ready to assemble all the ingredients to prove our main theorems. We restate
\Cref{thm:main} for convenience.

{\renewcommand{\thetheorem}{\ref{thm:main}}
  \begin{theorem}
    Let $f=[f_0,f_1,\dots,f_{d}]$ be a symmetric constraint function of arity $d\ge 3$ satisfying generalised second-order recurrences,
    and $f_i\ge 0$ for all $0\le i\le d$.
    There is a fully polynomial-time (deterministic or randomised) approximation algorithm for $\Holant(f)$,
    unless, up to a non-zero factor, $f$ or its reversal is in one of the following form:
    \begin{itemize}
      \item $[0,\lambda\sin\frac{\pi}{d},\lambda^{2}\sin\frac{2\pi}{d},\dots,\lambda^{i}\sin\frac{i\pi}{d},\dots,0]$ for some $\lambda>0$;
      \item $[0,1,0,\lambda,0,\dots,0,\lambda^{\frac{d-2}{2}},0]$ if $d$ is even, or $[0,1,0,\lambda,0,\dots,0,\lambda^{\frac{d-1}{2}}]$ if $d$ is odd, for some $0\le\lambda < 1$.
    \end{itemize}
    Moreover, in the latter case, approximating $\Holant(f)$ is equivalent to approximately counting perfect matchings in general graphs.
  \end{theorem}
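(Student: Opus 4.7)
The plan is to assemble the lemmas proved in the preceding sections into a case analysis driven by the classification of signatures satisfying a second-order recurrence. Up to taking the reversal $\overline{f}$ (which preserves the complexity of $\Holant(f)$ by \Cref{prop:orthogonal}), I would first split on whether $f_0>0$ or $f_0=f_d=0$. In the latter case, \Cref{lem:exceptions} says that $f$ must be one of three types: type I is exactly the first exceptional form listed in the theorem, while types II and III are both of the form $[0,1,0,\lambda^2,0,\ldots]$ (up to a scalar), and \Cref{lem:TypeII-TypeIII} shows that in this case $\Holant(f) \equiv_{\textnormal{\texttt{AP}}} \PM$, giving the equivalence with counting perfect matchings claimed in the theorem.

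For the main case $f_0>0$, we have $f\in\+F^*_{a,b,c}$ for some $(a,b,c)\ne(0,0,0)$, and I would split into three subcases according to the sign of the discriminant $b^2-4ac$. If $b^2-4ac<0$, then \Cref{lem:delta-1} shows directly that $P_f(z)$ is $H_\eps$-stable for some $\eps>0$, so \Cref{thm:roots-Holant} yields an FPTAS. If $b^2-4ac=0$, then \Cref{lem:delta0} produces an orthogonal matrix $M\in\*O_2(\=C)$ such that $P_{fM^{\otimes d}}(z)$ is $H_\eps$-stable; by \Cref{prop:orthogonal}, $\Holant(f)$ and $\Holant(fM^{\otimes d})$ have the same partition function on every $d$-regular graph, so \Cref{thm:roots-Holant} again delivers an FPTAS.

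The subcase $b^2-4ac>0$ is the richest and requires invoking all four alternatives of \Cref{lem:delta1}. Three of them produce an efficient algorithm: either $\Holant(f)$ is already polynomial-time solvable (the degenerate case), or there is an invertible $M$ making the problem equivalent to a ferromagnetic Ising model without external fields (for which an FPRAS exists by Jerrum-Sinclair \cite{JS93}), or an orthogonal $M$ makes $P_{fM^{\otimes d}}$ or $P_{\overline f M^{\otimes d}}$ be $H_\eps$-stable, so one more application of \Cref{prop:orthogonal} and \Cref{thm:roots-Holant} gives an FPTAS. The remaining alternative is that $f$ or $\overline f$ has the form $[1,0,\lambda^2,0,\ldots,\lambda^{d-1},0]$ with $\lambda>1$ and odd $d$; after taking the reversal and renaming $\lambda\mapsto 1/\lambda$ this is exactly the $d$-odd instance of the second exceptional form in the theorem, and \Cref{lem:TypeII-TypeIII} again gives $\Holant(f)\equiv_{\textnormal{\texttt{AP}}}\PM$.

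The main obstacle is verifying that the exceptional cases produced by the various lemmas line up exactly with the two bullet points in the theorem statement, rather than leaving residual cases unaccounted for. Concretely, one has to check that the ``fourth alternative'' of \Cref{lem:delta1} really collapses (after reversal and rescaling) into the second bullet of the theorem, and that no non-reversible case slips out of the dichotomy. The Holant-preserving properties recorded in \Cref{prop:holographic-transformation} and \Cref{prop:orthogonal} (for the orthogonal transformations used in \Cref{lem:delta1}, \Cref{lem:delta0}) are what make the whole bookkeeping coherent, and once this matching is established the theorem follows.
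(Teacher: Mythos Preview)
Your proposal is correct and follows essentially the same route as the paper: both proofs assemble \Cref{lem:delta1}, \Cref{lem:delta0}, \Cref{lem:delta-1}, and \Cref{lem:exceptions} into a case analysis, dispatch the tractable cases via \Cref{thm:roots-Holant} (together with \Cref{prop:orthogonal}) or the Jerrum--Sinclair FPRAS, and funnel the remaining cases into the two exceptional forms with \Cref{lem:TypeII-TypeIII} establishing the $\PM$-equivalence. The only small omission is that type~III in \Cref{lem:exceptions} allows $\mu>0$ arbitrary, so you should note (as the paper does) that $\mu=1$ is exactly tractable and $\mu>1$ is handled by passing to the reversal; also, your ``renaming $\lambda\mapsto 1/\lambda$'' in the odd-arity case should really be $\lambda\mapsto 1/\lambda^2$ to match the theorem's parametrisation, but this is cosmetic.
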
 \addtocounter{theorem}{-1} }
\begin{proof}
  We apply \cref{lem:delta1}, \cref{lem:delta0} and \cref{lem:delta-1}. Then one of followings must
  happen
  \begin{enumerate}
    \item \label{case:f0=fd=0} $f_0=f_d=0$; or
    \item \label{case:odd} $f$ or $\ol f$ is $[1,0,\lambda^2,0,\lambda^4,0,\dots]$ for some $\lambda > 1$ and has an odd arity; or
    \item \label{case:tractable} $\Holant\tp{f}$ can be solved exactly in polynomial-time; or
    \item \label{case:ferro} there is an invertible matrix $M\in\=C^{2\times 2}$ such that
      $\holant{fM^{\otimes d}}{\tp{M^{-1}}^{\otimes 2}\cdot\tp{=_2}}$ is a ferromagnetic two-spin
      system; or
    \item \label{case:zeros} there is an orthogonal matrix $M\in \*O_2(\=C)$ such that either
      $P_{f\cdot M^{\otimes d}}\tp{z}$ or $P_{\ol f\cdot M^{\otimes d}}\tp{z}$ is
      $H_{\varepsilon}$-stable for some $\varepsilon>0$, where $\ol{f}$ is the reversal of $f$.
  \end{enumerate}

  We are done in \Cref{case:tractable}, as well as in \Cref{case:zeros} by \Cref{prop:orthogonal} and \Cref{thm:roots-Holant}.
  In \Cref{case:ferro}, we invoke the FPRAS by Jerrum and Sinclair~\cite{JS93}.
  In \Cref{case:f0=fd=0} and \Cref{case:odd},
  we are in the desired form of the theorem by \Cref{lem:exceptions}.
  (In case $\mu>1$ in \Cref{lem:exceptions}, we can take its reversal so that $\mu<1$,
  and if $\mu=1$, then exact counting is tractable \cite{CGW16}.)
  Finally, the approximation complexity of $[0,1,0,\lambda,0,\lambda^2,0,\dots]$ signatures is handled in \Cref{lem:TypeII-TypeIII}.
\end{proof}

\begin{remark}
  It is worth noticing that our algorithm applies beyond regular graphs. In fact, for any finite
  family of signatures $\+F$, we can define $\Holant\tp{\+F}$ as the problem of computing the
  partition function on a graph $G$, where each vertex $v$ of $G$ is associated with a function
  $f_v\in\+F$. It is straightforward to adapt the algorithm described in the proof of
  \cref{thm:main} for one to solve $\Holant\tp{\+F}$\footnote{The main adaptation is to show that
    $Z_i(G)$ is still a BIGCP when more than one constraint function are present. Since $\+F$ is
    finite, we can therefore view functions in $\+F$ as colors and enumerate \emph{vertex colored
      induced subgraphs} instead of ordinary induced subgraphs in the proof of
    \cref{lem:Holant-BIGCP}. Similar technique already appreared in~\cite{PR17a}}. It is not hard to
  see the adaptation provides an efficient approximation algorithm for $\Holant\tp{\+F}$ as long as
  there exists an orthogonal matrix $M\in\*O_2(\=C)$ and $\eps>0$ such that $P_{f\cdot M^{\otimes d}}$ is
  $H_\eps$-stable for every $f\in\+F$, where $d$ is the arity of $f$. For example, we can let $\+F$ be the family of
  signatures for matchings up to arity $d$, or the family of signatures for edge covers up to arity $d$.
  Therefore, our algorithm recovers a number of previously known deterministic approximation algorithms for special
  cases of Holant problems, such as counting matchings \cite{BGKNT07,PR17a} and counting edge covers in bounded degree graphs \cite{LLL14}.

  On the other hand, even for the same tuple $(a,b,c)$,
  signatures in $\+F_{a,b,c}$ may require different $M$ to be $H_{\eps}$-stable.
  It is not clear how to obtain an algorithm in such cases.
\end{remark}

We deduce \cref{thm:main-cubic} from \cref{thm:main} by noting that all ternary signatures
satisfy generalised second-order recurrence relations. Therefore, we only need to deal with the case
where $f=[0,a,b,0]$ for some $a,b > 0$.

We design an FPRAS for $\Holant\tp{f}$ using the machinery called ``winding'' developed
in~\cite{McQ13,HLZ16}. We sketch the construction here without getting into too much technical
details, which is out of the scope of the current paper.  We break every edge into two half edges,
and then simulate a Markov chain whose state space consists of all consistent edge assignments and
assignments with at most two inconsistencies.  It has been shown by McQuillan~\cite{McQ13} that the
Markov chain mixes rapidly as long as the signature $f$ is \emph{windable}.  It is then straightforward to
use the algebraic characterization of windable functions in~\cite{HLZ16} to verify that every
function of the form $[0,a,b,0]$ with non-negative $a,b$ is windable.  At last, it is trivial to
check that, using the notations in \cite{McQ13}, the signature $[0,a,b,0]$ is \emph{strictly
  terraced} when both $a,b>0$.  This fact implies that the ratio between the total weight of nearly
consistent assignments and that of consistent assignments can be bounded by a polynomial in the size
of the instance.  Therefore, we obtain an efficient Gibbs sampler for $\Holant\tp{f}$, which can be
turned into an FPRAS to compute the partition function via self-reduction~\cite{JVV86}.

The remaining open case in \Cref{thm:main} is when $f\in\+A_b$ of arity $d\ge 3$.
Numerical evidences suggest that these signatures are windable, via the criteria in~\cite{HLZ16}.
We conjecture that this is indeed the case, which would imply FPRAS for computing the partition functions of type I signatures,
since this class is ``strictly terraced'' in the language of \cite{McQ13}.

\section*{Acknowledgements}

Part of the work was done while HG, CL, and CZ were visiting the Institute of Theoretical Computer Science, Shanghai University of Finance and Economics,
and we would like to thank their hospitality.
We thank Leslie Ann Goldberg and Mark Jerrum for pointing out a bug in \Cref{lem:delta1} in an earlier version.

\bibliographystyle{alpha} \bibliography{roots}

\end{document}